\newtheorem{thm}{Theorem}
\newtheorem{lemma}{Lemma} 
\newtheorem{proposition}{Proposition}
\theoremstyle{definition}
\newtheorem{remark}{Remark}
\newcommand{\setappendix}{Appendix~\thesection:~~}
\newcommand{\setsection}{\thesection~~}
\titleformat{\section}{\bfseries\LARGE}{%
	\ifnum\pdfstrcmp{\@currenvir}{appendices}=0
	\setappendix
	\else
	\setsection
\fi}{0em}{}
\newcommand*{\QEDA}{\hfill\ensuremath{\blacksquare}}
\def \({\left(}
\def \){\right)}
\def \[{\left[}
\def \]{\right]}
\def \nn{\nonumber \\}
\newcommand{\tbf}[1]{{\mathbf{#1}}}
\newcommand{\defeq}{\vcentcolon=}
\newcommand{\bY}{{\mathbf {Y}}}
\newcommand{\bW}{{\mathbf {W}}}
\newcommand{\bZ}{{\mathbf {Z}}}
\newcommand{\bw}{{\mathbf {w}}}
\newcommand{\bX}{{\mathbf {X}}}
\newcommand{\bx}{{\mathbf {x}}}
\newcommand{\by}{{\mathbf {y}}}
\newcommand{\bz}{{\mathbf {z}}}
\newcommand{\bs}{{\mathbf {s}}}
\newcommand{\bS}{{\mathbf {S}}}
\newcommand{\be}{\begin{equation}}
\newcommand{\ee}{\end{equation}}
\newcommand{\bea}{\begin{align}}
\newcommand{\eea}{\end{align}}
\DeclareMathAlphabet{\varmathbb}{U}{bbold}{m}{n}
\newcommand{\EE}{\mathbb{E}}
\begin{document}
\title{The adaptive interpolation method: A simple scheme \\ to prove replica formulas in Bayesian inference}
\author{Jean Barbier$^{\dagger,\star}$ and Nicolas Macris$^{\dagger}$}
\date{}
\maketitle
{\let\thefootnote\relax\footnote{
\!\!\!\!\!\!\!\!\!\!$\dagger$ Laboratoire de Th\'eorie des Communications, Facult\'e Informatique et Communications, Ecole Polytechnique F\'ed\'erale de Lausanne, CH-1015, Suisse.\\
$\star$ International Center for Theoretical Physics, Strada Costiera, 11
I, 34151 Trieste, Italy.
}}
\setcounter{footnote}{0}
\begin{abstract}
In recent years important progress has been achieved towards proving the validity of the replica predictions 
for the (asymptotic) mutual information (or ``free energy'') in Bayesian inference problems. The proof techniques that have emerged appear to be quite general, 
despite they have been worked out on a case-by-case basis. Unfortunately, a common point between all these schemes is their relatively high level of technicality.
We present a new proof scheme that is quite straightforward with respect to the previous ones. 
We call it the {\it adaptive interpolation method} because it can be seen as an extension of the interpolation method developped by Guerra and Toninelli 
in the context of spin glasses, with an interpolation path that is adaptive. In order to 
illustrate our method we show how to prove the replica formula for three non-trivial inference problems. The first one is 
symmetric rank-one matrix estimation (or factorisation), which is the simplest problem considered here and the one for which the method is 
presented in full details. Then we generalize to symmetric tensor estimation and random linear estimation. We believe that the present method has a much wider range of applicability and also sheds new insights on the reasons for 
the validity of replica formulas in Bayesian inference.
\end{abstract}

{
	\singlespacing
	\hypersetup{linkcolor=black}
	\tableofcontents
}

\section{Introduction}

A very interesting development in probability theory in recent years has been the progress 
on a coherent mathematical theory \cite{Talagrand2003spin,Talagrand2011spina,Talagrand2011spinb,Panchenko2013} of 
the predictions of the replica and cavity methods \cite{mezard1990spin} in statistical physics of 
spin glasses. In this respect one of the most important tools is the invention of the {\it interpolation method}
by Guerra and Toninelli \cite{Guerra-2003,Guerra-Toninelli-2002} which eventually led Talagrand to a remarkable proof \cite{Talagrand-annals-2006}
of the Parisi formula \cite{Parisi-1980} for the free energy of the Sherrington-Kirkpatrick model \cite{Sherrington-Kirkpatrick-1975}. 

In more recent years 
the interpolation method has been fruitfully extended and adapted to problems of interest in a wide range of applications such as in coding theory, communications, signal processing and theoretical computer science, well beyond the realm 
of traditional statistical mechanics. Among these we highlight applications of the interpolation method 
to error correcting codes \cite{Montanari-codes,GKSmacris2007,Macris-bounds-codes,Kudekar-Macris-2009}, random linear 
estimation and compressive sensing \cite{MacrisKoradaAllerton2007,KoradaMacris_CDMA,barbier_allerton_RLE,barbier_ieee_replicaCS,Barbier_IMMSE_subExt}, low-rank matrix and tensor factorization \cite{korada2009exact,krzakala2016mutual,XXT}
and constraint satisfaction problems \cite{Franz-Leone-2003,Franz-Leone-Toninelli,Panchenko-Talagrand-2004,Macris-Hassani-Urbanke}. Most of these problems
are {\it inference problems} and when 
a Bayesian framework is adopted, they can be solved with a replica {\it symmetric} scheme (constraint satisfaction is not, as such at least, an inference problem and 
does not fall in this category). The replica symmetric formulas for the free energies, mutual informations and error performance measures typically predict interesting {\it first order} phase transitions, with associated 
``metastable states with infinite lifetime'', which pose interesting algorithmic challenges of great importance in practical applications as well as challenges from the analysis point of view. It has turned out that one can learn a great deal about the 
fundamental limitations for important classes of (message-passing) algorithms by studying these replica solutions 
(we refer to \cite{MezardMontanari09} for a general reference and come back to this point in the conclusion). 

In spite of their complexity, for all the inference problems cited above, complete proofs of the replica symmetric formulas have been found. These proofs usually combine Guerra-Toninelli interpolation bounds 
with some other non-trivial idea or method, namely {\it algorithmic} approaches involving so-called {\it spatially coupled} models
\cite{Giurgiu_SCproof,XXT,barbier_allerton_RLE,barbier_ieee_replicaCS}, 
{\it information theoretic} methods \cite{private,ReevesP16} or rigorous versions of the {\it cavity method} \cite{2016arXiv161103888L,2017arXiv170108010L,2017arXiv170200473M,coja-2016} 
using the Aizenman-Sims-Starr principle 
\cite{aizenman2003extended}. 
While each of these methods has its own merit and sheds interesting light,
they all lead to quite long and technically involved proofs. Besides, although each method can probably be taylored for each problem, it would clearly 
be more satisfactory to have a more or less unified approach.

In this paper we develop a new unified and self-contained interpolation method. We illustrate how it works for three different problems, 
namely rank-one symmetric matrix and tensor factorization, as well as random linear estimation and compressive sensing. Our method allows to prove at the same time matching lower {\it and} upper bounds on the free energy with much less effort than all known current proofs. All these problems 
are ``spin systems'' defined for ``dense graphs'' (complete graphs or hypergraphs). The ideas of this paper can also be adapted to error correcting codes that are 
akin to spin systems on ``sparse'' random graphs and we plan to come back to this aspect elsewhere\footnote{Since the first version of this manuscript, the method has been successfully applied to many other problems including non-symmetric matrix and tensor factorization \cite{2017arXiv170910368B}, generalized linear models and learning \cite{barbier2017phase}, models of deep neural networks \cite{2018arXiv180509785G,committee_nips}, random linear estimation with structured matrices \cite{toappear} and even problems defined by sparse graphical models such as the censored block model \cite{toappear_sparse}.}. 

Roughly speaking,
our new scheme interpolates between the original problem and the mean-field replica solution in small steps, each step involving its own set 
of trials parameters and Gaussian mean-fields in the spirit of Guerra and Toninelli (this idea of interpolating in small steps originated 
in the {\it sub-extensive interpolation method} developed by the authors in \cite{barbier_ieee_replicaCS,Barbier_IMMSE_subExt}). We are then able to choose the set of trial parameters
in various ways so that we get both upper and lower bounds that eventually match. One can interpret the set of trial parameters as a suitable ``interpolation path'' that we ``adapt''
to obtain suitable bounds, and thus we call this method the {\it adaptive interpolation method}.\footnote{In the present formulation one can also interpret the succession 
of Gaussian mean-fields in each step as a Wiener process. For this reason we initially called this new 
approach ``the stochastic interpolation method''. The interpretation in terms of a Wiener process is in fact not really needed, and here we choose a more
pedestrian path, but we believe this is an aspect of the method that may be of further interest (specially for diluted systems) 
and briefly discuss it in Appendix \ref{interpretation}.}

An important aspect of our method is the need for concentration properties of the suitable ``overlap''. 
It was already proven long ago in \cite{Pastur-Shcherbina-1991,Pastur-Shcherbina-Tirozzi-1994} that a concentration hypothesis for overlaps
implies that the replica symmetric solution is exact (an implication that was known to physicists). 
However for typical spin glass systems 
(e.g. the Sherrington-Kirkpatrick or $p$-spin spin glass) this hypothesis can only hold in some high temperature 
phase, and it is also difficult to prove. We refer to \cite{Shcherbina-1997,Pastur-Shcherbina-Tirozzi-1994} and \cite{Talagrand2003spin} for pioneering works on such proofs with the help of cavity-like methods. 
In the framework of 
Bayesian inference the situation is more favourable. The Bayes rule immediately implies a special set of identities obeyed by suitable ``correlation functions''
often known as Nishimori identities \cite{NishimoriBook01,iba1999nishimori}.
These identities then allow to deduce the concentration of overlaps from the concentration of the free energy 
in the {\it whole} phase diagram. This is also the reason why Bayesian inference problems generally
lead to replica symmetric solutions. 

The paper is organized as follows. Section \ref{stochInt} gives a pedagogic introduction to the adaptive interpolation method for one 
of the simplest, yet non-trivial problems, namely
rank-one symmetric matrix factorization. 
The replica symmetric formula for the free energy or mutual information is completely proven
in a self contained and direct way (see Theorem \ref{thm1}).
As explained in the previous paragraph, for all these problems our analysis also rests on concentration 
properties of the overlap parameters in the whole phase diagram (Lemma \ref{concentration}). 
This analysis is the subject of sections \ref{proofConc}, \ref{fluctuation-identity} and \ref{concentration-free-energy}, and can be read independently from the rest of the paper.  
We then sketch the same method for symmetric tensors (see Theorem \ref{RS_symtensor}). Section \ref{RLE_section} presents the method for a more difficult problem, namely random linear estimation. 
In particular, we provide a much simpler and transparent proof than all other existing proofs \cite{barbier_allerton_RLE,barbier_ieee_replicaCS,private,ReevesP16} of the 
replica formula (see Theorem \ref{rsformulaRLE}). 

%
\section{The adaptive interpolation method: Main ideas} \label{stochInt}
Before starting let us introduce a few notations used all along this paper: Vectorial quantities will be denoted by boldface letters, random variables by capital letters and their realizations 
by small letters. Expectations with respect to ``quenched'' variables (i.e. the variables that are fixed by the realization of the problem) are denoted $\mathbb{E}$ and those  with respect to ``annealed'' variables (i.e. the dynamical variables) are denoted by Gibbs brackets $\langle - \rangle$ possibly with appropriate subscripts. This choice follows the standards of statistical mechanics.
\subsection{Symmetric rank-one matrix estimation: Setting and main result}
\label{sec:partI}
Consider the following probabilistic rank-one matrix estimation
problem: One has access to noisy observations
$\bw=[w_{ij}]_{i,j=1}^n$ of the pair-wise product of the
components of a vector
$\bs=[s_1,\dots, s_n]^\intercal \in  \mathbb{R}^n$ with
i.i.d components distributed as $S_i\sim P_0$, $i=1,\dots,
n$ (that we simply denote $\bS\iid P_0$). A standard and natural setting is the case of additive white Gaussian noise of known variance $\Delta$, 
\begin{align}
w_{ij} =  \frac{s_i s_j}{\sqrt{n}} + z_{ij}\sqrt{\Delta}  \quad \text{for} \quad 1\leq i\leq j\leq n,\label{xx}
\end{align}
where $\tbf{z} = [z_{ij}]_{i,j=1}^n$ is a symmetric matrix with i.i.d entries $Z_{ij}  \sim  \mathcal{N}(0,1)$ for $1\leq i\leq j\leq n$. This is denoted 
$\bZ \iid {\cal N}(0,1)$. The goal is to estimate the ground truth $\bs$ from $\bw$ assuming that both $P_0$ and $\Delta$ are known and independent of $n$ (the noise is symmetric so that $w_{ij} = w_{ji}$). 

We consider a Bayesian setting and associate to the model \eqref{xx} its posterior distribution. The likelihood of the (component-wise independent) 
observation matrix $\bw$ given $\bs$ is 
\begin{align}
P(\bw|\bs) = \frac{\exp\Big\{-\frac{1}{2\Delta}\sum_{i\le j}\big(w_{ij} - \frac{s_is_j}{\sqrt{n}}\big)^2\Big\}}{(2\pi \Delta)^{\frac{n(n+1)}{2}}}.
\end{align} 
From the Bayes formula we then get the posterior distribution\footnote{We abusively use the notation $dxP_0(x)$ even though $P_0$ is not necessarily 
absolutely continuous.} for $\bx=[x_1, \ldots, x_n]^\intercal\in \mathbb{R}^n$ given the observations (it is convenient to explicitely distinguish between the ground truth signal vector $\bs$ and 
its estimate $\bx$ sampled from the posterior)
\begin{align}
P(\bx|\bw) = 
\frac{\prod_{i=1}^n P_0(x_i) P(\bw|\bx)}{\int \big\{\prod_{i=1}^n dx_i P_0(x_i)\big\}P(\bw|\bx)}.
\label{post_}
\end{align}
Replacing the observation $\bw$ by its explicit expression \eqref{xx} as a function of the signal and the noise we obtain
\begin{align}
P\Big(\bx\Big|\bw =  \frac{\bs\bs^{\intercal}}{\sqrt{n}} + \bz\sqrt{\Delta}\Big)=
\frac{\prod_{i=1}^n P_0(x_i) e^{-{\cal H}(\bx;\bs,\bz)}}{\int \big\{\prod_{i=1}^n dx_i P_0(x_i)\big\} e^{-{\cal H}(\bx;\bs,\bz)}} 
\label{post_}
\end{align}
where we call 
\begin{align}
{\cal H}(\bx;\bs,\bz) \defeq \frac{1}{\Delta}\sum_{i\le j=1}^n\Big(\frac{x_i^2x_j^2}{2n} - \frac{x_ix_js_is_j}{n} -\frac{x_ix_jz_{ij} \sqrt{\Delta}}{\sqrt{n}}\Big)
\end{align}
the {\it Hamiltonian} of the model.
In order to obtain the last form of the posterior distribution we replaced $w_{ij}$ using \eqref{xx}, 
developed the square in $P(\bw|\bx)$, and simplified the $\bx$-independent terms in the numerator and denominator.
The 
normalization factor is by definition the {\it partition function} 
\begin{align}
 {\cal Z}(\bs,\bz) \defeq \int \big\{\prod_{i=1}^n dx_i P_0(x_i)\big\} e^{-{\cal H}(\bx;\bs,\bz)}.
\end{align}
Our principal quantity of interest is the {\it average free energy} per component\footnote{For all other models considered in this paper we directly write the explicit expression of the free energy, but the 
derivation is always similar.} 
defined by 
\begin{align}\label{foriginal}
f_n\defeq-\frac1n\EE_{\bS,\bZ}[\ln {\cal Z}(\bS,\bZ)]
\end{align}
%
where $\bS \iid P_0$ and $\bZ \iid {\cal N}(0,1)$. 

Define the {\em replica symmetric (RS) potential} $f_{\rm RS}(m; \Delta)$ as
\begin{align}\label{eq:potentialfunction}
&f_{\rm RS}(m;\Delta) \defeq \frac{m^2}{4\Delta} + f_{\rm den}\big(\Sigma(m;\Delta)\big),
\end{align}
with 
\begin{align}
 \Sigma(m;\Delta) \defeq \sqrt{\frac{\Delta}{m}}.
\end{align}
Here $f_{\rm den}(\Sigma)$ is the free energy associated with a {\it scalar} Gaussian denoising model: $y = s  +  \widetilde z \, \Sigma$ where $S \sim P_0$, $\widetilde Z  \sim {\cal N}(0,1)$. The free energy $f_{\rm den}(\Sigma)$ is minus the average logarithm of the normalization of the posterior distribution $P(x|s  +  \widetilde z\,\Sigma)\propto \exp(-\Sigma^{-2}(x^2/2 -xs -x\widetilde z\,\Sigma))P_0(x)$:
\begin{align}
f_{\rm den}(\Sigma)\defeq-\mathbb{E}_{S,\widetilde Z}\Big[\ln\int dxP_0(x) e^{-\frac{1}{\Sigma^2}\big(\frac{x^2}{2} - xS- x\widetilde Z\,\Sigma \big)}\Big]. \label{MFf}
\end{align}

Our first theorem illustrating the adaptive interpolation method is 
\begin{thm}[RS formula for symmetric rank-one matrix estimation]\label{thm1}
Fix $\Delta > 0$. For any $P_0$ with bounded support, the asymptotic free energy of the symmetric rank-one matrix estimation model \eqref{xx} verifies
\begin{align}\label{rsresult1}
\lim_{n\to \infty} f_n = \min_{m\ge 0}f_{\rm RS}(m;\Delta).	
\end{align}
\end{thm}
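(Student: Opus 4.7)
The plan is to implement a Guerra--Toninelli-style interpolation but with an \emph{adaptive} interpolation path, taking full advantage of the Bayes-optimality of the model (Nishimori identities). First I would introduce a measurable function $q:[0,1]\to\mathbb{R}_+$ (bounded by the support of $P_0$, to be chosen later) and construct an interpolating model in which, at time $t\in[0,1]$, the observer sees \emph{both} attenuated matrix observations $y_{ij}^{(t)}=\sqrt{(1-t)/n}\,s_is_j+\sqrt{\Delta}\,z_{ij}$ \emph{and} a decoupled scalar side channel $\tilde y_i^{(t)}=\sqrt{R(t)}\,s_i+\tilde z_i$ with $R(t)\defeq\Delta^{-1}\int_0^t q(s)\,ds$. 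At $t=0$ one recovers the original problem \eqref{xx}; at $t=1$ the matrix part has vanished and one is left with $n$ independent copies of the scalar Gaussian denoising model that defines $f_{\rm den}$. Because only genuine Gaussian channels are added, the model remains Bayes-optimal along the whole path, so the Nishimori identities hold at every $t$ --- this is the feature that makes the method work here but not in generic spin glasses.

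Second, I would derive a fundamental sum rule by writing $f_n=f_n(1)-\int_0^1 f_n'(t)\,dt$, where $f_n(t)$ is the free energy of the interpolated model. The boundary value factorises over the $n$ scalar problems and equals $f_{\rm den}\!\bigl(\sqrt{\Delta/\int_0^1 q(s)\,ds}\bigr)$ up to $O(1/n)$ corrections. Differentiating and applying Gaussian integration by parts to \emph{both} noises $Z_{ij}$ and $\tilde Z_i$, followed by Nishimori identities to collapse replica overlaps, yields after routine algebra
\begin{equation*}
f_n'(t) \;=\; \frac{q(t)^2}{4\Delta} \;-\; \frac{1}{4\Delta}\,\mathbb{E}\bigl\langle (Q-q(t))^2\bigr\rangle_t \;+\; O(1/n),
\end{equation*}
where $Q\defeq \frac{1}{n}\sum_i X_is_i$ is the scalar overlap between a posterior sample and the signal. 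Integration in $t$ gives the sum rule
\begin{equation*}
f_n \;=\; \frac{1}{4\Delta}\int_0^1 q(t)^2\,dt \;+\; f_{\rm den}\!\biggl(\sqrt{\tfrac{\Delta}{\int_0^1 q(s)\,ds}}\biggr) \;-\; \frac{1}{4\Delta}\int_0^1 \mathbb{E}\bigl\langle (Q-q(t))^2\bigr\rangle_t\,dt \;+\; o(1).
\end{equation*}

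For the upper bound I would take the constant path $q(t)\equiv m$ for arbitrary $m\ge 0$. The fluctuation integral is non-negative and can simply be dropped, giving $f_n\le m^2/(4\Delta)+f_{\rm den}(\sqrt{\Delta/m})+o(1)=f_{\rm RS}(m;\Delta)+o(1)$; optimising over $m$ delivers $\limsup_n f_n\le\min_{m\ge 0} f_{\rm RS}(m;\Delta)$.

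The matching lower bound is where the method is genuinely new, and the step I expect to be the hardest. The idea is to choose $q$ adaptively to annihilate the fluctuation integral. Decompose
\begin{equation*}
\mathbb{E}\bigl\langle (Q-q(t))^2\bigr\rangle_t \;=\; \mathbb{E}\bigl\langle (Q-\mathbb{E}\langle Q\rangle_t)^2\bigr\rangle_t \;+\; \bigl(\mathbb{E}\langle Q\rangle_t - q(t)\bigr)^2.
\end{equation*}
The first term is the overlap variance, which the concentration Lemma (the genuinely difficult ingredient of the paper, resting on Nishimori identities together with concentration of the free energy) bounds by $o(1)$ uniformly in $t$. The second term is killed by choosing $q$ to solve the fixed-point equation $q(t)=\mathbb{E}\langle Q\rangle_t$; existence of a measurable solution follows from a standard ODE argument with bounded measurable right-hand side, exploiting the bounded support of $P_0$. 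With this adaptive choice the sum rule gives $f_n\ge\frac{1}{4\Delta}\int_0^1 q(t)^2\,dt + f_{\rm den}\!\bigl(\sqrt{\Delta/\int_0^1 q(s)\,ds}\bigr) - o(1)$, and Jensen's inequality $\int_0^1 q(t)^2\,dt\ge\bar q^2$ (with $\bar q\defeq\int_0^1 q(s)\,ds$) upgrades this to $f_n\ge f_{\rm RS}(\bar q;\Delta)-o(1)\ge\min_{m\ge 0}f_{\rm RS}(m;\Delta)-o(1)$. Matching the two bounds closes the proof; the only non-trivial analytic input is the uniform overlap concentration, while everything else is Gaussian integration by parts, Nishimori symmetry, and Jensen's inequality.
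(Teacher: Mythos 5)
Your plan captures the essential ideas of the adaptive interpolation method: an interpolating Hamiltonian that trades matrix interactions for scalar Gaussian side channels with a tunable signal-to-noise profile $q(\cdot)$, a fundamental sum rule whose remainder is a squared overlap deviation, Jensen's inequality to compare $\int q^2$ with $(\int q)^2$, and the Nishimori identities of the Bayes-optimal model as the structural reason the scheme closes. The paper implements the scheme in $K$ discrete steps with trial parameters $m_1,\ldots,m_K$ fixed inductively, whereas you go straight for a continuous-time trial function $q(\cdot)$ and an ODE fixed-point argument; this is the continuum reformulation that the paper only sketches informally in an appendix (and declares ``not completely rigorous''). The discrete version has the advantage that the freedom to set $m_k=\mathbb{E}\langle Q\rangle_{k,0}$ is an elementary induction (the $k$-th Gibbs bracket depends only on $m_1,\ldots,m_{k-1}$, which are already fixed), which sidesteps regularity and well-posedness questions about the time-dependent solution $q(\cdot)$. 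A small slip: your displayed formula for $f_n'(t)$ has the wrong overall sign relative to the sum rule you then write; only the latter is correct.

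The genuine gap is in the lower bound, where you assert that the overlap concentration lemma bounds $\mathbb{E}\langle(Q-\mathbb{E}\langle Q\rangle_t)^2\rangle_t$ by $o(1)$ ``uniformly in $t$''. This is not what the concentration result can deliver, and the reason is structural, not merely technical. The concentration of the Gibbs-averaged overlap on its quenched mean is obtained by combining concentration of the free energy with concavity of $f$ in the strength $\epsilon$ of an additional tiny Gaussian side channel $\by=\bs\sqrt\epsilon+\widehat\bz$; the concavity argument fails on the (measure-zero) set of $(\epsilon,t)$ sitting at first-order transition points, where the derivative of the free energy jumps. To make the argument rigorous one must therefore (i) add this $\epsilon$-perturbation to the interpolating Hamiltonian, (ii) integrate the fundamental sum rule over a window $\epsilon\in[a_n,b_n]$ so that the exceptional set is averaged out and the concentration bound takes the integrated form $\int_{a_n}^{b_n} d\epsilon\, (\cdots)\leq C a_n^{-2} n^{-\alpha}$, (iii) choose the trial parameters as $\epsilon$-dependent non-decreasing differentiable functions so that the change-of-variables in the concavity argument is legitimate, and (iv) only at the very end send $a_n\to 0$, using the Lipschitz dependence of the free energy on $\epsilon$ to recover the unperturbed $f_n$. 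Your proposal omits the perturbation entirely, so the concentration input you invoke is simply not available in the form you state it. Once the $\epsilon$-side channel and the $\epsilon$-integration are inserted, the rest of your argument (the adaptive fixed-point choice of $q$, Jensen, and matching with the Guerra-type upper bound) is sound and matches the paper's proof.
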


\begin{proof}
 The theorem follows from Proposition \ref{UpperBound} in section \ref{secUpper} and Proposition \ref{LowerBound} in section \ref{seclower}.
\end{proof}

The bounded support property hypothesis for $P_0$ is not really a requisite of the adaptive interpolation method, but simply makes the necessary concentration proofs for the free energy simpler. There is no condition on the size 
of the support, and it is presumably possible to take a support equal to the whole real line by a limiting process applied to \eqref{rsresult1}, as long 
as the first four moments of $P_0$ are finite.

Formulas such as \eqref{rsresult1}, where a complicated statistical model is related to a scalar (and thus analyzable) statistical model are at the root of the mean-field theory in statistical mechanics. A possible intuition behind this formula (and all formulas of the same type in this article) is as follows: The estimation problem \eqref{xx} is effectively ``replaced'' by a decoupled estimation model $\by = \bs  +  \widetilde{\bz} \, \Sigma(m;\Delta)$ where the noise variance is perfectly tuned through the minimization problem \eqref{rsresult1} in order to faithfully ``summarize'' the complex interactions among variables in the original model; $\Sigma(m;\Delta)$ thus plays the role of a ``mean-field''. See e.g. \cite{mezard1990spin,mezard2009information} for more details on the mean-field theory and its applications.

This theorem has already been obtained recently in \cite{XXT,2016arXiv161103888L} (with varying hypothesis on $P_0$) 
by the more elaborate methods mentionned in the introduction.
In the next paragraphs we introduce the adaptive interpolation method through a pedagogical and new proof of this theorem. 

\begin{remark}[Free energy, mutual information and algorithms]
In Bayesian inference the average free energy is related to the 
{\it mutual information} $I(\bS;\bW)$ between the observation and the unknown vector 
(which is formally expressed as a difference of Shannon entropies: $I(\bS;\bW) =  H(\bW) - H(\bW |\bS)$). 
For model \eqref{xx}, a straightforward computation shows that when $P_0$ has bounded first four moments
\begin{align}
\frac{I(\bS;\bW)}{n} = f_n + \frac{\mathbb{E}[S^2]^2}{4\Delta} + {\cal O}(n^{-1}), \label{fMI}
\end{align}
where $S \sim P_0$. The $n \to \infty$ limit of the mutual 
information (or equivalently of the average free energy) is an interesting object to compute because it allows to locate 
the {\it phase transition(s)} occuring in the inference problem, which corresponds to its non-analyticity point(s) as a function of $\Delta$. 
This phase transition threshold usually separates a 
low-noise regime where inference is information 
theoretically possible from a high-noise regime where inference is impossible.
In this high-noise regime the observation simply does not carry enough information for reconstructing the signal. Furthermore, remarkably, the 
replica formula for the mutual information (or average free energy) also allows to determine an algorithmic noise threshold, below the phase transition threshold,
which separates the information theoretic possible phase in two regions: An ``easy'' phase where there exist low complexity message-passing algorithms for optimal inference and a ``hard'' phase where message-passing algorithms yield suboptimal inference. For further information and rigorous results 
on these issues for model \eqref{xx} we refer to \cite{XXT}. A few more pointers to the literature are given in the conclusion. 
\end{remark}
\begin{remark}[Channel universality]\label{rmk:univers}
The Gaussian noise setting \eqref{xx} is actually sufficient to completely characterize the generic model where the entries of $\tbf{w}$ are observed through a noisy element-wise
(possibly non-linear) output probabilistic channel $P_{\rm out}(w_{ij}|s_is_j/\sqrt{n})$. This is made possible by a theorem of 
channel universality \cite{krzakala2016mutual} (conjectured in \cite{lesieur2015mmse} and already proven for community detection in
\cite{deshpande2015asymptotic}). Roughly speaking this theorem states that given an output channel $P_{\rm out}(w|y)$, such that
at $y=0$ the function $y\mapsto \ln P_{\rm out}(w|y)$ is three times differentiable, 
with bounded second and
third derivatives, then the mutual information satisfies
\begin{align}
I(\tbf{S};\tbf{W} ) = I (\tbf{S}; \tbf{S}\tbf{S}^{\intercal}/\sqrt{n} +\tbf{Z} \sqrt{\Delta}) +\mathcal{O}(\sqrt{n} ),
\end{align}
where $\Delta$ is the inverse Fisher information (at $y = 0$) of the output channel: 
\begin{align*}
\Delta^{-1} \defeq \int dw P_{\rm out}(w|0)(\partial_y \ln P_{\rm out}(w|y)\vert_{y=0})^2.	
\end{align*}
Informally, this means that we only have to compute the mutual information for a Gaussian channel to take care of a wide range of problems, which can be expressed in terms of their Fisher information.
\end{remark}

\subsection{The $(k,t)$--interpolating model}\label{seckinterp}
\begin{figure}[!t]
\centering
\includegraphics[width=1\textwidth]{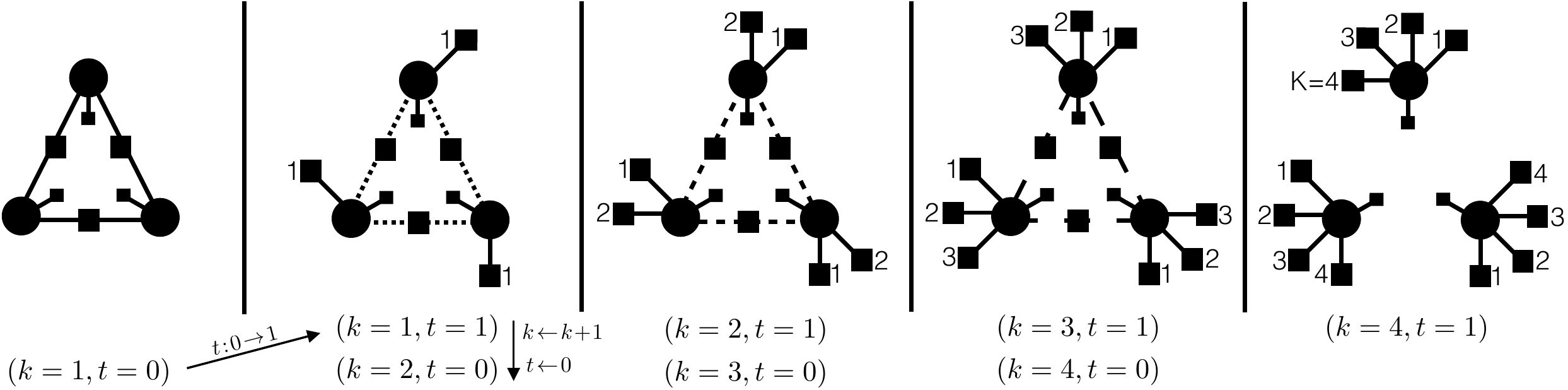}
\vspace*{-18pt}
\caption{Factor graph representation of 
the $(k,t)$--interpolating model $P_{k,t;\epsilon=0}(\bx|\boldsymbol{\theta})$ for $k = 1,\ldots,K = 4$. The adaptive interpolation is parametrized by two distinct ``time'' parameters: A discrete one $k = 1,\ldots,K$ that controls the interpolation at a {\it global} level. 
Then for a fixed $k$ we define a continuous $t \in [0,1]$ that controls the interpolation at a {\it local} level. The adaptive interpolation iteratively ``constructs'' the mean-field (decoupled) model, corresponding to $(k = K,t = 1)$, by starting from the original model $(k = 1,t = 0)$ and then incrementally reducing the interaction strength of the edges associated with the original model, while compensating by adding mean-field decoupled factors to the graph (the small factors acting independently on each nodes that represent the factorized prior $P_0$ remain unchanged). This works as follows. At a fixed discrete step $k$, letting $t$ increase from $0$ to $1$ continuously decreases  the strength of all the interactions of the original model by a factor $K^{-1}$, 
while continuously adding the $k$-th Gaussian mean-field factors (one equivalent factor per node). This corresponds to the {\it local} interpolation. Then $k$ is increased by one unit, $t$ is set to $0$ and the local interpolation process is then repeated. At the end of the adaptive interpolation, the set of all interactions of the original model have been replaced by $K$ Gaussian mean-fields per node. An important matching property is that the $(k,t = 1)$ and $(k + 1, t = 0)$ models are statistically equivalent.}\label{fig:Kinterp}
\end{figure}
Let $\bz^{(k)} = [z_{ij}^{(k)}]_{i,j=1}^n, 
\widetilde{\bz}^{(k)} = [\widetilde{z_{i}}^{(k)}]_{i=1}^n$, $Z_{ij}^{(k)} = Z_{ji}^{(k)} \sim   {\cal N}(0,1)$, 
$\widetilde Z_{i}^{(k)}  \sim   {\cal N}(0,1)$ for $k = 1,\ldots,K$ be Gaussian noise symmetric matrices and vectors. 
It is important to keep in mind that these are indexed both by the 
vertex indices $i, j$ and the discrete global interpolation parameter $k$.  

The {\it $(k,t)$--interpolating Hamiltonian} is
\begin{align}\label{intH}
{\cal H}_{k,t}(\bx)\defeq &\sum_{k'=k+1}^K h\Big(\bx, \bs,\bz^{(k')},K\Delta\Big)+\sum_{k'=1}^{k-1} h_{\rm mf}\Big(\bx, \bs,\widetilde{\bz}^{(k')},\frac{K\Delta}{m_{k'}}\Big)\nn
&\qquad + h\Big(\bx, \bs,\bz^{(k)},\frac{K\Delta}{1-t}\Big)+h_{\rm mf}\Big(\bx, \bs,\widetilde{\bz}^{(k)},\frac{K\Delta}{t\,m_{k}}\Big),
\end{align}
where the trial parameters $\{m_{k}\}_{k=1}^K$ are to be fixed later
(these will be chosen  ${\cal O}(1)$ with respect to (w.r.t) $n$ and can be interpreted as signal-to-noise ratios), $t \in [0,1]$
the continuous local interpolation parameter, and
%
\begin{align}
h(\bx, \bs,\bz,\sigma^2) &\defeq\frac{1}{\sigma^2}\sum_{i\le j=1}^n \Big(\frac{x_i^2x_j^2}{2n} - \frac{x_ix_js_is_j}{n}- \frac{\sigma x_ix_jz_{ij}}{\sqrt{n}}\Big), \label{h}\\
h_{\rm mf}(\bx, \bs,\widetilde{\bz},\sigma^2) &\defeq\frac{1}{\sigma^2}\sum_{i=1}^n \Big(\frac{x_i^2}{2} - x_is_i- \sigma x_i\widetilde z_{i}\Big).
\label{hmf}
\end{align}
Here the subscript ``mf'' stands for ``mean-field''. 

A possible interpretation of the scheme is the following. The $(k,t)$--interpolating model corresponds to the following 
inference model. One has access to the following sets of noisy observations about the signal $\bs$ where each noise realization is independent:
\begin{align}
\Big\{&\bw^{(k')}=\frac{\bs \bs^\intercal}{\sqrt{n}} + \bz^{(k')}\sqrt{K\Delta}\Big\}_{k'=k+1}^K, \label{10_}\\
\Big\{&\by^{(k')}=\bs + \widetilde\bz^{(k')} \sqrt{\frac{K\Delta}{m_{k'}}}\Big\}_{k'=1}^{k-1}, \label{11_}\\
&\bw^{(k)}=\frac{\bs \bs^\intercal}{\sqrt{n}} + \bz^{(k)}\sqrt{\frac{K\Delta}{1-t}},\label{12_}\\
&\by^{(k)}=\bs + \widetilde\bz^{(k)} \sqrt{\frac{K\Delta}{t\,m_{k}}}. \label{13_}
\end{align}
The first and third sets of observations correspond to similar inference channels as the original model \eqref{xx} but with 
a much higher noise variance proportional to $K$. These correspond to the 
first and third terms, respectively, of the $(k,t)$--interpolating Hamiltonian \eqref{intH}. The second and fourth sets 
instead correspond to decoupled Gaussian denoising models, with associated ``mean-field'' second and fourth terms in \eqref{intH}. The noise 
variances are proportional to $K$ because the total number of observations is $K$ and we 
want the total signal-to-noise ratio to be ${\cal O}(1)$. At fixed $k$, letting $t$ increase
from $0$ to $1$ increases by one unit the number of decoupled observations \eqref{11_} by continuously 
adding the observation \eqref{13_}: Its signal-to-noise ratio that 
vanishes at $t = 0$ (which is equivalent to not having access to this observation) becomes 
finite and equal to the signal-to-noise ratio of the individual observations in the set \eqref{11_} at $t = 1$. 
Simultaneously it reduces by one the number of observations of the form \eqref{10_} by ``removing'' the 
observation \eqref{12_}: its signal-to-noise ratio, which is 
finite at $t = 0$, vanishes at $t = 1$. From \eqref{10_}--\eqref{13_} it is clear 
that the $(k,t = 1)$ and $(k + 1,t = 0)$--interpolating models are 
statistically equivalent. A complementary and more graphical illustration of the 
interpolation scheme is found on Figure~\ref{fig:Kinterp}.

In order to use an important concentration lemma later on, we will need a slightly more general Hamiltonian, and consider the following perturbed version of \eqref{intH}:
\begin{align}
{\cal H}_{k,t;\epsilon}(\bx;\boldsymbol{\theta}) \defeq {\cal H}_{k,t}(\bx) 
+ \epsilon\sum_{i=1}^n\Big(\frac{x_i^2}{2} -x_is_i -\frac{x_i\widehat z_i}{\sqrt{\epsilon}}\Big),  
\label{perturb}
\end{align}
with i.i.d $\widehat Z_i \sim {\cal N}(0,1)$ and $\boldsymbol{\Theta} \defeq \{\bS, \{\bZ^{(k)}, \widetilde{\bZ}^{(k)}\}_{k=1}^K, \widehat \bZ\}$ is the collection of all {\it quenched} random variables. It should be kept in mind that the signal-to-noise ratio $\epsilon$ of this additional Gaussian ``side-channel'' $\by=\bs\sqrt{\epsilon} + \widehat \bz$ will tend to $0$ at the end of the proof. Therefore we always consider
$\epsilon \in  [0,1]$.

The $(k,t)$--interpolating model has an associated posterior distribution, {\it Gibbs expectation} $\langle-\rangle_{k,t;\epsilon}$ and {\it $(k,t)$--interpolating free energy} $f_{k,t;\epsilon}$:
\begin{align}
P_{k,t;\epsilon}(\bx|\boldsymbol{\theta}) &\defeq \frac{\prod_{i=1}^n P_0(x_i) e^{-{\cal H}_{k,t;\epsilon}(\bx;\boldsymbol{\theta})}}
{\int\bigl\{\prod_{i=1}^n dx_i P_0(x_i)\bigr\}e^{-{\cal H}_{k,t;\epsilon}(\bx;\boldsymbol{\theta})}}, \label{post}\\
\langle A(\bX)\rangle_{k,t;\epsilon} &\defeq \int d\bx \, A(\bx) P_{k,t;\epsilon}(\bx|\boldsymbol{\theta}), \label{Gibbs}\\
f_{k,t;\epsilon}&\defeq -\frac{1}{n}\mathbb{E}_{\boldsymbol{\Theta}}\Big[\ln\int\big\{\prod_{i=1}^n dx_i P_0(x_i)\big\}e^{-{\cal H}_{k,t;\epsilon}(\bx;\boldsymbol{\Theta})}\Big]. \label{intf}
\end{align}
In the following, we simply denote $\mathbb{E}_{\boldsymbol{\Theta}}$ by $\mathbb{E}$.

\begin{lemma}[Linking the perturbed and plain free energies]\label{thermolimit}
 Let $P_0$ have finite second moment. Then for the initial and final systems
 \begin{align}
  \vert f_{1, 0; \epsilon} - f_{1, 0; 0}\vert \leq \frac{\epsilon}{2} \mathbb{E}[S^2]\,, \quad \text{and}\quad  \vert f_{K, 1; \epsilon} - f_{K, 1; 0}\vert \leq \frac{\epsilon}{2} \mathbb{E}[S^2]\,.
 \end{align}
 \end{lemma}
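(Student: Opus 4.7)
The plan is to prove that the map $\epsilon \mapsto f_{k,t;\epsilon}$ is Lipschitz on $[0,1]$ with Lipschitz constant $\tfrac12\mathbb{E}[S^2]$, uniformly in $(k,t)$, and then invoke the mean value inequality. This immediately covers both $(k,t)=(1,0)$ and $(k,t)=(K,1)$.

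First I would compute the $\epsilon$-derivative of the free energy by exchanging $\partial_\epsilon$ and $\mathbb{E}$ (which is legitimate since $P_0$ has bounded support, so the Gibbs measure has uniformly bounded moments of $X_i$). Writing the perturbation as $\sum_i(\tfrac{\epsilon}{2}x_i^2-\epsilon x_is_i-\sqrt{\epsilon}\,x_i\widehat z_i)$, one gets
\begin{align*}
\frac{d f_{k,t;\epsilon}}{d\epsilon}
= \frac{1}{n}\mathbb{E}\Bigl\langle \sum_{i=1}^n\Bigl(\tfrac12 X_i^2 - X_iS_i - \tfrac{X_i\widehat Z_i}{2\sqrt{\epsilon}}\Bigr)\Bigr\rangle_{k,t;\epsilon}.
\end{align*}
The apparently singular term $\widehat Z_i/\sqrt{\epsilon}$ is handled by Gaussian integration by parts in $\widehat Z_i$, which only enters the Gibbs weight through $+\sqrt{\epsilon}\,x_i\widehat z_i$; this yields $\mathbb{E}[\widehat Z_i\langle X_i\rangle_{k,t;\epsilon}] = \sqrt{\epsilon}\,\mathbb{E}[\langle X_i^2\rangle_{k,t;\epsilon} - \langle X_i\rangle_{k,t;\epsilon}^2]$, so the $\tfrac{1}{2}\langle X_i^2\rangle$ contribution cancels and we are left with
\begin{align*}
\frac{d f_{k,t;\epsilon}}{d\epsilon} = \frac{1}{n}\sum_{i=1}^n\mathbb{E}\Bigl[\tfrac12\langle X_i\rangle_{k,t;\epsilon}^2 - \langle X_iS_i\rangle_{k,t;\epsilon}\Bigr].
\end{align*}

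Next I would apply the Nishimori identity. Since the $(k,t;\epsilon)$-interpolating model is itself a Bayes-optimal posterior (it is literally the posterior of the observation model \eqref{10_}--\eqref{13_} augmented with the side-channel observation $y_i=s_i\sqrt{\epsilon}+\widehat z_i$), the Nishimori identity gives $\mathbb{E}[\langle X_iS_i\rangle_{k,t;\epsilon}] = \mathbb{E}[\langle X_i\rangle_{k,t;\epsilon}^2]$. Substituting yields
\begin{align*}
\frac{d f_{k,t;\epsilon}}{d\epsilon} = -\frac{1}{2n}\sum_{i=1}^n\mathbb{E}\bigl[\langle X_i\rangle_{k,t;\epsilon}^2\bigr] \le 0,
\end{align*}
so in particular $f_{k,t;\epsilon}\le f_{k,t;0}$, and the proof reduces to an upper bound on the magnitude of this derivative. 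Using Jensen's inequality $\langle X_i\rangle_{k,t;\epsilon}^2\le \langle X_i^2\rangle_{k,t;\epsilon}$ and then Nishimori again in the form $\mathbb{E}\langle X_i^2\rangle_{k,t;\epsilon} = \mathbb{E}[S_i^2] = \mathbb{E}[S^2]$, we obtain the uniform bound $|d f_{k,t;\epsilon}/d\epsilon| \le \tfrac12\mathbb{E}[S^2]$.

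Finally, integrating from $0$ to $\epsilon$ gives
\begin{align*}
|f_{k,t;\epsilon} - f_{k,t;0}| \le \int_0^\epsilon\Bigl|\tfrac{d f_{k,t;\epsilon'}}{d\epsilon'}\Bigr|\,d\epsilon' \le \frac{\epsilon}{2}\mathbb{E}[S^2],
\end{align*}
which specialises to the two asserted inequalities. The only delicate point I anticipate is justifying the differentiation under the expectation and the integration by parts at $\epsilon=0$: the integrand in $\partial_\epsilon f$ contains the factor $\widehat Z_i/\sqrt{\epsilon}$, but after the Gaussian integration by parts the result is bounded uniformly in $\epsilon\in(0,1]$, and one recovers the bound at $\epsilon=0$ by continuity (or equivalently, by working first for $\epsilon\in(\eta,1]$ and letting $\eta\downarrow 0$ in the final integral). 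The bounded support of $P_0$ ensures all the $L^2$ moments needed for dominated convergence.
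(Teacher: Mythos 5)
Your proof is correct and takes the same core route as the paper's: differentiate $f_{k,t;\epsilon}$ in $\epsilon$, integrate by parts the $\widehat Z_i$-term to cancel the $\tfrac12\langle X_i^2\rangle$ contribution, apply the Nishimori identity to land on $\tfrac{d}{d\epsilon}f_{k,t;\epsilon}=-\tfrac{1}{2n}\sum_i\mathbb{E}[\langle X_i\rangle^2]$, then bound via Jensen and Nishimori by $\tfrac12\mathbb{E}[S^2]$ and integrate.

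The one organizational difference worth noting: you prove the Lipschitz bound uniformly in $(k,t)$ and apply it to both $(1,0)$ and $(K,1)$, whereas the paper carries this computation out only for $(1,0)$ and disposes of the $(K,1)$ case by invoking the Lipschitz continuity of the decoupled scalar model's free energy with a citation. Your uniform treatment is cleaner and entirely legitimate — nothing in the derivative computation or in the Nishimori identity is special to $k=1,t=0$ — so this is a small unification rather than a gap.

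A minor hypothesis quibble: you justify the exchange of $\partial_\epsilon$ and $\mathbb{E}$, and the final dominated-convergence step, by invoking bounded support of $P_0$. The lemma assumes only finite second moment, and the paper uses bounded support only in its concentration lemmas, not here. The interchange and the limit $\eta\downarrow 0$ can in fact be carried out under finite second moment alone (all the quantities that appear after the integration by parts are controlled by $\mathbb{E}[\langle X_i^2\rangle]=\mathbb{E}[S^2]$ via Nishimori), so you should not lean on bounded support if you want the statement under the hypothesis actually given.
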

 
A short and generic proof is found in Appendix \ref{appendix-exchg-lim}. 
This statement shows in particular that if the thermodynamic limit $n\to +\infty$ exists, then it can be exchanged with the limit $\epsilon\to 0_+$ (as long as $P_0$ has bounded second moment). 
We stress that the existence of the thermodynamic limit is {\it not} directly used in our subsequent analysis, but rather, 
follows as a consequence. 

%
\subsection{The initial and final models}\label{initfinalmodels}
Let us compute the $(k,t)$--interpolating free energy $f_{1,0;0}$ associated with the initial $(k = 1,t = 0)$ model. Using \eqref{intH} and \eqref{h}, 
\begin{align}\label{initialmodel}
{\cal H}_{1,0;0}(\bx;\boldsymbol{\theta}) = &\sum_{k=1}^K h\Big(\bx, \bs,\bz^{(k)},K\Delta\Big) = \frac{1}{\Delta}\sum_{i\le j=1}^n\Big(\frac{x_i^2x_j^2}{2n} - \frac{x_ix_js_is_j}{n}- \frac{x_ix_j}{\sqrt{n}} \sqrt{\Delta} \sum_{k=1}^K\frac{z_{ij}^{(k)}}{\sqrt{K}}\Big).
\end{align}
As the $Z_{ij}^{(k)}$, $1 \le i \le j \le n$, are i.i.d ${\cal N}(0,1)$ random variables, they possess the {\it stability property}, namely $Z_{ij} \defeq \sum_{k=1}^K Z_{ij}^{(k)}/\sqrt{K}$ are i.i.d ${\cal N}(0,1)$ random variables as well (and symmetric). Let $\bz = [z_{ij}]_{i,j=1}^n$. Using this we obtain 
\begin{align}
f_{1,0;0}  = -\frac{1}{n}\mathbb{E}_{\bS,\bZ}\Big[\ln\int\big\{\prod_{i=1}^n dx_i P_0(x_i)\big\} e^{-{\cal H}_{1,0;0}(\bx;\bS,\bZ)}\Big]
\end{align} 
which is actually the free energy \eqref{foriginal} of the original model. We thus have:
\begin{align}
f_{1,0;0}=f_n. \label{f10isf}
\end{align}
Let us now consider the free energy $f_{K,1;0}$ of the final model. Using \eqref{intH} and \eqref{hmf} we get
\begin{align}
{\cal H}_{K,1;0}(\bx;\boldsymbol{\theta}) = &\sum_{k=1}^K h_{\rm mf}\Big(\bx, \bs,\widetilde{\bz}^{(k)},\frac{K\Delta}{m_{k}}\Big) = \sum_{k=1}^K\frac{m_{k}}{K\Delta}\sum_{i=1}^n\Big(\frac{x_i^2}{2} - x_is_i- x_i\widetilde{z}_i^{(k)} \sqrt{\frac{K\Delta}{m_{k}}} \Big). \label{HK1_before}
\end{align}
Define 
\begin{align}
 m_{\rm mf}^{(K)} \defeq  \frac1K\sum_{k=1}^K m_{k}.
\end{align}
Simple algebra leads to
\begin{align}
{\cal H}_{K,1;0}(\bx;\boldsymbol{\theta}) = \frac{m_{\rm mf}^{(K)}}{\Delta}\sum_{i=1}^n\Big(\frac{x_i^2}{2} - x_is_i- x_i \sqrt{\frac{\Delta}{m_{\rm mf}^{(K)}}}\sum_{k=1}^K \widetilde{z}_i^{(k)}\sqrt{\frac{m_{k}}{Km_{\rm mf}^{(K)} }}  \Big). \label{HK1_after}
\end{align}
We now proceed as previously using again the stability property of the Gaussian noise variables. Since $\widetilde Z_{i}^{(k)}$ are i.i.d ${\cal N}(0,1)$, 
then $\widetilde Z_i \defeq \sum_{k=1}^K \widetilde{Z}_i^{(k)} \sqrt{m_{k}/(Km_{\rm mf}^{(K)})}  \sim  {\cal N}(0,1)$ and 
are i.i.d. Let $\widetilde \bz = [\widetilde z_{i}]_{i=1}^n$. Using \eqref{intf} we find that 
$f_{K,1;0}$ can also be expressed as
\begin{align}
f_{K,1;0}&=-\frac{1}{n}\mathbb{E}_{\bS,\widetilde\bZ}\Big[\ln\int\big\{\prod_{i=1}^n dx_i P_0(x_i)\big\} \exp\Big\{-\frac{m_{\rm mf}^{(K)}}{\Delta}\sum_{i=1}^n\big(\frac{x_i^2}{2} - x_iS_i- x_i\widetilde Z_{i}\sqrt{\frac{\Delta}{m_{\rm mf}^{(K)}}}  \big)\Big\}\Big]
\nn
&
=-\mathbb{E}_{S,\widetilde Z}\Big[\ln\int dx P_0(x) \exp\Big\{-\frac{m_{\rm mf}^{(K)}}{\Delta}\big(\frac{x^2}{2} - xS- x\widetilde Z\sqrt{\frac{\Delta}{m_{\rm mf}^{(K)}}}  \big)\Big\}\Big]. \label{identifyMF}
\end{align}
Expression \eqref{identifyMF} is nothing else than the free energy \eqref{MFf} associated with the following scalar denoising model: $y = s  +  \widetilde z \, \Sigma(m_{\rm mf}^{(K)};\Delta)$, which leads to
\begin{align}
f_{K,1;0}=f_{\rm den}\big(\Sigma(m_{\rm mf}^{(K)};\Delta)\big). \label{fK1isfden}
\end{align}
%
%
\subsection{Free energy change along the adaptive interpolation path} \label{secFchange}
By construction of \eqref{intH} we have the following coherency property (see Figure \ref{fig:Kinterp}): The $(k,t = 1)$ and $(k + 1, t = 0)$ models are equivalent (the Hamiltonian \eqref{intH} is invariant under this change) and thus $f_{k,1;\epsilon} = f_{k+1,0;\epsilon}$ for any $k$. This implies that the $(k,t)$--interpolating free energy \eqref{intf} verifies 
\begin{align}\label{telescopf}
f_{1,0;\epsilon} = f_{K,1;\epsilon} +\sum_{k=1}^K (f_{k,0;\epsilon}-f_{k,1;\epsilon}) = f_{K,1;\epsilon}- \sum_{k=1}^K \int_{0}^1 dt \frac{d f_{k,t;\epsilon}}{dt}.
\end{align}
Let us evaluate $d f_{k,t;\epsilon}/dt$. Define the {\it overlap} $q_{\bx,\bs}  \defeq  n^{-1}\sum_{i=1}^n x_is_i$. Starting 
from \eqref{intf}, lenghty but simple algebra (see sec.~\ref{proofdH} for the details) shows that as long as $P_0$ has bounded first four moments,
\begin{align}
\frac{d f_{k,t;\epsilon}}{dt} =\frac{1}{4\Delta K}\mathbb{E}[\langle q_{\bX,\bS}^2- 2m_{k}q_{\bX,\bS}^{} \rangle_{k,t;\epsilon}]+ {\cal O}((nK)^{-1}). \label{dHdt}
\end{align}
This, with \eqref{telescopf} and \eqref{fK1isfden} yields
\begin{align}
f_{1,0;\epsilon} &= f_{K,1;\epsilon} -\frac{1}{4\Delta K}\sum_{k=1}^K \int_{0}^1 dt\, \mathbb{E}[\langle q_{\bX,\bS}^2 - 2m_{k}q_{\bX,\bS}^{} \rangle_{k,t;\epsilon}]+ {\cal O}(n^{-1})\nonumber \\
&=(f_{K,1;\epsilon} - f_{K,1;0}) +f_{\rm den}\big(\Sigma(m_{\rm mf}^{(K)};\Delta)\big) \nn
&\qquad- \frac{1}{4\Delta}\Big\{ -\frac{1}{K}\sum_{k=1}^K m_{k}^2 + \frac{1}{K}\sum_{k=1}^K \int_{0}^1 dt\, \mathbb{E}[\langle (q_{\bX,\bS}^{} - m_{k})^2 \rangle_{k,t;\epsilon}]\Big\}+ {\cal O}(n^{-1})\nonumber\\
&=(f_{K,1;\epsilon} - f_{K,1;0})+f_{\rm RS}(m_{\rm mf}^{(K)};\Delta) + \frac{V_K(\{m_{k}\}_{k=1}^K)}{4\Delta} \nn
&\qquad - \frac{1}{4\Delta K}\sum_{k=1}^K \int_{0}^1 dt\, \mathbb{E}[\langle (q_{\bX,\bS}^{} - m_{k})^2 \rangle_{k,t;\epsilon}]+ {\cal O}(n^{-1}), 
\label{fundam}
\end{align}
where in the last equality we used \eqref{eq:potentialfunction} and introduced the non-negative variance
\begin{align}
V_K(\{m_{k}\}_{k=1}^K) \defeq \frac{1}{K}\sum_{k=1}^K m_{k}^2- \Big(\frac{1}{K}\sum_{k=1}^K m_{k}\Big)^2.
\end{align}
The fundamental sum rule \eqref{fundam} can now be used to prove the replica symmetric formula.
\subsection{Upper bound} \label{secUpper}
From \eqref{fundam} we recover the upper bound usually obtained by the classical method of 
Guerra and Toninelli \cite{guerra2002thermodynamic} and applied in \cite{krzakala2016mutual} to symmetric rank-one matrix estimation
(but see also \cite{korada2009exact} which already fully proved the replica formula in the binary case). 
Choose $m_{k} = {\rm argmin}_{m\ge 0} f_{\rm RS}(m;\Delta)$ for all $k = 1,\ldots,K$. This implies $m_{\rm mf}^{(K)}  = {\rm argmin}_{m\ge 0} f_{\rm RS}(m;\Delta)$ as well as $V_K(\{m_{k}\}) = 0$. Thus since
 the integrand in \eqref{fundam} is non-negative we get the bound 
\begin{align}
 f_{1,0;\epsilon} \le (f_{K,1;\epsilon} - f_{K,1;0}) + \min_{m\ge 0}f_{\rm RS}(m;\Delta) + {\cal O}(n^{-1}). \label{24}
\end{align}
Now we apply this inequality to a sequence $\epsilon_n\to 0$ as $n\to +\infty$. 
From Lemma~\ref{thermolimit} and \eqref{f10isf} we obtain the upper bound:
\begin{proposition}[Upper bound] \label{UpperBound}
Fix $\Delta > 0$. For any $P_0$ with bounded first four moments, 
\begin{align}
\limsup_{n\to \infty}f_n\le \min_{m\ge 0}f_{\rm RS}(m;\Delta).	
\end{align}
\end{proposition}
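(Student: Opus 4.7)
The plan is to read the upper bound directly out of the fundamental sum rule \eqref{fundam} by making a constant, non-adaptive choice of the interpolation parameters. First I would fix $m^\star \in \mathrm{argmin}_{m\ge 0} f_{\rm RS}(m;\Delta)$ (which exists because, under the bounded-support hypothesis, $f_{\rm den}$ is bounded while $m^2/(4\Delta)\to\infty$, and $f_{\rm RS}$ is continuous in $m$), and then set $m_k = m^\star$ for every $k = 1,\ldots,K$. With this constant choice, the average $m_{\rm mf}^{(K)} = m^\star$ and the variance $V_K(\{m_k\}) = 0$ vanishes identically, so \eqref{fundam} collapses to
\begin{align}
f_{1,0;\epsilon} = (f_{K,1;\epsilon} - f_{K,1;0}) + f_{\rm RS}(m^\star;\Delta) - \frac{1}{4\Delta K}\sum_{k=1}^K \int_0^1 dt\, \mathbb{E}\bigl[\langle (q_{\bX,\bS} - m^\star)^2 \rangle_{k,t;\epsilon}\bigr] + \mathcal{O}(n^{-1}).
\end{align}

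The second observation is that the remainder term is a Gibbs average of a non-negative quantity, so it can be discarded while preserving the inequality direction. This yields the one-sided bound
\begin{align}
f_{1,0;\epsilon} \le (f_{K,1;\epsilon} - f_{K,1;0}) + \min_{m\ge 0} f_{\rm RS}(m;\Delta) + \mathcal{O}(n^{-1}),
\end{align}
which is valid for any $\epsilon\in[0,1]$ and any $K\ge 1$ (in particular no adaptive choice, nor even any large $K$, is needed for this direction, in contrast to the lower bound where the flexibility of varying $m_k$ with $k$ will be essential).

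Finally I would let $\epsilon$ depend on $n$ through any sequence $\epsilon_n \to 0_+$ as $n\to\infty$ (for example $\epsilon_n = n^{-1/2}$). Lemma \ref{thermolimit} applied to both the initial and the final perturbed models gives
\begin{align}
|f_{1,0;\epsilon_n} - f_{1,0;0}| \le \tfrac{\epsilon_n}{2}\mathbb{E}[S^2] \to 0, \qquad |f_{K,1;\epsilon_n} - f_{K,1;0}| \le \tfrac{\epsilon_n}{2}\mathbb{E}[S^2] \to 0,
\end{align}
so using the identification $f_{1,0;0} = f_n$ from \eqref{f10isf} and taking $\limsup_{n\to\infty}$ in the displayed inequality yields $\limsup_{n\to\infty} f_n \le \min_{m\ge 0} f_{\rm RS}(m;\Delta)$, since the right-hand side is independent of $n$.

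In terms of difficulty, there is really no obstacle here: all of the technical content has already been packaged into the sum rule \eqref{fundam} (which in turn relies on the derivative computation \eqref{dHdt}) and into Lemma \ref{thermolimit}. The only genuinely new input is recognizing that the squared-overlap term has a definite sign, which is what reduces the equality \eqref{fundam} to the desired inequality. The lower bound, by contrast, requires using a non-trivial adaptive path $\{m_k\}_{k=1}^K$ together with the overlap concentration of Lemma \ref{concentration} in order to kill the same remainder term rather than simply drop it.
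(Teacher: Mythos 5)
Your proposal is correct and matches the paper's proof essentially step for step: constant trial parameters $m_k = m^\star = \mathrm{argmin}_{m\ge 0}f_{\rm RS}(m;\Delta)$ so that $m_{\rm mf}^{(K)}=m^\star$ and $V_K=0$, discard the manifestly non-negative remainder in \eqref{fundam}, then pass to the limit along a sequence $\epsilon_n\to 0$ using Lemma~\ref{thermolimit} and the identification $f_{1,0;0}=f_n$ from \eqref{f10isf}. (One small slip worth noting: you invoke a "bounded-support hypothesis" when justifying existence of the minimizer, whereas the proposition only assumes bounded first four moments; the argmin existence should be argued from the latter, which the paper leaves implicit as well.)
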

\subsection{Lower bound} \label{seclower}
The converse bound is generally the one requiring extra technical tools, such as the use of {\it spatial coupling} \cite{6284230,Giurgiu_SCproof,XXT,barbier_allerton_RLE,barbier_ieee_replicaCS} or the Aizenman-Sims-Starr scheme, see \cite{aizenman2003extended,2017arXiv170200473M,2017arXiv170108010L,2016arXiv161103888L}. 
Thanks to the adaptive interpolation method the proof is quite straightforward.
As in all of the existing methods, we need a concentration lemma which takes the following form in the present context (see sec.~\ref{proofConc} for the proof).
\begin{lemma}[Overlap concentration]\label{concentration}
Let $P_0$ have bounded support. For any sequences $K_n\to +\infty$, $0< a_n <b_n<1$, and any choice of the trial parameters 
$m_k: \epsilon\mapsto m_k^{(n)}(\epsilon)$, $k=1,\cdots,K$ differentiable, bounded, non-decreasing with respect to $\epsilon$, we have
\begin{align}\label{concentration-overlap}
\int_{a_n}^{b_n} d\epsilon \, \frac{1}{K_n}\sum_{k=1}^{K_n}\int_0^1 dt\, \mathbb{E}\big[ \big\langle (q_{\bX,\bS}^{}
- \mathbb{E}[\langle q_{\bX,\bS}^{}\rangle_{k,t;\epsilon}])^2\big\rangle_{k,t;\epsilon}\big] 
\leq \frac{C}{a_n^2 n^{\alpha}},
\end{align}
for any $0<\alpha <1/4$ and some constant $C > 0$ independent of $n$, $K_n$, $b_n$ and the set of trial parameters
($C$ depends on the second moment and the support of $P_0$). 
\end{lemma}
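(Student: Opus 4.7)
\medskip

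\textbf{Proof plan for Lemma \ref{concentration}.} My strategy is the classical thermal-plus-disorder decomposition of the overlap variance, combined with two ingredients that are characteristic of Bayesian-optimal inference: (i) Nishimori identities, which follow from the fact that under $P_{k,t;\epsilon}(\bx|\boldsymbol\theta)$ the conditional law of $\bS$ given all observations is itself $P_{k,t;\epsilon}$, and (ii) concentration of the interpolating free energy. The perturbation \eqref{perturb} plays an essential role: it is designed so that $\epsilon$ acts as a signal-to-noise ratio in a scalar side-channel $\by=\bs\sqrt{\epsilon}+\widehat\bz$, which makes $f_{k,t;\epsilon}$ differentiable in $\epsilon$ with derivative tied to the overlap.

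The first step is to write, for any fixed $(k,t,\epsilon)$,
\begin{align*}
\mathbb{E}\big\langle (q_{\bX,\bS}-\mathbb{E}\langle q_{\bX,\bS}\rangle_{k,t;\epsilon})^2\big\rangle_{k,t;\epsilon}
=\mathbb{E}\big\langle (q_{\bX,\bS}-\langle q_{\bX,\bS}\rangle_{k,t;\epsilon})^2\big\rangle_{k,t;\epsilon}
+\mathbb{E}\big[(\langle q_{\bX,\bS}\rangle_{k,t;\epsilon}-\mathbb{E}\langle q_{\bX,\bS}\rangle_{k,t;\epsilon})^2\big].
\end{align*}
The first (\emph{thermal}) term measures fluctuations of $q_{\bX,\bS}$ under the Gibbs measure at fixed disorder; the second (\emph{quenched}) term measures fluctuations of the Gibbs average over the disorder. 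I will handle the two separately and then integrate over $\epsilon\in[a_n,b_n]$.

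For the thermal term, I would derive a fluctuation--dissipation identity relating the thermal variance of $q_{\bX,\bS}$ to derivatives of $f_{k,t;\epsilon}$ with respect to $\epsilon$ (this is the content of the section \ref{fluctuation-identity} in the paper). Computing $\partial_\epsilon f_{k,t;\epsilon}$ and $\partial_\epsilon^2 f_{k,t;\epsilon}$ by explicit differentiation of \eqref{perturb}, followed by Gaussian integration by parts on $\widehat\bZ$ and one application of the Nishimori identity (which turns $\langle x_i\rangle s_i$ into $\langle x_i\rangle^2$-type quantities so that $\langle q_{\bX,\bS}\rangle=\langle q_{\bX^{(1)},\bX^{(2)}}\rangle$ for independent replicas), gives an expression of the form
\begin{align*}
\mathbb{E}\big\langle (q_{\bX,\bS}-\langle q_{\bX,\bS}\rangle_{k,t;\epsilon})^2\big\rangle_{k,t;\epsilon}
\;\leq\; \frac{C_1}{n}\,\big|\partial_\epsilon\,\mathbb{E}\langle q_{\bX,\bS}\rangle_{k,t;\epsilon}\big|+\frac{C_2}{n}.
\end{align*}
Since $\mathbb{E}\langle q_{\bX,\bS}\rangle_{k,t;\epsilon}$ is monotone in $\epsilon$ (a consequence of Nishimori together with the Bayes-optimal structure) and uniformly bounded because $P_0$ has bounded support, its $\epsilon$-integral over $[a_n,b_n]$ is bounded by a constant. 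Integrating the inequality in $\epsilon$ therefore yields a contribution of order $1/n$ for the thermal part, uniformly in $k$ and $t$.

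For the quenched term, the plan is to use concentration of the free energy $F_{k,t;\epsilon}\defeq -\frac{1}{n}\ln\mathcal Z_{k,t;\epsilon}$ around its expectation $f_{k,t;\epsilon}$ (section \ref{concentration-free-energy}). Since $P_0$ has bounded support, the Hamiltonian \eqref{intH}--\eqref{perturb} is a sum of terms each of which is Lipschitz in the Gaussian noises $\bZ^{(k)},\widetilde\bZ^{(k)},\widehat\bZ$ and bounded-difference in $\bS$, so Gaussian Poincaré/log-Sobolev for the Gaussian variables and Efron--Stein (or bounded-differences) for $\bS$ yield $\mathbb{E}[(F_{k,t;\epsilon}-f_{k,t;\epsilon})^2]=\mathcal O(1/n)$. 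A standard convexity argument, based on writing the quenched overlap fluctuation as a second derivative in $\epsilon$ of a convex function of $\epsilon$ and comparing $\partial_\epsilon F_{k,t;\epsilon}$ with $\partial_\epsilon f_{k,t;\epsilon}$, bounds the $\epsilon$-integral
\begin{align*}
\int_{a_n}^{b_n}d\epsilon\;\mathbb{E}\big[(\langle q_{\bX,\bS}\rangle_{k,t;\epsilon}-\mathbb{E}\langle q_{\bX,\bS}\rangle_{k,t;\epsilon})^2\big]
\end{align*}
in terms of the free-energy concentration divided by $a_n^2$ (the $a_n^2$ coming from the lower limit of integration and the fact that one controls the derivative via $\int |\partial F-\partial f|\leq \sqrt{(b_n-a_n)\int(\partial F-\partial f)^2}$ combined with $\int (F-f)^2$). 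This produces the factor $C/(a_n^2 n^\alpha)$ stated in the lemma, with $\alpha<1/4$ to accommodate the loss in going from $1/n$ concentration of $F$ to concentration of its derivative.

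\textbf{Main obstacle.} The genuinely delicate step is the quenched bound: controlling fluctuations of $\langle q_{\bX,\bS}\rangle_{k,t;\epsilon}$, i.e.\ of the first derivative of $F_{k,t;\epsilon}$ in $\epsilon$, given only concentration of $F_{k,t;\epsilon}$ itself. This is the source of the suboptimal exponent $\alpha<1/4$: the standard trick of sandwiching $\partial_\epsilon F$ between finite differences of $F$ at nearby $\epsilon$ values, then optimising the step size, produces a polynomial rather than $1/n$ rate and forces the integrated form of the statement (the $\int_{a_n}^{b_n}d\epsilon$ and the $1/a_n^2$). The rest of the proof is essentially bookkeeping with Gaussian integration by parts and Nishimori identities.
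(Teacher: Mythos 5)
Your overall plan — thermal-plus-quenched decomposition, Gaussian integration by parts plus Nishimori, free-energy concentration, and a concavity/finite-difference sandwich to control a first derivative — is indeed the strategy of the paper. But there is a genuine gap in how you pass from the free energy to the \emph{overlap}. The derivative of the interpolating free energy with respect to the perturbation strength is not the Gibbs average of the overlap $q_{\bX,\bS}$; it is the Gibbs average of the auxiliary quantity
\begin{align*}
\mathcal{L} \defeq \frac{1}{n}\sum_{i=1}^n\Big(\frac{x_i^2}{2}-x_is_i-\frac{x_i\widehat z_i}{2\sqrt{\tilde\epsilon}}\Big),
\end{align*}
for which $\partial_{\tilde\epsilon} F_{k,t;\epsilon} = \langle\mathcal{L}\rangle_{k,t;\epsilon}$. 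The two agree \emph{in expectation} (by a Nishimori identity, $\mathbb{E}\langle\mathcal{L}\rangle = -\tfrac12\mathbb{E}\langle q_{\bX,\bS}\rangle$), but not pointwise in the disorder, and your quenched step treats $\langle q_{\bX,\bS}\rangle$ as if it were the pointwise derivative of $F$ so that the concavity/finite-difference argument and $\EE[(F-f)^2]$ could control it. That is not the case: the argument you describe controls $\mathbb{E}[(\langle\mathcal{L}\rangle-\mathbb{E}\langle\mathcal{L}\rangle)^2]$, not $\mathbb{E}[(\langle q_{\bX,\bS}\rangle-\mathbb{E}\langle q_{\bX,\bS}\rangle)^2]$. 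Likewise, your thermal bound in terms of $\partial_\epsilon\mathbb{E}\langle q_{\bX,\bS}\rangle$ is a stand-in for what the second derivative $\partial_{\tilde\epsilon}^2 f_{k,t;\epsilon}$ actually gives, namely the \emph{thermal variance of $\mathcal{L}$} (plus an $O(1/(n\tilde\epsilon))$ correction).

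The missing ingredient is the fluctuation identity proved in Section 6 of the paper (equation \eqref{82}), which, after several integrations by parts and Nishimori identities, shows that
\begin{align*}
\mathbb{E}\big\langle(\mathcal{L}-\mathbb{E}\langle\mathcal{L}\rangle)^2\big\rangle
= \frac{1}{4}\big(\mathbb{E}\langle q_{\bX,\bS}^2\rangle-\mathbb{E}\langle q_{\bX,\bS}\rangle^2\big)
+ \frac{1}{2}\big(\mathbb{E}\langle q_{\bX,\bS}^2\rangle-\mathbb{E}[\langle q_{\bX,\bS}\rangle^2]\big) + \frac{\mathbb{E}[S^2]}{4n\tilde\epsilon}.
\end{align*}
Because each term on the right is non-negative, this yields the crucial inequality $\mathbb{E}\langle(q_{\bX,\bS}-\mathbb{E}\langle q_{\bX,\bS}\rangle)^2\rangle \le 4\,\mathbb{E}\langle(\mathcal{L}-\mathbb{E}\langle\mathcal{L}\rangle)^2\rangle$, which is what finally transfers concentration from $\mathcal{L}$ to the overlap. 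Without this identity your plan controls the wrong quantity. A second, smaller point worth noting: since $F_{k,t;\epsilon}$ itself is not concave in $\tilde\epsilon$ (only its average $f_{k,t;\epsilon}$ is), the paper must add the correction term $\sqrt{\tilde\epsilon}\,n^{-1}\sum_i M|\widehat z_i|$ to make a concave surrogate before applying the finite-difference sandwich; this is needed for the quenched step you sketch.
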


\begin{remark}
In applications of this lemma the sequence $a_n$ tends to zero as slowly as we wish. In practice we will set later on $b_n=2a_n$ and take $a_n\to 0$ slowly enough so that $a_n^{-3} n^{-\alpha} \to 0$. In particular the r.h.s of \eqref{concentration-overlap} tends to zero.
\end{remark}

For sequences $K_n$, $0<a_n <b_n<1$, and $\{m_k=m_k^{(n)}(\epsilon)\}_{k=1}^K$ as in Lemma \ref{concentration}, \eqref{fundam} becomes
\begin{align}\label{fundam_afterConc}
\int_{a_n}^{b_n} d\epsilon\, f_{1,0;\epsilon}&=\int_{a_n}^{b_n} d\epsilon\, (f_{K_n,1;\epsilon} - f_{K_n,1;0}) + \int_{a_n}^{b_n} d\epsilon\,\biggl\{f_{\rm RS}(m_{\rm mf}^{(K_n)};\Delta) 
+ \frac{V_{K_n}(\{m_{k}\}_{k=1}^{K_n})}{4\Delta}\biggr\}
\nn
&\qquad- \frac{1}{4\Delta}\int_{a_n}^{b_n} d\epsilon\,\frac{1}{K_n}\sum_{k=1}^{K_n} \int_{0}^1 dt\, \big(\mathbb{E}[\langle q_{\bX,\bS}^{}\rangle_{k,t;\epsilon}] - m_{k}\big)^2  + \mathcal{O}(a_n^{-2}n^{-\alpha})
\end{align}
where ${\cal O}(a_n^{-2}n^{-\alpha})$ is uniform in the choice of $K_n$, $b_n$ and trial parameters.
%
At this point we need another important Lemma (see Appendix~\ref{appendix-proof-of-lemma3} for the proof) which is made possible by construction of the adaptive interpolation method.
\begin{lemma}[Weak $t$-dependence at fixed $k$]\label{tinvar}
Fix $K$, $\epsilon$ and $\{m_k\}_{k=1}^K$. 
For $P_0$ with bounded first four moments and any $k \in \{1,\ldots,K\}$ and $t \in [0,1]$,
\begin{align} 
\big|\mathbb{E}[\langle q_{\bX,\bS}^{}\rangle_{k,t;\epsilon}] - \mathbb{E}[\langle q_{\bX,\bS}^{}\rangle_{k,0;\epsilon}]\big| = {\cal O}\Big(\frac{n}{K}\Big). \label{noverK}
\end{align}
uniformly in $\epsilon$ and $\{m_k\}_{k=1}^K$.
This result also applies when fixed $K$ is replaced by $K_n$ and $\{m_k= m_k^{(n)}(\epsilon)\}$.
\end{lemma}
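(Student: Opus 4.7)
I would write
\[
\mathbb{E}\langle q_{\bX,\bS}\rangle_{k,t;\epsilon} - \mathbb{E}\langle q_{\bX,\bS}\rangle_{k,0;\epsilon} = \int_0^t \frac{d}{ds}\mathbb{E}\langle q_{\bX,\bS}\rangle_{k,s;\epsilon}\, ds
\]
and show that the integrand is $\mathcal{O}(n/K)$ uniformly in $s\in[0,1]$. The key structural observation is that in the $(k,t)$-interpolating Hamiltonian \eqref{intH}, only the two \emph{local step-$k$} contributions $h(\bx,\bs,\bz^{(k)},K\Delta/(1-s))$ and $h_{\rm mf}(\bx,\bs,\widetilde{\bz}^{(k)},K\Delta/(sm_k))$ depend on $s$; all $k'\neq k$ terms and the $\epsilon$-perturbation are $s$-independent. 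After unwrapping $1/\sigma^2$ and $1/\sigma$, each of these two terms carries an explicit $1/K$ prefactor on its deterministic part and a $1/\sqrt{K}$ prefactor on its Gaussian noise part, which is ultimately the source of the $\mathcal{O}(n/K)$ scaling.

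Using the standard formula for differentiating a Gibbs expectation,
\[
\frac{d}{ds}\mathbb{E}\langle q_{\bX,\bS}\rangle_{k,s;\epsilon} = -\mathbb{E}\bigl[\bigl\langle (q_{\bX,\bS} - \langle q_{\bX,\bS}\rangle)(U - \langle U\rangle)\bigr\rangle_{k,s;\epsilon}\bigr],
\]
with $U\defeq \partial_s \mathcal{H}_{k,s;\epsilon}$, I would split $U$ into its four natural pieces. The two deterministic pieces, $-\frac{1}{K\Delta}\sum_{i\le j}[X_i^2X_j^2/(2n) - X_iX_jS_iS_j/n]$ and $\frac{m_k}{K\Delta}\sum_i[X_i^2/2 - X_iS_i]$, are directly $\mathcal{O}(n/K)$ in absolute value thanks to the bounded support of $P_0$ and the boundedness of $m_k$, and with $|q_{\bX,\bS}|\le C$ they contribute $\mathcal{O}(n/K)$ to the derivative. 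The two Gaussian pieces, built on $\sum_{i\le j} X_iX_jZ_{ij}^{(k)}/\sqrt{n}$ and $\sum_i X_i\widetilde Z_i^{(k)}$ with the apparently singular prefactors $(1-s)^{-1/2}$ and $s^{-1/2}$, I would treat by Gaussian integration by parts on $Z_{ij}^{(k)}$ and $\widetilde Z_i^{(k)}$. Each application produces a derivative of the Gibbs weight with respect to the noise variable, which in turn pulls down a $\sqrt{(1-s)/(K\Delta n)}\,X_iX_j$ or $\sqrt{sm_k/(K\Delta)}\,X_i$ factor from the $s$-dependent Hamiltonian terms; these exactly cancel the apparent endpoint singularities and leave deterministic sums of bounded quantities with overall prefactor $1/K$, again of size $\mathcal{O}(n/K)$.

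Summing the four contributions gives $\bigl|\frac{d}{ds}\mathbb{E}\langle q_{\bX,\bS}\rangle_{k,s;\epsilon}\bigr| = \mathcal{O}(n/K)$ with constants depending only on $\Delta$, the support of $P_0$, and an upper bound on $\{|m_k|\}$; the estimate is manifestly uniform in $\epsilon\in[0,1]$ and in the trial parameters, and passes verbatim to the sequence version with $K_n$ and $m_k^{(n)}(\epsilon)$. Integrating over $s\in[0,t]$ yields \eqref{noverK}. The main obstacle is not conceptual but organizational: carefully matching the Gaussian integration-by-parts contributions so that the $(1-s)^{-1/2}$ and $s^{-1/2}$ singularities genuinely cancel term by term rather than only being integrable on average, and checking that uniformity in the (possibly $n$- and $\epsilon$-dependent) trial parameters is preserved throughout the bookkeeping.
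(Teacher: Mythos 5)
Your plan takes essentially the same route as the paper's proof of Lemma~\ref{tinvar}: write the difference as $\int_0^t ds\,\frac{d}{ds}\mathbb{E}[\langle q_{\bX,\bS}\rangle_{k,s;\epsilon}]$, express the $s$-derivative as the negative covariance of $q_{\bX,\bS}$ with $\partial_s\mathcal{H}_{k,s;\epsilon}$, observe that only the step-$k$ pieces of the Hamiltonian carry $s$-dependence and that each comes with the crucial $1/K$ (deterministic part) or $1/\sqrt{K}$ (noise part) prefactor, then integrate by parts on $Z^{(k)}_{ij}$ and $\widetilde Z^{(k)}_i$ so that the $(1-s)^{-1/2}$ and $s^{-1/2}$ factors cancel, and finally estimate. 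In the paper this last step is written after the IBP as a single expression $\frac{1}{K}\int_0^t ds\,\mathbb{E}[\langle q_{\bX,\bS}(g(\bX',\bX'';\bS)-g(\bX,\bX';\bS))\rangle_{k,s;\epsilon}]$ involving replica variables, bounded via Cauchy--Schwarz and the Nishimori identity \eqref{nishibasic} by $\frac{1}{K}\sqrt{\mathbb{E}[\langle q^2\rangle]\,\mathbb{E}[\langle g^2\rangle]}=\mathcal{O}(n/K)$.

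One point worth flagging: your final estimate leans on \emph{bounded support} of $P_0$ (so that $|q_{\bX,\bS}|$ and the deterministic sums are pointwise bounded), whereas the lemma as stated, and the paper's Cauchy--Schwarz/Nishimori argument, only require the first four moments of $P_0$ to be finite. Your argument would therefore establish the lemma under a strictly stronger hypothesis. Since bounded support is assumed in the end for Theorem~\ref{thm1} this is not a fatal problem for the downstream application, but to recover the lemma exactly as stated you should replace the pointwise bounds by the paper's moment-bound strategy: bound $\mathbb{E}[\langle q_{\bX,\bS}^2\rangle_{k,s;\epsilon}]=\mathcal{O}(1)$ and the replica-quadratic form by $\mathcal{O}(n^2)$ via Cauchy--Schwarz and \eqref{nishibasic}. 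Also, a small imprecision: after integration by parts the remaining quantities are Gibbs-averaged products of replica coordinates (e.g.\ $X_iX_jX_i'X_j'$), not deterministic; the point is only that the explicit Gaussian noise has been eliminated, which is what makes the moment control tractable.
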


Using this lemma for a sequence $K_n = \Omega(n^b)$ with $b>1$ large enough, say $b = 2$, \eqref{fundam_afterConc} takes the following convenient form:
%
\begin{align}\label{fundam_afterConc_aftertinvar}
\int_{a_n}^{b_n} d\epsilon\,f_{1,0;\epsilon}= &\, \int_{a_n}^{b_n} d\epsilon\,(f_{K_n,1;\epsilon} - f_{K_n,1;0}) +\int_{a_n}^{b_n} d\epsilon\,\biggl\{f_{\rm RS}(m_{\rm mf}^{(K_n)};\Delta) 
+ \frac{V_{K_n}(\{m_{k}\}_{k=1}^{K_n})}{4\Delta}\biggr\}
\nonumber \\ &
 \qquad- \frac{1}{4\Delta}\int_{a_n}^{b_n} d\epsilon\,\frac{1}{K_n}\sum_{k=1}^{K_n}
\big(\mathbb{E}[\langle q_{\bX,\bS}^{}\rangle_{k,0;\epsilon}] - m_{k}\big)^2  + \mathcal{O}(a_n^{-2}n^{-\alpha}).
\end{align}
We now use the last crucial lemma which is a fundamental property of the adaptive interpolation.
\begin{lemma}[Choice for the trial parameters] \label{freedom}
For a given $n$ one can {\it freely select} differentiable and non-decreasing trial parameters $\{m_k = m_k^{(n)}(\epsilon)\}_{k=1}^K$ as
\begin{align}\label{choicex}
m_{k}= \mathbb{E}[\langle q_{\bX,\bS}^{}\rangle_{k,0;\epsilon}], \quad k=1, \cdots, K_n.
\end{align}
\end{lemma}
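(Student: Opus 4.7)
The strategy exploits the triangular structure of the system \eqref{choicex}, which turns the apparent coupled fixed-point problem into an explicit recursion.

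First I would verify the key structural observation: at $t=0$ the Hamiltonian ${\cal H}_{k,0;\epsilon}$ depends on $(m_1,\dots,m_K)$ only through $m_1,\dots,m_{k-1}$. Indeed, in \eqref{intH} the fourth term $h_{\rm mf}(\bx,\bs,\widetilde\bz^{(k)},K\Delta/(t\,m_k))$ carries (see \eqref{hmf}) the overall prefactor $t\,m_k/(K\Delta)$ and vanishes identically at $t=0$, while the third term $h(\bx,\bs,\bz^{(k)},K\Delta/(1-t))$ reduces to $h(\bx,\bs,\bz^{(k)},K\Delta)$, which is $m_k$-free. The surviving mean-field terms run only over $k'=1,\dots,k-1$. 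Consequently \eqref{choicex} is not a coupled system: one \emph{defines} $m_1(\epsilon)=\mathbb{E}[\langle q_{\bX,\bS}\rangle_{1,0;\epsilon}]$ (with no trial parameters on the right-hand side), then $m_2(\epsilon)$ using $m_1(\epsilon)$, and so on up to $m_{K_n}(\epsilon)$. Existence and uniqueness are immediate from this recursion.

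Next I would establish differentiability by induction on $k$. The map $(\epsilon,m_1,\dots,m_{k-1})\mapsto \mathbb{E}[\langle q_{\bX,\bS}\rangle_{k,0;\epsilon}]$ is smooth: the parameters enter through ratios of integrals of Hamiltonian exponentials that, thanks to the bounded support of $P_0$ and to Gaussian integrability in the quenched noises, are dominated by integrable envelopes justifying differentiation under both the $\int d\bx$ and the $\mathbb{E}$ expectations. If the inductive hypothesis gives smoothness of $m_1(\epsilon),\dots,m_{k-1}(\epsilon)$, the chain rule then yields smoothness of $m_k(\epsilon)$.

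The genuine content is the monotonicity claim $dm_k/d\epsilon \ge 0$. I would show separately that $(\epsilon,m_1,\dots,m_{k-1})\mapsto \mathbb{E}[\langle q_{\bX,\bS}\rangle_{k,0;\epsilon}]$ is non-decreasing in each of its arguments, and then combine this with the inductive non-decrease of the $m_{k'}(\epsilon)$ through the chain rule. Partial monotonicity in any single argument follows because $\epsilon$ and each $m_{k'}/(K\Delta)$ is the signal-to-noise ratio of an \emph{independent matched Gaussian side-channel}: differentiating in such an SNR, performing Gaussian integration by parts on the corresponding noise and invoking the Nishimori identity $\mathbb{E}\langle q_{\bX,\bS}\rangle=\mathbb{E}\langle q_{\bX^{(1)},\bX^{(2)}}\rangle$ (valid in our Bayes-optimal setting and available as the standard set of identities reviewed later in the paper), one recognises $d\mathbb{E}[\langle q_{\bX,\bS}\rangle_{k,0;\epsilon}]/d(\text{SNR})$ as a non-negative Nishimori variance of an overlap---this is the I--MMSE identity in its Bayes-optimal form. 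The main obstacle I foresee is the clean bookkeeping of the two sources of $\epsilon$-dependence of $m_k$, the explicit one through the perturbation term \eqref{perturb} and the implicit one through $m_1(\epsilon),\dots,m_{k-1}(\epsilon)$, and checking that both contribute with the same sign; the triangular structure together with the non-negativity of each Nishimori variance in the chain-rule expansion is precisely what delivers this.
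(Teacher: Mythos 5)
Your proof is correct and follows essentially the same route as the paper: exploit the triangular structure to define the $m_k$ recursively, note differentiability at finite $n$, and obtain monotonicity by the chain rule together with the non-negativity of an I--MMSE (overlap-variance) derivative. The paper makes the chain-rule step slightly slicker by observing that the Gibbs measure at $(k,t=0)$ depends on $(\epsilon,m_1,\dots,m_{k-1})$ only through the single effective signal-to-noise ratio $\tilde\epsilon=\epsilon+(K\Delta)^{-1}\sum_{l<k}m_l$ (the independent Gaussian side channels combine by the stability property), so that a single derivative $G_k^{(n)\prime}(\tilde\epsilon)\ge 0$ suffices in place of the per-variable partial monotonicity you verify separately.
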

\begin{proof}
This is authorized by construction of the adaptive interpolation method. 
Indeed, the $(k = 1,t = 0)$--interpolating model (see the 
Hamiltonian ${\cal H}_{1,0;\epsilon}(\bx;\boldsymbol{\theta})$ in \eqref{intH}) is 
independent of $\{m_{k}\}_{k=1}^{K_n}$. Thus we can freely 
set $m_1=m_1^{(n)}(\epsilon) = \mathbb{E}[\langle q_{\bX,\bS}^{}\rangle_{1,0;\epsilon}]$. 
Once $m_1$ is fixed to this value $m_1^{(n)}(\epsilon)$, we go to the next step and 
set $m_2=m_2^{(n)}(\epsilon) = \mathbb{E}[\langle q_{\bX,\bS}^{}\rangle_{2,0;\epsilon}]$, which again is 
possible due to the fact that the Hamiltonian ${\cal H}_{2,0;\epsilon}(\bx;\boldsymbol{\theta})$ and the 
Gibbs average $\langle -\rangle_{2,0;\epsilon}$ as well depend {\it only} on $m_1$ which has {\it already been fixed}. And so forth: As seen from Fig.~\ref{fig:Kinterp}, 
the Gibbs average $\langle -\rangle_{k,0;\epsilon}$ depends 
 only on $\{m_{k'}\}_{k'=1}^{k-1}$ which were already fixed in the previous steps so that the choice \eqref{choicex} is valid. 
 Note that $\mathbb{E}[\langle q_{\bX,\bS}^{}\rangle_{k,0;\epsilon}] \ge 0$ which is important as the $m_{k}^{(n)}(\epsilon)$'s play the role of signal-to-noise ratios, 
 and thus must be positive. Moreover the maps $\epsilon\mapsto m_k^{(n)}(\epsilon)$ are differentiable  and non-decreasing. They are obviously differentiable since we work with $n$ finite. 
 To see that they are non-decreasing we look at their derivative. 
 It is easy to see from the construction of the Gibbs bracket 
 that $\mathbb{E}[\langle q_{\bX,\bS}^{}\rangle_{k,0;\epsilon}]$ is a function $\epsilon\mapsto G_k^{(n)}(\epsilon + \frac{1}{K}\sum_{l=1}^{k-1} m_l)$ so that 
 $$
 \frac{d}{d\epsilon}m_k^{(n)}(\epsilon) = G_k^{(n)\prime}\biggl(\epsilon + \frac{1}{K}\sum_{l=1}^{k-1} m_l^{(n)}(\epsilon)\biggr) 
 \biggl(1 + \frac{1}{K}\sum_{l=1}^{k-1} \frac{d}{d\epsilon}m_l^{(n)}(\epsilon)\biggr)\,.
 $$
 Now, $G_k^{(n)\prime}$ is an expected variance, and is therefore positive, as can be directly shown from a direct 
 calculation (see equations \eqref{average-second-derivative} and \eqref{alternative} in  section \ref{proofConc}). This implies by induction that 
 $\frac{d}{d\epsilon}m_k^{(n)}(\epsilon) \geq 0$. 
\end{proof}

With this particular choice of trial parameters $\{m_k=m_k^{(n)}(\epsilon)\}_{k=1}^{K_n}$ the sum over $k = 1,\ldots, K_n$ in \eqref{fundam_afterConc_aftertinvar} is set to zero: 
The interpolation path has been {\it adapted} (thus the name of the method). Since $V_{K_n}$ is non-negative, 
\eqref{fundam_afterConc_aftertinvar} directly implies the following lower bound:
\begin{align}\label{fundam_afterConc_aftertinvar_2}
\int_{a_n}^{b_n} d\epsilon\,f_{1,0;\epsilon}&=\int_{a_n}^{b_n} d\epsilon\,\biggl\{(f_{K_n,1;\epsilon} - f_{K_n,1;0}) + f_{\rm RS}\Big(K_n^{-1}\sum_{k=1}^{K_n} m_k^{(n)} ;\Delta\Big) 
+ \frac{V_{K_n}(\{m_k^{(n)}\}_{k=1}^{K_n})}{4\Delta}\biggr\} +  \mathcal{O}(a_n^{-2}n^{-\alpha})\nn
&\ge \int_{a_n}^{b_n} d\epsilon\,(f_{K_n,1;\epsilon} - f_{K_n,1;0})+ (b_n - a_n)\min_{m\ge 0}f_{\rm RS}(m ;\Delta) 
+  \mathcal{O}(a_n^{-2}n^{-\alpha}).
\end{align}
Finally, setting $b_n = 2 a_n$ and taking $a_n \to 0$ slowly enough as $n\to +\infty$ so that $a_n^{-3} n^{-\alpha}\to 0$, using Lemma~\ref{thermolimit} and \eqref{f10isf} and the mean value theorem, we deduce
\begin{proposition}[Lower bound]\label{LowerBound}
Fix $\Delta > 0$. For any $P_0$ with bounded support,
\begin{align}\label{llbb}
\liminf_{n\to \infty}f_n\ge \min_{m\ge 0}f_{\rm RS}(m;\Delta).
\end{align}
\end{proposition}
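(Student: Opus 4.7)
The plan is to start from the fundamental sum rule \eqref{fundam}, integrate it against $\epsilon$ on the interval $[a_n,b_n]$ (with $b_n=2a_n$, both tending to $0$ slowly enough to be specified at the end), and then successively use the three lemmas (overlap concentration, weak $t$-dependence at fixed $k$, and free choice of the trial parameters) to kill every error term on the right-hand side, leaving a clean inequality that can be read off.

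More concretely, first I would rewrite the square in \eqref{fundam} as
\begin{align*}
\langle (q_{\bX,\bS}-m_k)^2\rangle_{k,t;\epsilon}
=\big\langle (q_{\bX,\bS}-\mathbb{E}\langle q_{\bX,\bS}\rangle_{k,t;\epsilon})^2\big\rangle_{k,t;\epsilon}
+\big(\mathbb{E}\langle q_{\bX,\bS}\rangle_{k,t;\epsilon}-m_k\big)^2.
\end{align*}
After $\epsilon$-integration over $[a_n,b_n]$, Lemma \ref{concentration} absorbs the first (fluctuation) piece into an $\mathcal{O}(a_n^{-2}n^{-\alpha})$ error uniformly in $K_n$ and in the choice of (differentiable, bounded, non-decreasing) trial parameters. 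Then I pick $K_n=\Omega(n^2)$ and invoke Lemma \ref{tinvar} to replace $\mathbb{E}\langle q_{\bX,\bS}\rangle_{k,t;\epsilon}$ by $\mathbb{E}\langle q_{\bX,\bS}\rangle_{k,0;\epsilon}$ at the cost of $\mathcal{O}(n/K_n)=\mathcal{O}(n^{-1})$, so the $t$-integral collapses to a point evaluation. This produces exactly the expression \eqref{fundam_afterConc_aftertinvar}.

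Next comes the key ``adaptive'' step: Lemma \ref{freedom} licenses the triangular choice $m_k^{(n)}(\epsilon)=\mathbb{E}\langle q_{\bX,\bS}\rangle_{k,0;\epsilon}$, which is legitimate because the $(k,0;\epsilon)$--Hamiltonian depends only on $m_1,\dots,m_{k-1}$ that have already been defined at previous steps; the lemma also checks that the resulting maps $\epsilon\mapsto m_k^{(n)}(\epsilon)$ are differentiable, bounded and non-decreasing, so they qualify as admissible trial parameters for Lemma \ref{concentration}. With this choice the explicit residual squared term in \eqref{fundam_afterConc_aftertinvar} vanishes identically. Discarding the non-negative variance $V_{K_n}(\{m_k^{(n)}\})/(4\Delta)$ and bounding $f_{\rm RS}(m_{\rm mf}^{(K_n)};\Delta)\ge \min_{m\ge 0}f_{\rm RS}(m;\Delta)$ then yields
\begin{align*}
\int_{a_n}^{b_n}\!d\epsilon\, f_{1,0;\epsilon}
\ge \int_{a_n}^{b_n}\!d\epsilon\,(f_{K_n,1;\epsilon}-f_{K_n,1;0})
+(b_n-a_n)\min_{m\ge 0}f_{\rm RS}(m;\Delta)
+\mathcal{O}(a_n^{-2}n^{-\alpha}).
\end{align*}

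To conclude, I would set $b_n=2a_n$ with $a_n\to 0$ slowly enough that $a_n^{-3}n^{-\alpha}\to 0$; Lemma \ref{thermolimit} gives $|f_{K_n,1;\epsilon}-f_{K_n,1;0}|\le (\epsilon/2)\mathbb{E}[S^2]=\mathcal{O}(a_n)$ uniformly on $[a_n,b_n]$, and likewise $|f_{1,0;\epsilon}-f_{1,0;0}|=\mathcal{O}(a_n)$. Dividing by $(b_n-a_n)=a_n$ and applying the mean value theorem on the left produces $f_{1,0;0}=f_n$ plus vanishing corrections, and taking $\liminf_n$ yields \eqref{llbb}. The only real obstacle in executing this plan is to keep all error terms uniform in $K_n$ and in the adaptive trial parameters (so that the choice of Lemma \ref{freedom} can be made \emph{after} the bounds are in place); this uniformity is exactly what is built into Lemmas \ref{concentration} and \ref{tinvar}, which is why the whole scheme closes.
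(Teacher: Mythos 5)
Your proposal reproduces the paper's argument essentially step for step: after integrating \eqref{fundam} over $\epsilon\in[a_n,b_n]$, Lemma~\ref{concentration} absorbs the overlap fluctuations into an $\mathcal{O}(a_n^{-2}n^{-\alpha})$ error, Lemma~\ref{tinvar} with $K_n=\Omega(n^2)$ collapses the $t$-integral, Lemma~\ref{freedom} licenses the adaptive choice $m_k^{(n)}(\epsilon)=\mathbb{E}[\langle q_{\bX,\bS}\rangle_{k,0;\epsilon}]$ which kills the residual bias, and non-negativity of $V_{K_n}$ together with $f_{\rm RS}(m_{\rm mf}^{(K_n)};\Delta)\ge\min_{m\ge 0}f_{\rm RS}(m;\Delta)$ yield the inequality, which passes to the limit via Lemma~\ref{thermolimit}, $b_n=2a_n$, $a_n^{-3}n^{-\alpha}\to0$, and the mean value theorem. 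One small formal slip: your displayed decomposition of $\langle(q_{\bX,\bS}-m_k)^2\rangle_{k,t;\epsilon}$ is not a pointwise identity, since the cross term $2\big(\mathbb{E}[\langle q_{\bX,\bS}\rangle_{k,t;\epsilon}]-m_k\big)\big(\langle q_{\bX,\bS}\rangle_{k,t;\epsilon}-\mathbb{E}[\langle q_{\bX,\bS}\rangle_{k,t;\epsilon}]\big)$ vanishes only after applying $\mathbb{E}$ — harmless here because everything in \eqref{fundam} already sits under the quenched expectation.
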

\begin{remark}[The overlap must concentrate]
Note that it is not obvious that one can find $\{m_{k}^{(n)}(\epsilon)\}_{k=1}^{K_n}$ which directly cancel the integrals in the fundamental identity \eqref{fundam} without 
using the overlap concentration of Lemma~\ref{concentration}.
Overlap concentration is a fundamental requirement of the above proof. This agrees with the statistical physics assumption that a 
necessary condition for the validity of the replica {\it symmetric} method is precisely the overlap concentration \cite{mezard1990spin}.
\end{remark}

In Appendix \ref{appendix-alternative-route} we present an alternative useful, albeit not completely rigorous, argument to obtain the lower bound.
\subsection{Proof of the fundamental sum rule} \label{proofdH}
In this paragraph we derive the formula \eqref{dHdt}.
We will need a simple but fundamental identity\footnote{This identity has been abusively called ``Nishimori identity'' in the statistical physics literature. One should however note that
it is a simple consequence of Bayes formula (see e.g appendix B of \cite{barbier_ieee_replicaCS}). The 
``true'' Nishimori identity \cite{nishimori01} concerns models with one extra feature, namely a 
gauge symmetry which allows to eliminate the input signal, and the expectation over $\bS$ in \eqref{nishibasic} 
can therefore be dropped (see e.g. \cite{korada2009exact}).} which is a straightforward consequence of the Bayes law. Let $\bX$, $\bX'$ be two i.i.d ``replicas'' drawn 
according to the product distribution $P_{k,t;\epsilon}(\bx|\boldsymbol{\theta}) P_{k,t;\epsilon}(\bx'|\boldsymbol{\theta})$. Recall 
the notation $\boldsymbol{\theta} \defeq \{\bs, \{\bz^{(k)}, \widetilde{\bz}^{(k)}\}_{k=1}^K, \widehat \bz\}$ for the quenched variables and $\mathbb{E} = 
\mathbb{E}_{\boldsymbol{\Theta}}$ for the expectation with respect to these.
Then for any function $g$ which {\it does not} depend on the Gaussian noise random variables,
\begin{align}\label{nishibasic}
\EE[\langle g(\bX, \bS) \rangle_{k,t;\epsilon}] = \EE[\langle g(\bX, \bX')\rangle_{k,t;\epsilon}].
\end{align}
We give a proof of this identity in Appendix \ref{appendix-Nishimori-Bayes} for completeness. 

Let us now compute $d f_{k,t;\epsilon}/dt$. Starting from \eqref{intH},
\eqref{perturb}, \eqref{intf} one obtains 
\begin{align}
\frac{d f_{k,t;\epsilon}}{dt}&=\frac{1}{n}\mathbb{E}\Big[\Big\langle\frac{d {\cal H}_{k,t;\epsilon}(\bX;\boldsymbol{\Theta})}{dt}\Big\rangle_{k,t;\epsilon}\Big]\nn
&= \frac{1}{n}\mathbb{E}\Big[\Big\langle \frac{d }{dt}h_{\rm mf}\Big(\bx, \bs,\widetilde{\bz}^{(k)},\frac{K\Delta}{t\,m_{k}} \Big)+ \frac{d }{dt}h\Big(\bx, \bs,{\bz}^{(k)},\frac{K\Delta}{1-t}\Big) \Big\rangle_{k,t;\epsilon}\Big]\nonumber\\
&= \frac{1}{nK\Delta}\mathbb{E}\Big[\Big\langle m_k\sum_{i=1}^n\Big(\frac{X_i^2}{2} - X_iS_i-\frac{X_i\widetilde Z_i^{(k)}}{2}\sqrt{\frac{K\Delta}{t\,m_k}} \Big)\nn
&\qquad- \sum_{i\le j=1}^n\Big(\frac{X_i^2X_j^2}{2n} - \frac{X_iX_jS_iS_j}{n}-\frac{X_iX_j Z_{ij}^{(k)}}{2\sqrt{n}}\sqrt{\frac{K\Delta}{1-t}} \Big) \Big\rangle_{k,t;\epsilon}\Big].\nonumber
\end{align}
Now we integrate by part the Gaussian noise using the elementary formula $\EE_{Z}[Zf(Z)] = \EE_Z[f'(Z)]$ where $f'$ is the derivative of $f$. This leads to
\begin{align}
\frac{d f_{k,t;\epsilon}}{dt}&= \frac{1}{nK\Delta}\mathbb{E}\Big[\Big\langle m_k\sum_{i=1}^n\Big(\frac{X_iX_i'}{2}- X_iS_i\Big)- \sum_{i\le j=1}^n\Big(\frac{X_iX_j X_i'X_j'}{2n}- \frac{X_iX_jS_iS_j}{n} \Big) \Big\rangle_{k,t;\epsilon}\Big],
\end{align}
where $\bX$, $\bX'$ are the two i.i.d replicas drawn according to \eqref{post}. An application of identity \eqref{nishibasic} then leads to
\begin{align}
\frac{d f_{k,t;\epsilon}}{dt}&= \frac{1}{2K\Delta}\mathbb{E}\Big[\Big\langle \frac{1}{n^2} \sum_{i\le j=1}^nX_iX_jS_iS_j -\frac{m_k}{n}\sum_{i=1}^n X_iS_i  \Big\rangle_{k,t;\epsilon}\Big] \nonumber\\
&= \frac{1}{2K\Delta}\mathbb{E}\Big[\Big\langle  \frac{1}{2n^2} \sum_{i,j=1}^n X_iX_jS_iS_j +\frac{1}{2n^2} \sum_{i=1}^n X_i^2S_i^2 -\frac{m_k}{n}\sum_{i=1}^n X_iS_i \Big\rangle_{k,t;\epsilon}\Big]. \label{last}
\end{align}
The Cauchy-Schwarz inequality and \eqref{nishibasic} imply that 
$\mathbb{E}[\langle n^{-2}\sum_{i=1}^n X_i^2S_i^2\rangle_{k,t;\epsilon}]  = {\cal O}(n^{-1})$ as long as $P_0$ has bounded fourth moment.
Indeed, by Cauchy-Schwarz
\begin{align}
 \mathbb{E}\Big[\Big\langle n^{-1} \sum_{i=1}^n X_i^2S_i^2\Big\rangle_{k,t;\epsilon}\Big] \leq 
 \Big(\mathbb{E}\Big[\Big\langle n^{-1}\sum_{i=1}^n X_i^4\Big\rangle_{k,t;\epsilon}\Big]\Big)^{1/2}\Big(\mathbb{E}\Big[n^{-1}\sum_{i=1}^n S_i^4\Big]\Big)^{1/2} 
\end{align}
and by \eqref{nishibasic} we have $\mathbb{E}[\langle X_i^4\rangle_{k,t;\epsilon}] = \mathbb{E}[ S_i^4]$ for $i=1,\dots, n$, thus we get
\begin{align}
\mathbb{E}\Big[\Big\langle n^{-1} \sum_{i=1}^n X_i^2S_i^2\Big\rangle_{k,t;\epsilon}\Big] \leq \mathbb{E}[S^4].
\end{align}
Finally, expressing the two other terms in \eqref{last} uisng the overlap $q_{\bx,\bs} = n^{-1}\sum_{i=1}^n x_is_i$ we find \eqref{dHdt}.
\section{Application to rank-one symmetric tensor estimation}\label{tensor}
The present method can be extended to cover rank-one symmetric tensor estimation, which amounts to treat the $p$-spin model on the Nishimori line. For binary spins the Guerra-Toninelli bound was proven in \cite{korada2009exact} for {\it any} value of $p$, the replica symmetric formula was proved in the whole phase diagram for $p=2$, and also in a restricted region away from the first order 
phase transition for $p \geq  3$. A complete proof for $p=2$ and general spins (that can thus be real) was achieved using the spatial coupling technique in \cite{XXT} and in \cite{2017arXiv170108010L} by a rigorous version of the cavity method. The case $p\geq 2$ and general spins has been treated using again the cavity method in \cite{2016arXiv161103888L}.

\subsection{Symmetric rank-one tensor estimation: Setting and main result}
\label{sym-est-subsec}
%
The symmetric tensor problem is very close to the matrix case presented in full details in sec.~\ref{stochInt} so we only sketch the main steps. 
The observed symmetric tensor $\bw \in \mathbb{R}^{n_1\times n_2\times\ldots \times n_p}$ is obtained through the following estimation model:
\begin{align}
w_{i_1i_2\ldots i_p} = \sqrt{\frac{(p-1)!}{n^{p-1}}} s_{i_1}s_{i_2}\ldots s_{i_p} + z_{i_1i_2\ldots i_p}\sqrt{\Delta}  \quad \text{for} \quad  1\le i_1\leq i_2\le \ldots \le i_p\le n, \label{symtensormodel}
\end{align}
where $\bs \in \mathbb{R}^n$ with i.i.d components distributed according to a known prior 
$P_0$, $\bZ \in \mathbb{R}^{n_1\times n_2\times\ldots \times n_p}$ is a symmetric Gaussian noise 
tensor with i.i.d (up to the symmetry constraint) ${\cal N}(0,1)$ entries. 
We note that, like in the case of symmetric matrix estimation of sec.~\ref{sec:partI}, the channel universality property (see remark \ref{rmk:univers}) is valid in 
the present setting. This means that by 
covering the case of additive white Gaussian noise \eqref{symtensormodel}, we actually treat a wide range of 
(component-wise) inference channels $$P_{\rm out}(w_{i_1i_2\ldots i_p}|\sqrt{(p-1)!n^{1-p}} s_{i_1}s_{i_2}\ldots s_{i_p}).$$ 
We refer to \cite{lesieur2015mmse,2017arXiv170100858L,2017arXiv170108010L} for more details on this point.
The free energy of the model is 
\begin{align}
f_n \defeq &-\frac{1}{n}\EE_{\bS,\bZ}\Big[ \ln\int \big\{\prod_{i=1}^n dx_i P_0(x_i)\big\}e^{-\mathcal{H}(\bx; \bS, \bZ)}\Big]\label{foriginal_symtensor}
\end{align}
where the Hamiltonian $\mathcal{H}(\bx; \bs, \bz)$ is 
\begin{align}
\frac{1}{\Delta}\sum_{i_1\le i_2\le \ldots \le i_p}\Big(\frac{(p-1)!}{2n^{p-1}} x_{i_1}^2\ldots x_{i_p}^2 - \frac{(p-1)!}{n^{p-1}} x_{i_1} s_{i_1}\ldots x_{i_p} s_{i_p} - \sqrt{\frac{\Delta(p-1)!}{n^{p-1}}} z_{i_1i_2\ldots i_p} x_{i_1}\ldots x_{i_p} \Big).
\end{align}
For a $P_0$ with bounded first four moments the free energy is related to the mutual information $I(\bS;\bW)$ through 
\begin{align}
\frac{I(\bS;\bW)}{n}=f_n+\frac{\EE[S^2]^p}{2p\Delta} + {\cal O}(n^{-1}).
\end{align}

We define the replica symmetric potential for symmetric tensor estimation as
\begin{align}
f_{\rm RS}(m;\Delta) \defeq \frac{(p-1)m^p}{2p\Delta} + f_{{\rm den}}\big(\Sigma(m;\Delta)\big) \label{RSpot_symtensor}
\end{align}
where $\Sigma(m;\Delta)^2 \defeq \Delta/m^{p-1}$ and $f_{{\rm den}}(\Sigma)$ is given by \eqref{MFf}. Next we prove the RS formula.
\begin{thm}[RS formula for symmetric rank-one tensor estimation] \label{RS_symtensor}
Fix $\Delta > 0$. For any $P_0$ with bounded support, the asymptotic free energy of the symmetric tensor estimation model \eqref{symtensormodel} verifies
\begin{align}\label{RS-sym-tensor-case}
\lim_{n\to \infty}f_n= \min_{m\ge 0}f_{\rm RS}(m;\Delta).	
\end{align}
\end{thm}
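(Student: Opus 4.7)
The proof follows the adaptive interpolation blueprint of Section~\ref{stochInt} with natural modifications for the $p$-body interaction. I would define a $(k,t)$-interpolating Hamiltonian analogous to \eqref{intH}, keeping the scalar mean-field term $h_{\rm mf}$ from \eqref{hmf} unchanged but replacing $h$ in \eqref{h} by its $p$-tensor version (with coupling $(p-1)!/n^{p-1}$ summed over ordered $p$-tuples). The crucial new choice is to assign to the $k$-th mean-field block the Gaussian noise variance $K\Delta/m_k^{p-1}$: since $\Sigma(m;\Delta)^2 = \Delta/m^{p-1}$, Gaussian stability applied to the $K$ independent mean-field noises produces an equivalent scalar denoising model with effective signal-to-noise $m_{\rm eff}^{p-1} \defeq K^{-1}\sum_{k=1}^K m_k^{p-1}$, yielding $f_{K,1;0} = f_{\rm den}(\Sigma(m_{\rm eff};\Delta))$. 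The initial model is recovered at $(k=1,t=0)$ by the same Gaussian-stability argument as in \eqref{f10isf}.

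Next I would compute $df_{k,t;\epsilon}/dt$ by Gaussian integration by parts combined with the Nishimori identity \eqref{nishibasic} applied to $p$-fold products. This gives, up to $\mathcal{O}((nK)^{-1})$ corrections controlled by the bounded support of $P_0$,
\begin{align}
\frac{df_{k,t;\epsilon}}{dt} = \frac{1}{2pK\Delta}\,\mathbb{E}\bigl[\langle q_{\bX,\bS}^{p} - p\,m_k^{p-1}\,q_{\bX,\bS}\rangle_{k,t;\epsilon}\bigr] + \mathcal{O}((nK)^{-1}),
\end{align}
with the extra factor $1/p$ arising from the tensorial symmetry of the sum. Using the elementary polynomial identity
\begin{align}
\mathcal{R}_p(q,m)\defeq q^p - p\,m^{p-1}q + (p-1)m^p = (q-m)^2\sum_{j=0}^{p-2}(j+1)\,q^{p-2-j}\,m^{j},
\end{align}
and telescoping in $k$, I would obtain the tensor analogue of the sum rule \eqref{fundam}:
\begin{align}
f_{1,0;\epsilon} = (f_{K,1;\epsilon}-f_{K,1;0}) + f_{\rm den}(\Sigma(m_{\rm eff};\Delta)) + \frac{p-1}{2p\Delta K}\sum_{k=1}^K m_k^p - \mathcal{E}_K + \mathcal{O}(n^{-1}),
\end{align}
where $\mathcal{E}_K \defeq (2pK\Delta)^{-1}\sum_{k}\int_0^1 dt\,\mathbb{E}[\langle \mathcal{R}_p(q_{\bX,\bS},m_k)\rangle_{k,t;\epsilon}]$. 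Because $P_0$ has bounded support, the polynomial factor in the factorization of $\mathcal{R}_p$ is bounded on the relevant range by some $C_p$, so $|\mathcal{R}_p(q,m)| \le C_p(q-m)^2$.

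For the upper bound I would set all $m_k$ equal to some $m^\star$ minimizing $f_{\rm RS}(m;\Delta)$, so that $m_{\rm eff} = m^\star$ and the explicit terms collapse to $(f_{K,1;\epsilon}-f_{K,1;0}) + f_{\rm RS}(m^\star;\Delta)$. Non-negativity of $\mathbb{E}[\langle\mathcal{R}_p(q_{\bX,\bS},m^\star)\rangle]$ --- the classical Guerra-Toninelli sign on the Nishimori line --- follows from the Nishimori correlation identities as in \cite{korada2009exact}, and the analogue of Lemma~\ref{thermolimit} then yields $\limsup_n f_n \le \min_{m\ge 0}f_{\rm RS}(m;\Delta)$. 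For the lower bound I would take the adaptive choice $m_k^{(n)}(\epsilon) = \mathbb{E}[\langle q_{\bX,\bS}\rangle_{k,0;\epsilon}] \ge 0$ (non-negativity is ensured by Nishimori, so $m_k^{p-1}$ is a legitimate signal-to-noise parameter). The tensor versions of Lemma~\ref{concentration} and Lemma~\ref{tinvar} go through essentially verbatim, since they depend only on bounded support and the generic Bayesian structure. Coupled with $|\mathcal{R}_p(q,m)|\le C_p(q-m)^2$, they imply $\int_{a_n}^{2a_n}d\epsilon\,\mathcal{E}_K = o(1)$ when $a_n\to 0$ slowly and $K_n = n^2$, exactly as for \eqref{fundam_afterConc_aftertinvar_2}. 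A Jensen inequality applied to the convex map $u\mapsto u^{p/(p-1)}$ (valid for $p\ge 2$) then gives
\begin{align}
\frac{1}{K}\sum_{k=1}^K m_k^p = \frac{1}{K}\sum_{k=1}^K (m_k^{p-1})^{p/(p-1)} \ge \Bigl(\frac{1}{K}\sum_{k=1}^K m_k^{p-1}\Bigr)^{p/(p-1)} = m_{\rm eff}^{p},
\end{align}
so that the remaining contributions of the sum rule are bounded below by $f_{\rm RS}(m_{\rm eff};\Delta) \ge \min_{m\ge 0}f_{\rm RS}(m;\Delta)$, closing the lower bound.

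The main obstacle I anticipate, relative to the matrix case, is precisely the mismatch between the exponent $p$ in the Guerra correction $m_k^p$ and the exponent $p-1$ in the mean-field argument $m_{\rm eff}^{p-1}$: for heterogeneous $\{m_k\}$ the two cannot be matched exactly to a single RS potential evaluated at a common argument, and the Jensen step above is what closes the gap. A secondary delicate point is the non-negativity of the Guerra term for odd $p$, where $\mathcal{R}_p(q,m)$ is not pointwise non-negative when $q<0$; this is not required for the lower bound (concentration reduces everything to the quadratic $(q-m_k)^2$) but is genuinely needed for the upper bound and relies on Nishimori-type correlation inequalities rather than on a pointwise sign.
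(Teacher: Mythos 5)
Your overall architecture matches the paper exactly: the same $p$-tensor Hamiltonian with mean-field noise variance $K\Delta/m_k^{p-1}$, Gaussian stability to identify the endpoints, the derivative formula with the $1/p$ prefactor, the telescoping sum rule, the adaptive choice $m_k^{(n)}(\epsilon)=\mathbb{E}[\langle q_{\bX,\bS}\rangle_{k,0;\epsilon}]$, and the Jensen argument on $u\mapsto u^{p/(p-1)}$ to control $V_{K,p}$. Your lower bound is sound and is presented in a slightly different but equivalent way from the paper's: you bound $\vert\mathcal{R}_p(q,m_k)\vert\le C_p(q-m_k)^2$ via the polynomial factorization and then kill $\mathcal{E}_K$ by overlap concentration, whereas the paper first uses concentration to replace $\mathbb{E}[\langle q^p\rangle]$ by $\mathbb{E}[\langle q\rangle]^p$ in the sum rule and then observes that the adaptive choice makes $\mathbb{E}[\langle q\rangle]^p - p\,m_k^{p-1}\mathbb{E}[\langle q\rangle] + (p-1)m_k^p$ vanish identically; the two routes buy the same thing.

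The genuine gap is in the upper bound for odd $p$. You correctly flag that $\mathcal{R}_p(q,m)$ is not pointwise non-negative when $q<0$, but your proposed fix --- appealing to ``Nishimori-type correlation inequalities as in \cite{korada2009exact}'' --- does not close it in the generality of Theorem~\ref{RS_symtensor}. Korada and Macris establish the sign of the Guerra correction only for \emph{binary} spins, where gauge symmetry eliminates the signal and GKS-type correlation inequalities are available; for a general bounded prior $P_0$ neither tool applies, and there is no known purely algebraic Nishimori argument giving $\mathbb{E}[\langle q^p\rangle]\ge\mathbb{E}[\langle q\rangle]^p$ for odd $p$. Note also that your lower-bound device (the bound $\vert\mathcal{R}_p(q,m^\star)\vert\le C_p(q-m^\star)^2$ plus concentration) does not help here: concentration controls $(q-\mathbb{E}[\langle q\rangle_{k,t;\epsilon}])^2$, not $(q-m^\star)^2$ for an arbitrary fixed $m^\star={\rm argmin}\,f_{\rm RS}$, so that route is genuinely unavailable for the constant choice of trial parameters. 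The paper's actual resolution, which you should adopt, is to invoke the same overlap concentration \emph{for the upper bound as well}: after replacing $\mathbb{E}[\langle q_{\bX,\bS}^p\rangle_{k,t;\epsilon}]$ by $\mathbb{E}[\langle q_{\bX,\bS}\rangle_{k,t;\epsilon}]^p$ (with $o(1)$ error after the $\epsilon$-integration, exactly as in \eqref{fund_symtensor2}), the Nishimori identity gives $\mathbb{E}[\langle q_{\bX,\bS}\rangle_{k,t;\epsilon}]\ge 0$ and convexity of $x\mapsto x^p$ on $\mathbb{R}_+$ then yields the desired sign of the remainder, for any $p\ge 2$. Once you route the upper bound through concentration rather than through correlation inequalities, your proof is complete.
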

Again, we note that the bounded support property of $P_0$ is only needed for concentration proofs and does not impose any upper limit on the size of the support.
We believe this can be removed by a limiting process as long as $P_0$ has bounded first four moments. 
\subsection{Sketch of proof of the replica symmetric formula}
\label{sec:partII_symtensor}
We prove Theorem \ref{RS_symtensor}. Since this proof is similar to the one of Theorem \ref{thm1} for the matrix case, we only give the main ideas. 
The starting point is the introduction of a
(perturbed) $(k,t)$--interpolating Hamiltonian:
\begin{align}\label{intH_tensor}
{\cal H}_{k,t;\epsilon}(\bx;\boldsymbol{\theta})&\defeq \sum_{k'=k+1}^K h\Big(\bx, \bs,\bz^{(k')},K\Delta\Big)+ \sum_{k'=1}^{k-1}  h_{\rm mf}\Big(\bx, \bs,\widetilde{\bz}^{(k')},K\,\Sigma(m_k;\Delta)^2\Big) \nonumber\\
&\quad+ h\Big(\bx, \bs,\bz^{(k)},\frac{K\Delta}{1-t}\Big)+ h_{\rm mf}\Big(\bx, \bs,\widetilde{\bz}^{(k)},\frac{K\,\Sigma(m_k;\Delta)^2}{t} \Big) + \epsilon \sum_{i=1}^{n}\Big(\frac{x_i^2}{2} - x_i s_{i} -\frac{x_{i}\widehat z_{i}}{\sqrt{\epsilon}}\Big),
\end{align}
where the trial parameters $\{m_{k}\}_{k=1}^{K}$ are to be fixed later and
\begin{align}
h(\bx, \bs,\bz,\sigma^2) &\defeq\frac{1}{\sigma^2}\sum_{i_1\le i_2\le \ldots \le i_p} \Big(\frac{(p - 1)!}{2n^{p-1}} x_{i_1}^2\ldots x_{i_p}^2 - \frac{(p - 1)!}{n^{p-1}} x_{i_1} s_{i_1}\ldots x_{i_p} s_{i_p} \nn
&\qquad\qquad- \sigma\sqrt{\frac{(p - 1)!}{n^{p-1}}}z_{i_1i_2\ldots i_p} x_{i_1}\ldots x_{i_p} \Big), \\
h_{\rm mf}(\bx, \bs,\widetilde{\bz},\sigma^2) &\defeq \frac{1}{\sigma^2}\sum_{i=1}^{n}\Big(\frac{x_{i}^2}{2} - x_{i} s_{i}- \sigma \widetilde z_{i} x_{i} \Big).
\label{hmf_tensor}
\end{align}
The associated $(k,t)$--interpolating model, Gibbs expectation and $(k,t)$--interpolating free energy are defined respectively by \eqref{post}, \eqref{Gibbs} and \eqref{intf}. 
Using the stability property of the Gaussian noise variables, one can check that the intial and final $(k,t)$--interpolating models are such that 
\begin{align}
f_{1,0;0}&=f_n, \label{114_symtensor}	\\
f_{K,1;0}&=f_{\rm den}\big(\Sigma_{\rm mf}(\{m_k\}_{k=1}^K;\Delta)\big), \label{115_symtensor}
\end{align}
where
\begin{align}
\Sigma_{\rm mf}(\{m_k\}_{k=1}^K;\Delta)^{-2} \defeq \frac{1}{K}\sum_{k=1}^K \Sigma(m_k;\Delta)^{-2} = \frac{1}{\Delta K}\sum_{k=1}^K m_k^{p-1}	.
\end{align}
By a trivial generalization of the calculations of sec.~\ref{proofdH}, we obtain the variation of the $(k,t)$--interpolating free energy:
\begin{align}
\frac{d f_{k,t;\epsilon}}{dt} = \frac{1}{2p\Delta K}\EE[\langle q_{\bX,\bS}^{p} - p\, m_k^{p-1} q_{\bX,\bS}^{} \rangle_{k,t;\epsilon}] + {\cal O}((nK)^{-1}), \label{61_symtensor}
\end{align}
where the overlap is again $q_{\bx,\bs}^{} \defeq n^{-1}\sum_{i}^{n} x_{i}s_{i}$. This result holds as long as $P_0$ has finite first four moments.
Proceeding similarly to section \ref{secFchange} we get the sum rule 
\begin{align}
 f_{1,0;\epsilon} & = (f_{K,1;\epsilon} - f_{K,1;0}) + f_{\rm den}(\Sigma_{\rm mf}(\{m_k\}_{k=1}^K;\Delta)) + \frac{p-1}{2p\Delta K}\sum_{k=1}^K m_k^p
 \nonumber\\
&\qquad- \frac{1}{2p\Delta K}\sum_{k=1}^K \int_{0}^1 dt\Big(\EE[\langle q_{\bX,\bS}^{p}\rangle_{k,t;\epsilon}] - p\, m_k^{p-1} 
\EE[\langle q_{\bX,\bS}^{}\rangle_{k,t;\epsilon}] + (p-1)m_k^p\Big) + \mathcal{O}(n^{-1})
\nonumber \\
&
=
(f_{K,1;\epsilon} - f_{K,1;0}) + f_{\rm den}(\Sigma_{\rm mf}(\{m_k\}_{k=1}^K;\Delta)) 
+ \frac{p-1}{2p\Delta^{-\frac{1}{p-1}}}\Big(\Sigma_{\rm mf}(\{m_k\}_{k=1}^K;\Delta))^{-2}\Big)^{\frac{p}{p-1}}
\nonumber \\ & \qquad
+ \frac{p-1}{2p\Delta}V_{K,p}(\{m_k\}_{k=1}^K)
 \nonumber\\
&\qquad- \frac{1}{2p\Delta K}\sum_{k=1}^K \int_{0}^1 dt\Big(\EE[\langle q_{\bX,\bS}^{p}\rangle_{k,t;\epsilon}] - p\, m_k^{p-1} 
\EE[\langle q_{\bX,\bS}^{}\rangle_{k,t;\epsilon}] + (p-1)m_k^p\Big) + \mathcal{O}(n^{-1})
\label{fund_symtensor}
%
\end{align}
where 
\begin{align}
 V_{K,p}(\{m_k\}_{k=1}^K) \defeq \frac{1}{K}\sum_{k=1}^K m_k^p - \Big(\frac{1}{K}\sum_{k=1}^K m_k^{p-1}\Big)^{\frac{p}{p-1}}.
\end{align}
Note that $V_{K,p}$ is non-negative by Jensen's inequality applied to the convex function $x\in \mathbb{R}_+ \mapsto x^{\frac{p}{p-1}}\in \mathbb{R}_+$ (here the trial parameters are all non-negative). 
For $p=2$ the identity \eqref{fund_symtensor} reduces to \eqref{fundam}.

The proof of Theorem \ref{RS_symtensor} proceeds from this fundamental sum rule in much the same way as in the case $p=2$ of sections \ref{secUpper} and \ref{seclower}. 
Here we only give a brief summary of the arguments insisting only on the essential differences. We start with the upper bound.

For the case of {\it even} $p$ it is straightforward to derive an upper bound. One chooses all trial parameters as $m_k = m_*= {\rm argmin}_{m\geq 0} f_{\rm RS}(m; \Delta)$, $k=1,\cdots, K$, which yields 
$V_{K,p}(\{m_k\}_{k=1}^K) =0$. From there one can use the classic argument of Guerra-Toninelli: By convexity of $x\in \mathbb{R} \mapsto x^p\in \mathbb{R}_+$ for even $p$ we see that 
$q_{\bX,\bS}^{p} - p\, m_*^{p-1} q_{\bX,\bS} + (p-1)m_*^p \geq 0$, which implies the upper bound analogous to \eqref{24}. Since Lemma \ref{thermolimit} holds verbatim here, by taking a sequence 
$\epsilon_n\to 0_+$, $n\to +\infty$, we deduce as before the upper bound $\limsup_{n\to +\infty} f_n \leq \min_{m\geq 0}f_{\rm RS}(m;\Delta)$ (when $P_0$ has finite first four moments). 

For the case of {\it odd} $p$ we cannot immediately apply the convexity argument. We first need to apply a concentration result. As it will become clear from its proof, Lemma \ref{concentration} is 
generic and the same statement applies to the
present tensor setting.\footnote{Here we use Lemma \ref{concentration} but a weaker form of concentration is enough for this argument, 
namely it suffices to control the following type of ``thermal'' fluctuation 
$\mathbb{E}[\langle q_{\bX, \bS}^2\rangle_{k,t,\epsilon} - \langle q_{\bX, \bS}^{}\rangle_{k,t,\epsilon}^2]$. Moreover it is not necessary to allow for an $\epsilon$-dependence in $m_k$'s.} 
Therefore for any sequence $K_n\to +\infty$, $0<a_n <b_n<1$, $b_n\to 0$, and 
trial parameters $\{m_k= m_k^{(n)}(\epsilon)\}_{k=1}^K$ which are non-decreasing functions of $\epsilon$
we have
\begin{align}
 & \int_{a_n}^{b_n}d\epsilon\, f_{1,0;\epsilon}  = \int_{a_n}^{b_n} d\epsilon\, (f_{K_n,1;\epsilon} - f_{K_n,1;0}) 
 + \int_{a_n}^{b_n} d\epsilon\,\biggl\{f_{\rm den}(\Sigma_{\rm mf}(\{m_k\}_{k=1}^{K_n};\Delta)) 
\nonumber \\ & 
 + \frac{p-1}{2p\Delta^{-\frac{1}{p-1}}}\Big(\Sigma_{\rm mf}(\{m_k\}_{k=1}^{K_n};\Delta))^{-2}\Big)^{\frac{p}{p-1}}
+ \frac{p-1}{2p\Delta}V_{K_n,p}(\{m_k\}_{k=1}^{K_n})\biggr\}
 \nonumber\\
&- \frac{1}{2p\Delta}\int_{a_n}^{b_n} d\epsilon\,\frac{1}{K_n}\sum_{k=1}^{K_n} \int_{0}^1 dt\Big(\EE[\langle q_{\bX,\bS}^{}\rangle_{k,t;\epsilon}]^{p} - p\, m_k^{p-1} 
\EE[\langle q_{\bX,\bS}^{}\rangle_{k,t;\epsilon}] + (p-1)m_k^p\Big) + \mathcal{O}(a_n^{-2}n^{-\alpha}),
\label{fund_symtensor2}
%
\end{align}
for $0<\alpha <1/4$ and ${\cal O}(a_n^{-2}n^{-\alpha})$ uniform in $K_n$, $b_n$, and trial parameters. Furthermore by the Nishimori 
identity \eqref{nishibasic} we see that $\EE[\langle q_{\bX,\bS}^{}\rangle_{k,t;\epsilon}] \geq 0$,
so convexity of $x\in \mathbb{R}_+ \mapsto x^p\in \mathbb{R}_+$ shows that the term under the integral is positive, 
which allows to deduce the upper bound as above (of course this argument works for any $p$ even or odd). 

let us finally briefly discuss the lower bound. Lemma \ref{tinvar} and its proof hold for the tensor setting as well which means that in \eqref{fund_symtensor2} we can replace 
$\EE[\langle q_{\bX,\bS}^{}\rangle_{k,t;\epsilon}]$ by $\EE[\langle q_{\bX,\bS}^{}\rangle_{k,0;\epsilon}]$. This then allows to choose (adapt) the sequence of trial parameters as
$\{m_k=m_k^{(n)}(\epsilon)\}$ where 
$m_k = \EE[\langle q^{}_{\bX,\bS}\rangle_{k,0;\epsilon}]$, $k=1, \cdots, K_n$ just as in Lemma \ref{freedom}. Thus we obtain 
\begin{align}
  \int_{a_n}^{b_n}d\epsilon\,f_{1,0;\epsilon} &= \int_{a_n}^{b_n}d\epsilon\,(f_{K_n,1;\epsilon} - f_{K_n,1;0}) 
 \nonumber \\ &
 \qquad+ 
 \int_{a_n}^{b_n}d\epsilon\,\biggl\{f_{\rm den}(\Sigma_{\rm mf}(\{m_k\}_{k=1}^{K_n};\Delta)) 
+ \frac{p-1}{2p\Delta^{-\frac{1}{p-1}}}\Big(\Sigma_{\rm mf}(\{m_k\}_{k=1}^{K_n};\Delta))^{-2}\Big)^{\frac{p}{p-1}}
 \nonumber \\ & 
 \qquad+ \frac{p-1}{2p\Delta} V_{K_n,p}(\{m_k^{(n)}\}_{k=1}^{K_n})\biggr\} + \mathcal{O}(a_n^{-2}n^{-\alpha})
 \nonumber \\ &
 \geq 
 \int_{a_n}^{b_n}d\epsilon\,(f_{K_n,1;\epsilon} - f_{K_n,1;0}) + (b_n-a_n)\min_{\Sigma\geq 0}\Big\{f_{\rm den}(\Sigma) 
+ \frac{p-1}{2p\Delta^{-\frac{1}{p-1}}}\big(\Sigma^{-2}\big)^{\frac{p}{p-1}}\Big\} + \mathcal{O}(a_n^{-2}n^{-\alpha})
\nonumber \\ &
= 
\int_{a_n}^{b_n}d\epsilon\,(f_{K_n,1;\epsilon} - f_{K_n,1;0}) + (b_n-a_n)\min_{m\geq 0} f_{\rm RS}(m;\Delta) + \mathcal{O}(a_n^{-2}n^{-\alpha})
\end{align}
where we used the non-negativity of $V_{K, p}$ to get the inequality and the change of variable $\Sigma^2 = \Delta/m^{p-1}$ to get the last line. The usual limiting argument, taking
$b_n = 2 a_n$ with $a_n\to 0$ slowly enough so that $a_n^{-3}n^{-\alpha} \to 0$ when $n\to +\infty$,
implies  $\liminf_{n\to +\infty} f_n \geq \min_{m\geq 0} f_{\rm RS}(m;\Delta)$. We have proven this lower bound under the assumption of boundedness of the support of $P_0$ (used in the proof 
of overlap concentration)

Combining the upper and lower bounds yields Theorem \ref{RS_symtensor}. 
As a final note we remark that the alternative route to the lower bound proposed in Appendix \ref{appendix-alternative-route} for the matrix case also holds essentially unchanged here. 

\section{Application to Gaussian random linear estimation}\label{RLE_section}
\subsection{Gaussian random linear estimation: Setting and result}
In Gaussian random linear estimation (RLE) one is interested in reconstructing a signal $\bs = [s_i]_{i=1}^n \in  \mathbb{R}^n$ from few noisy
measurements $\by =  [y_\mu]_{\mu=1}^m \in  \mathbb{R}^m$ obtained from the projection of $\bs$ by a random Gaussian \emph{measurement matrix} $\bm{\phi} = [\phi_{\mu i}]_{\mu,i=1}^{m, n} \in  \mathbb{R}^{m \times n}$ with i.i.d entries $\phi_{\mu i} \sim \mathcal{N}(0,1/n)$. The \emph{measurement rate} is $\alpha \defeq  m/n$. We consider i.i.d additive white Gaussian noise of known variance $\Delta$. Let
the standardized noise components be $Z_\mu  \sim  \mathcal{N}(0,1)$, $\mu = 1,\ldots, m$. Then the measurement model is
\be \label{eq:CSmodel}
\by = \bm{\phi}{\bs} + \bz \sqrt{\Delta}, \qquad \text{or}\qquad y_{\mu} = \sum_{i=1}^n\phi_{\mu i} s_i + z_\mu \sqrt{\Delta} \quad \text{for}\quad 1\leq \mu\leq m.
\ee
The signal has i.i.d components distributed according to a discrete prior $P_0(s_i)  =  \sum_{b=1}^B p_b \delta(s_i  -  a_b)$ with a finite number $B$ of terms and $\max_{b} \vert a_{b}\vert \leq s_{\rm max}$. Note that the more general case where the signal has i.i.d vectorial components, as considered in \cite{barbier_allerton_RLE,barbier_ieee_replicaCS}, can be tackled with our proof technique exactly in the same way but we consider the scalar case for the sake of notational simplicity. 

The free energy of the RLE model \eqref{eq:CSmodel} (which is also equal to the mutual information per component $I(\bS;\bY)/n$ between the noisy observation and the signal) is defined as
\begin{align} \label{eq:true_mutual_info}
f_n &\defeq -\frac{1}{n} \EE_{\bS,\bZ,\bm{\Phi}}\Big[\ln\int \big\{\prod_{i=1}^ndx_i P_0(x_i)\big\}
\exp\Big\{-\frac{1}{\Delta}\sum_{\mu=1}^m\Big(\frac{1}{2}[\bm{\Phi}(\bx - \bS)]_\mu^2 -[\bm{\Phi}(\bx - \bS)]_\mu Z_\mu\sqrt{\Delta}\Big)\Big\}\Big],
\end{align}
where  $[\bm{\phi}(\bx - \bs)]_\mu \defeq  \sum_{i=1}^n\bm{\phi}_{\mu i} (x_i - s_i)$. Let 
\begin{align}
\Sigma(E;\Delta)^{-2} &\defeq \frac{\alpha}{\Delta+ E},\label{eq:defSigma2}\\
\psi(E;\Delta) &\defeq \frac{\alpha}{2}\Big(\ln\Big(1+ \frac{E}{\Delta}\Big)- \frac{E}{\Delta+E}\Big). \label{psi}
\end{align}
Define the following RS potential:
\begin{align}
f_{\rm RS}(E;\Delta) \defeq \psi(E;\Delta) + i_{\rm den}\big(\Sigma(E;\Delta)\big), 
\label{eq:rs_mutual_info}
\end{align}
where $i_{\rm den}(\Sigma)=I(S;S+\widetilde Z\,\Sigma)$ is the mutual information of a scalar Gaussian denoising model $y = s + \widetilde z\,\Sigma$ with $S \sim P_0$, $\widetilde Z \sim {\cal N}(0,1)$, and $\Sigma^{-2}$ an effective signal to noise ratio:
\begin{align}
i_{\rm den}(\Sigma) \defeq -\mathbb{E}_{S,\widetilde Z}\Big[\ln\int dx P_0(x) e^{-\frac{1}{\Sigma^2}\big(\frac{(x - S)^2}{2} - (x - S)\widetilde Z\,\Sigma \big)}\Big].
\end{align}
We will prove the RS formula (already proven in \cite{barbier_allerton_RLE,barbier_ieee_replicaCS,private,ReevesP16}):
\begin{thm}[RS formula for Gaussian RLE]\label{rsformulaRLE}
Fix $\Delta > 0$. For any discrete $P_0$, the asymptotic free energy of the RLE model \eqref{eq:CSmodel} verifies
\begin{align}
\lim_{n\to \infty} f_n = \min_{E\ge0} f_{\rm RS}(E; \Delta).
\end{align}
\end{thm}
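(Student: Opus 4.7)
The plan is to reproduce the adaptive interpolation scheme of Section~\ref{stochInt}, with the role of $m_k$ now played by a non-negative parameter $E_k$ that encodes an effective MMSE and with the scalar side-channels tuned so that their noise variance matches $\Sigma(E_k;\Delta)^2=(\Delta+E_k)/\alpha$. Concretely one writes a $(k,t)$--interpolating Hamiltonian
\begin{align*}
\mathcal{H}_{k,t;\epsilon}(\bx;\boldsymbol\theta)
&\defeq \sum_{k'=k+1}^K h\big(\bx,\bs,\bm{\phi}^{(k')},\bz^{(k')},K\Delta\big)
+ \sum_{k'=1}^{k-1} h_{\rm mf}\big(\bx,\bs,\widetilde{\bz}^{(k')},K\,\Sigma(E_{k'};\Delta)^2\big) \\
&\quad + h\Big(\bx,\bs,\bm{\phi}^{(k)},\bz^{(k)},\frac{K\Delta}{1-t}\Big)
+ h_{\rm mf}\Big(\bx,\bs,\widetilde{\bz}^{(k)},\frac{K\,\Sigma(E_k;\Delta)^2}{t}\Big)
+ \epsilon\text{-perturbation},
\end{align*}
where $h$ is the full RLE-type likelihood attached to an independent copy $\bm{\phi}^{(k')}$ of the measurement matrix (entries i.i.d.\ $\mathcal{N}(0,1/n)$) and noise $\bz^{(k')}$, while $h_{\rm mf}$ is a scalar Gaussian denoising term in the spirit of \eqref{hmf}. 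Gaussian stability applied separately to the $\bm\phi^{(k')}$'s (column by column) and to the scalar noises then gives $f_{1,0;0}=f_n$ and $f_{K,1;0}=i_{\rm den}(\bar\Sigma^{(K)})$, where $(\bar\Sigma^{(K)})^{-2}\defeq K^{-1}\sum_k \Sigma(E_k;\Delta)^{-2}$; this identifies the two endpoints with, respectively, the target free energy and the scalar-denoising piece of $f_{\rm RS}$.

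Next I would compute $df_{k,t;\epsilon}/dt$ by Gaussian integration by parts in $\bz^{(k)}$, $\widetilde{\bz}^{(k)}$ \emph{and} $\bm{\phi}^{(k)}$, followed by the Nishimori identity \eqref{nishibasic}. The genuinely new ingredient compared with the matrix and tensor cases is the IBP in $\bm{\phi}^{(k)}$: differentiating in a single entry $\phi_{\mu i}^{(k)}$ (variance $1/n$) transforms the quartic contribution $\mathbb{E}[\langle[\bm{\phi}^{(k)}(\bX-\bs)]_\mu^2\rangle]$ into a quantity expressible through the MSE $\mathbb{E}[\langle n^{-1}\|\bX-\bs\|^2\rangle_{k,t;\epsilon}]$, which by Nishimori equals $\mathbb{E}[S^2]-\mathbb{E}[\langle q_{\bX,\bS}\rangle_{k,t;\epsilon}]$ with $q_{\bX,\bS}\defeq n^{-1}\bX\cdot\bS$. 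After simplification, integration in $t\in[0,1]$ and telescoping over $k$ as in \eqref{telescopf} yield a sum rule of the form
\begin{align*}
f_{1,0;\epsilon}
&= (f_{K,1;\epsilon}-f_{K,1;0})
+ f_{\rm RS}(\bar E^{(K)};\Delta)
+ \Psi_K(\{E_k\}) \\
&\quad - \frac{1}{2K}\sum_{k=1}^K\int_0^1 dt\,\frac{\alpha\,\mathbb{E}\big[\big\langle(\mathcal{E}_{\bX,\bS}-E_k)^2\big\rangle_{k,t;\epsilon}\big]}{\Delta+E_k}
+ \mathcal{O}(n^{-1}),
\end{align*}
with $\bar E^{(K)}\defeq K^{-1}\sum_k E_k$, $\mathcal{E}_{\bX,\bS}\defeq \mathbb{E}[S^2]-q_{\bX,\bS}$ the per-instance MSE inside the Gibbs bracket, and $\Psi_K(\{E_k\})\geq 0$ a Jensen-type deficit that vanishes when the $E_k$'s are equal; the explicit last term is manifestly non-positive, and the deterministic pieces reassemble into $\psi(E_k;\Delta)$ through the identity $d\psi/dE=\tfrac12\alpha E/(\Delta+E)^2$.

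With the sum rule in place both bounds follow the template of Sections~\ref{secUpper}--\ref{seclower}. For the upper bound I choose the constant path $E_k\equiv E_\ast={\rm argmin}_{E\geq 0}f_{\rm RS}(E;\Delta)$, which kills $\Psi_K$ and lets the non-positive remainder be dropped; combined with the analogue of Lemma~\ref{thermolimit} along a sequence $\epsilon_n\to 0_+$ this yields $\limsup_{n\to\infty} f_n\leq \min_{E\geq 0} f_{\rm RS}(E;\Delta)$. For the lower bound I would first establish the RLE analogues of Lemma~\ref{concentration} (overlap concentration, now under joint expectation over $\bS,\bZ,\widetilde\bZ^{(\cdot)},\widehat\bZ$ \emph{and} $\bm{\Phi}^{(\cdot)}$) and of Lemma~\ref{tinvar} (weak $t$-dependence at fixed $k$, for $K_n$ growing like $n^2$), and then adaptively set $E_k=\mathbb{E}[S^2]-\mathbb{E}[\langle q_{\bX,\bS}\rangle_{k,0;\epsilon}]$ exactly as in Lemma~\ref{freedom} to cancel the quadratic remainder to leading order. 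Averaging the sum rule over $\epsilon\in[a_n,2a_n]$ with $a_n\to 0$ slowly enough that $a_n^{-3}n^{-\alpha}\to 0$, the mean-value theorem delivers $\liminf_{n\to\infty} f_n\geq \min_{E\geq 0} f_{\rm RS}(E;\Delta)$.

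The step I expect to be hardest is the derivative calculation and the identification of the sum rule in its cleanest form: the measurement matrix $\bm{\phi}^{(k)}$ contributes an additional source of Gaussian randomness on which IBP must act, and the quartic terms $[\bm{\phi}^{(k)}(\bX-\bs)]_\mu^2$ collapse to a squared difference against $E_k$ only after careful Nishimori bookkeeping that matches exactly the derivative of $\psi(E;\Delta)$. A secondary technical point is extending the overlap-concentration proof of Section~\ref{proofConc} to accommodate the additional $\bm{\Phi}$-randomness; this should follow from Gaussian concentration of Lipschitz functions since, for discrete bounded $P_0$, $f_n$ is $\mathcal{O}(1/\sqrt n)$-Lipschitz in each column of $\bm{\Phi}$. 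Once these two ingredients are in place, the conclusion proceeds in strict parallel with the matrix and tensor cases.
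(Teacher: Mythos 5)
Your outline of the upper/lower-bound strategy, the $\epsilon$-perturbation and the adaptive choice of the path match the paper's template, but two structural choices in your interpolating Hamiltonian break the argument. First, you introduce \emph{independent copies} $\bm\phi^{(k')}$ of the measurement matrix in the RLE-type terms and invoke ``Gaussian stability'' to identify $f_{1,0;0}$ with $f_n$. Stability combines $K$ independent Gaussians into one only for \emph{linear} combinations, whereas $h$ is quadratic in $\bm\phi^{(k')}$: the sum $\frac{1}{2K\Delta}\sum_{k'}[\bm\phi^{(k')}\bar\bx]_\mu^2$ is not distributed as $\frac{1}{2\Delta}[\bm\phi\bar\bx]_\mu^2$ (indeed it concentrates around $\|\bar\bx\|^2/(2n\Delta)$ as $K\to\infty$). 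So your $(k=1,t=0)$ model is \emph{not} the RLE model and the starting-point identity $f_{1,0;0}=f_n$ fails. The paper keeps a \emph{single} matrix $\bm\phi$ shared by all $h$-terms (see \eqref{intH_RLE}, \eqref{h_rle}); the discrete index $k'$ lives only on the noise vectors $\bz^{(k')}$, and stability is applied to those. As a consequence, the ``new ingredient'' you propose --- an IBP in $\phi^{(k)}_{\mu i}$ localized to the $k$-th step --- is not available: with a single shared $\bm\Phi$ one cannot integrate by parts in $\bm\Phi$ inside a single $t$-derivative without dragging along all the other $h$-terms. The paper sidesteps this entirely: it only integrates by parts in $Z_\mu^{(k)}$ and $\widetilde Z_i^{(k)}$, then reduces the resulting ${\rm ymmse}$ to ${\rm mmse}$ via the concentration-based MMSE relation \eqref{MMSErelation}, which is a non-trivial I-MMSE-type identity for the interpolated model and is where the essential difficulty is absorbed.

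Second, you use the naive linear reparametrization ($K\Delta/(1-t)$ and $K\Sigma_k^2/t$), but the RLE sum rule only assembles into $\psi(E_k;\Delta)$ plus a clean squared remainder if the two SNR functions are coupled \emph{nonlinearly}: the paper imposes $\frac{\alpha}{\gamma_k(t)^{-1}+E_k}+\lambda_k(t)=\Sigma_k^{-2}$, forcing $\frac{d\lambda_k}{dt}=-\frac{d\gamma_k}{dt}\frac{\alpha}{(1+\gamma_k(t)E_k)^2}$ (see \eqref{contraint}). Your linear choice violates this (e.g.\ at $t=0$, $d\lambda_k/dt=\alpha/(\Delta+E_k)$ while the constraint demands $\alpha\Delta/(\Delta+E_k)^2$), so the $\mathcal{A}$ and $\mathcal{B}$ contributions do not combine into $\frac{d\gamma_k}{dt}\frac{\gamma_k(t)(E_k-{\rm mmse}_{k,t;\epsilon})^2}{(1+\gamma_k(t)E_k)^2(1+\gamma_k(t){\rm mmse}_{k,t;\epsilon})}$ as in \eqref{127}, and the remainder you claim, $\mathbb{E}[\langle(\mathcal E_{\bX,\bS}-E_k)^2\rangle_{k,t;\epsilon}]$, has the square \emph{inside} the bracket rather than the paper's $(E_k-{\rm mmse}_{k,t;\epsilon})^2$; deriving the latter requires \eqref{MMSErelation}, not a Nishimori bookkeeping of a $\bm\Phi$-IBP. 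These are the two ingredients (shared matrix plus nonlinear $\gamma_k,\lambda_k$ and the ${\rm ymmse}/{\rm mmse}$ relation) that your proposal is missing; the rest of your plan is the right template.
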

\begin{proof}
The result follows from Propositions \ref{UpperBound_RLE} and \ref{LowerBound_RLE} in sec. \ref{rsform}.
\end{proof}
\subsection{Proof of the RS formula}\label{rsform}
Let ${\bz}^{(k)} = [z_{\mu}^{(k)}]_{\mu=1}^m, \widetilde{\bz}^{(k)} = [\widetilde{z_{i}}^{(k)}]_{i=1}^n$ 
and $\widehat{\bz} = [\widehat{z_{i}}]_{i=1}^n$ all with i.i.d ${\cal N}(0,1)$ entries for $k = 1,\ldots,K$. 
Define $\Sigma_k \defeq   \Sigma(E_k;\Delta)$ where the trial parameters $\{E_k\}_{k=1}^K$ are fixed later on. 
The (perturbed) $(k,t)$--interpolating Hamiltonian for the present problem is
\begin{align}\label{intH_RLE}
{\cal H}_{k,t;\epsilon}(\bx;\boldsymbol{\theta})\defeq &\sum_{k'=k+1}^K h\Big(\bx, \bs,\bz^{(k')},\bm{\phi},K\Delta\Big)+\sum_{k'=1}^{k-1} h_{\rm mf}\Big(\bx, \bs,\widetilde{\bz}^{(k')},K\,\Sigma_{k'}^2\Big)\nonumber\\
&\qquad + h\Big(\bx, \bs,\bz^{(k)},\bm{\phi},\frac{K}{\gamma_k(t)}\Big)+h_{\rm mf}\Big(\bx, \bs,\widetilde{\bz}^{(k)},\frac{K}{\lambda_k(t)}\Big) + \epsilon\sum_{i=1}^n\Big(\frac{x_i^2}{2} -x_is_i -\frac{x_i\widehat z_i}{\sqrt{\epsilon}}\Big).
\end{align}
Again, the last term is a small perturbation needed to use an important concentration 
result (here equation \eqref{MMSErelation}). Here $\boldsymbol{\theta} \defeq \{\bs, \{\bz^{(k)}, \widetilde{\bz}^{(k)}\}_{k=1}^K, \widehat \bz, \bm{\phi}\}$, $k \in \{1,\ldots,K\}$, $t \in [0,1]$ and 
\begin{align}
h(\bx, \bs,\bz,\bm{\phi},\sigma^2) &\defeq\frac{1}{\sigma^2}\sum_{\mu=1}^m \Big(\frac{[\bm{\phi}\bar \bx]_\mu^2}{2} - \sigma[\bm{\phi}\bar \bx]_\mu z_\mu\Big), \label{h_rle}\\
h_{\rm mf}(\bx, \bs,\widetilde{\bz},\sigma^2) &\defeq\frac{1}{\sigma^2}\sum_{i=1}^n \Big(\frac{\bar x_i^2}{2} - \sigma\bar x_i \widetilde z_{i}\Big)\label{hmf_rle}
\end{align}
where $\bar\bx = \bx - \bs$, $\bar x_i = x_i - s_i$.
Moreover the ``signal-to-noise functions'' $\{\gamma_k(t),\lambda_k(t)\}_{k=1}^K$ verify
\begin{align}
\gamma_k(0)&= \Delta^{-1}, \qquad \gamma_{k}(1)=0, \label{112}\\
\lambda_k(0)&= 0, \qquad \hspace{0.5cm}\lambda_{k}(1)= \Sigma_{k}^{-2},\label{113}
\end{align}
as well as the following constraint (see \cite{barbier_ieee_replicaCS} for an interpretation of this formula)
\begin{align}
\frac{\alpha}{\gamma_k(t)^{-1}+E_k} + \lambda_k(t) = \Sigma_k^{-2} \quad \text{and thus} \quad \frac{d\lambda_k(t)}{dt} 
= - \frac{d\gamma_k(t)}{dt}\frac{\alpha}{(1+\gamma_k(t)E_k)^{2}}.	 
\label{contraint}
\end{align}
We also require $\gamma_k(t)$ to be strictly decreasing with $t$. The associated $(k,t)$--interpolating model, Gibbs expectation and $(k,t)$--interpolating free energy are defined respectively by \eqref{post}, \eqref{Gibbs} and \eqref{intf} with the Hamiltonian \eqref{intH_RLE}. Note that Lemma~\ref{thermolimit} remains valid for the present model (with the same proof).

Similarly as in sec.~\ref{initfinalmodels}, and using again the stability property of the Gaussian random noise variables, it is easy to verify that the initial and final $(k,t)$--interpolating models correspond to the RLE and denoising models respectively, that is
\begin{align}
f_{1,0;0}&=f_n, \label{114bis}\\
f_{K,1;0}&=i_{\rm den}\bigl(\Sigma_{\rm mf}(\{E_k\}_{k=1}^K;\Delta) \bigr),
\label{115}
\end{align}
where
\begin{align}
 \Sigma_{\rm mf}^{-2}(\{E_k\}_{k=1}^K;\Delta) \defeq \frac{1}{K}\sum_{k=1}^K \Sigma_k^{-2}\, .
\end{align}
As before we use the identity \eqref{telescopf} and compute the free energy change along the adaptive interpolation. Straightforward differentiation leads to (with $\bar \bX = \bX - \bS$)
\begin{align}
\frac{d f_{k,t;\epsilon}}{dt} &= \frac{1}{K}\big({\cal A}_{k,t;\epsilon} + {\cal B}_{k,t;\epsilon}\big), 
\label{eq:ab}\\
{\cal A}_{k,t;\epsilon} &\defeq 
\frac{d\gamma_k(t)}{dt}\frac{1}{2n}\sum_{\mu=1}^m \mathbb{E}\Big[\Big\langle [\bm{\Phi}\bar \bX]_\mu^2 - 
\sqrt{\frac{K}{\gamma_k(t)}}[\bm{\Phi}\bar \bX]_\mu Z_{\mu}^{(k)} \Big\rangle_{k,t;\epsilon}\Big],
\label{eq:a}\\ 
{\cal B}_{k,t;\epsilon} &\defeq \frac{d\lambda_k(t)}{dt} \frac{1}{2n}\sum_{i=1}^n 
\mathbb{E}\Big[\Big\langle \bar X_i^2 - \sqrt{\frac{K}{\lambda_k(t)}}\bar X_i \widetilde Z_i^{(k)}\Big\rangle_{k,t;\epsilon}\Big],
\label{eq:b}
\end{align}
where as before $\EE$ denotes the average w.r.t to all quenched random variables $\boldsymbol{\theta}$
and $\langle - \rangle_{k,t;\epsilon}$ the Gibbs average with Hamiltonian \eqref{intH_RLE}. The two quantities \eqref{eq:a} and \eqref{eq:b} 
can be simplified using 
Gaussian integration by parts. For example, integrating by parts w.r.t $Z_\mu^{(k)}$,
\begin{align}
\sqrt{\frac{K}{\gamma_k(t)}}\mathbb{E}[\langle [\bm{\Phi}\bar \bX]_\mu \rangle_{k,t;\epsilon}Z_\mu^{(k)}] &= \mathbb{E}[\langle [\bm{\Phi}\bar \bX]_\mu^2 \rangle_{k,t;\epsilon} - \langle [\bm{\Phi}\bar \bX]_\mu \rangle_{k,t;\epsilon}^2]. \label{eq:ippNoise}
\end{align}
It allows to simplify $\mathcal{A}_{k,t;\epsilon}$ as follows, 
\begin{align}
{\cal A}_{k,t;\epsilon} &= \frac{d\gamma_k(t)}{dt} 
\frac{1}{2n}\sum_{\mu=1}^m \mathbb{E}[\langle [\bm{\Phi}\bar \bX]_\mu \rangle_{k,t;\epsilon}^2] 
= \frac{d\gamma_k(t)}{dt} \frac{\alpha}{2m}\mathbb{E}[\|\bm{\Phi}(\langle\bX\rangle_{k,t;\epsilon} - \bS)\|^2] 
= \frac{d\gamma_k(t)}{dt} \frac{\alpha}{2} {\rm ymmse}_{k,t;\epsilon}, 
\label{eq:was71}
\end{align}
where we recognized the ``measurement minimum mean-square-error'' 
\begin{align}
{\rm ymmse}_{k,t;\epsilon}\defeq \frac{1}{m}\mathbb{E}[\|\bm{\Phi}(\langle\bX\rangle_{k,t;\epsilon}  -  \bS)\|^2].
\end{align}
For $\mathcal{B}_{k,t;\epsilon}$ we proceed similarly with an integration by parts w.r.t $\widetilde Z_i^{(k)}$, and find
\begin{align}
{\cal B}_{k,t;\epsilon} &=\frac{d\lambda_k(t)}{dt} \frac{1}{2n}\sum_{i=1}^n \mathbb{E}[\langle \bar X_i \rangle_{k,t;\epsilon}] = \frac{d\lambda_k(t)}{dt}  \frac{1}{2n}\mathbb{E}[\|\langle \bX \rangle_{k,t;\epsilon} - \bS\|^2] \nn
&= - \frac{d\gamma_k(t)}{dt}\frac{1}{(1+\gamma_k(t)E_k)^{2}}\frac{\alpha}{2} {\rm mmse}_{k,t;\epsilon}, 
\label{eq:was71_2}
\end{align}
using \eqref{contraint} for the last equality, and the minimum mean-square-error (MMSE) defined as
\begin{align}
 {\rm mmse}_{k,t;\epsilon} \defeq \frac{1}{n}\mathbb{E}[\|\langle \bX \rangle_{k,t;\epsilon} - \bS\|^2].
\end{align}

The free energy can be shown to concentrate by generalizing the computations of Appendix E in \cite{barbier_ieee_replicaCS} taking into account that the noise variables $\{Z_\mu^{(k)}, \widetilde Z_i^{(k)}\}$ are indexed by the discrete interpolation parameter 
(the techniques of \cite{barbier_ieee_replicaCS} use a discrete $P_0$ with bounded support for the free energy concentration). Since the free energy at fixed 
quenched random variables realization concentrates, both sec.~VIII of \cite{barbier_ieee_replicaCS} or sec.~\ref{proofConc} of the 
present paper apply here (these are perfectly equivalent analyses and only require the identity 
\eqref{nishibasic} and the free energy concentration to be valid). Thus the overlap $q_{\bx,\bs} \defeq n^{-1}\sum_i x_i s_i$ concentrates too. As a consequence an analog of Lemma~4.6 in \cite{barbier_ieee_replicaCS} can be shown here: Fix a discrete $P_0$ with bounded support. For any sequence 
$K_n\to +\infty$, and $0<a_n<b_n<1$ (that tend to zero slowly enough in the application), and 
trial parameters $\{E_k = E_k^{(n)}(\epsilon)\}_{k=1}^{K_n}$ which are differentiable, bounded and non-increasing in $\epsilon$, we have
\begin{align}
\int_{a_n}^{b_n} d\epsilon\, \frac{1}{K_n}\sum_{k=1}^{K_n}\int_0^1 dt\frac{d\gamma_k(t)}{dt}\biggl\{{\rm ymmse}_{k,t;\epsilon} - \frac{{\rm mmse}_{k,t;\epsilon}}{1+\gamma_k(t){\rm mmse}_{k,t;\epsilon}}\biggr\} = \mathcal{O}(a_n^{-2} n^{-\alpha}), \label{MMSErelation}
\end{align}	 
for some $0<\alpha<1$ and $C>0$.  

Now combining \eqref{telescopf}, \eqref{114bis}, \eqref{115}, \eqref{eq:ab} and \eqref{eq:was71}, \eqref{eq:was71_2}, 
together with \eqref{MMSErelation}, we obtain
\begin{align}
\int_{a_n}^{b_n} d\epsilon\, f_{1,0;\epsilon} &=\int_{a_n}^{b_n} d\epsilon\,\biggl\{(f_{K_n,1;\epsilon} - f_{K_n,1;0}) +i_{\rm den}\big(\Sigma_{\rm mf}(\{E_k\}_{k=1}^{K_n};\Delta)\big) 
\nonumber\\ &
- \frac{\alpha}{2K_n}\sum_{k=1}^{K_n} \int_{0}^1 dt\frac{d\gamma_k(t)}{dt}\Big(\frac{{\rm mmse}_{k,t;\epsilon}}{1+\gamma_k(t){\rm mmse}_{k,t;\epsilon}} - \frac{{\rm mmse}_{k,t;\epsilon}}{(1+\gamma_k(t)E_k)^{2}}\Big)\biggr\} + \mathcal{O}(a_n^{-2} n^{-\alpha}) \label{124}.
\end{align}
We need the following useful identity which can easily be checked using \eqref{psi}, \eqref{112}, \eqref{113}, \eqref{contraint}:
\begin{align}\label{intermediate-psi-equ}
\psi(E_k;\Delta) = \frac{\alpha}{2}\int_0^1 dt\, \frac{d\gamma_k(t)}{dt} \Big(\frac{E_k}{(1+\gamma_k(t) E_k)^2} - \frac{E_k}{1+\gamma_k(t) E_k}\Big).
\end{align}
Let us define
\begin{align}
\widetilde f_{\rm RS}(\{E_k\}_{k=1}^K;\Delta) \defeq i_{\rm den}\big(\Sigma_{\rm mf}(\{E_k\}_{k=1}^K;\Delta)\big) + \frac{1}{K}\sum_{k=1}^K \psi(E_k;\Delta).
\label{ftilde}
\end{align}
With the help of \eqref{intermediate-psi-equ} and \eqref{ftilde} the identity \eqref{124} becomes
\begin{align}
\int_{a_n}^{b_n} d\epsilon\, 
 f_{1,0;\epsilon} &= \int_{a_n}^{b_n} d\epsilon\,\biggl\{(f_{K_n,1;\epsilon} - f_{K_n,1;0}) +\widetilde f_{\rm RS}(\{E_k\}_{k=1}^{K_n};\Delta)
-\frac{\alpha}{2K_n}\sum_{k=1}^{K_n} \int_{0}^1 dt\frac{d\gamma_k(t)}{dt}
\nonumber \\ &
\times
\Big(\frac{{\rm mmse}_{k,t;\epsilon}}{1+\gamma_k(t){\rm mmse}_{k,t;\epsilon}} 
- \frac{{\rm mmse}_{k,t;\epsilon}}{(1+\gamma_k(t)E_k)^{2}}+\frac{E_k}{(1+\gamma_k(t) E_k)^2} - \frac{E_k}{1+\gamma_k(t) E_k}\Big)
\biggr\} + \mathcal{O}(a_n^{-2} n^{-\alpha})
\nonumber \\ & 
=
\int_{a_n}^{b_n} d\epsilon\,\biggl\{(f_{K_n,1;\epsilon} - f_{K_n,1;0}) + \widetilde f_{\rm RS}(\{E_k\}_{k=1}^{K_n};\Delta) 
\nonumber \\ &  \quad
+\frac{\alpha}{2K_n}\sum_{k=1}^{K_n} \int_{0}^1 dt\frac{d\gamma_k(t)}{dt} \frac{\gamma_k(t)(E_k - {\rm mmse}_{k,t;\epsilon})^2}{(1+\gamma_k(t) E_k)^2(1+\gamma_k(t){\rm mmse}_{k,t;\epsilon})}
\biggr\}
+ \mathcal{O}(a_n^{-2} n^{-\alpha}). \label{127}
\end{align}
This is the fundamental sum rule which forms the basis for the proof of Theorem~\ref{rsformulaRLE}. 

We start with the upper bound. As in sec.~\ref{secUpper} we choose $E_{k} = E_* \defeq {\rm argmin}_{E\ge 0} f_{\rm RS}(E;\Delta)$ for all $k = 1,\ldots,K_n$ (here $\epsilon$-independent) which implies that $\Sigma_{\rm mf}(\{E_k = E_*\}_{k=1}^{K_n};\Delta)  = \Sigma(E_*;\Delta)$ and thus, as seen from \eqref{ftilde}, $\widetilde f_{\rm RS}(\{E_k = E_*\}_{k=1}^{K_n};\Delta) = \min_{E\ge 0}f_{\rm RS}(E;\Delta)$. Thus since the integrand in \eqref{127} is non-positive (recall that $d\gamma_k(t)/dt \leq 0$) and using the arguments similar to sec.~\ref{stochInt} in order to take the $n\to +\infty$ limit, we get:
\begin{proposition}[Upper bound] \label{UpperBound_RLE}
Fix $\Delta > 0$. For $P_0$ discrete and with bounded support:
\begin{align}
\limsup_{n\to\infty} f_n \le \min_{E\ge 0}f_{\rm RS}(E;\Delta).
\end{align}	
\end{proposition}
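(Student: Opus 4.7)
The plan is to deduce the upper bound directly from the fundamental sum rule \eqref{127}, which has been established just above the statement. The key observation is that all the non-trivial terms in \eqref{127} involving the mmse already have a definite sign once one plugs in a constant, minimizing choice for the trial parameters, so no overlap concentration beyond what was already used to derive \eqref{127} is required.

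First I would choose the trial parameters to be constant in both $k$ and $\epsilon$, namely $E_k^{(n)}(\epsilon) \equiv E_\star$ with $E_\star \in {\rm argmin}_{E \ge 0} f_{\rm RS}(E;\Delta)$ (which exists and is bounded since $i_{\rm den}$ is continuous and $\psi(E;\Delta) \to +\infty$ as $E\to\infty$). With this constant choice, $\Sigma_{\rm mf}(\{E_\star\}_{k=1}^{K_n};\Delta) = \Sigma(E_\star;\Delta)$, and reading off \eqref{ftilde} gives $\widetilde f_{\rm RS}(\{E_\star\}_{k=1}^{K_n};\Delta) = f_{\rm RS}(E_\star;\Delta) = \min_{E\ge 0} f_{\rm RS}(E;\Delta)$. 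Moreover a constant choice is trivially differentiable and non-increasing in $\epsilon$, so the sum rule \eqref{127} applies.

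Next I would check the sign of the $t$-integrand in \eqref{127}. The factor $(E_\star - {\rm mmse}_{k,t;\epsilon})^2$ is non-negative, the denominator $(1+\gamma_k(t)E_\star)^2(1+\gamma_k(t){\rm mmse}_{k,t;\epsilon})$ is strictly positive since $\gamma_k(t),E_\star,{\rm mmse} \ge 0$, and $\gamma_k(t)$ itself is non-negative, while by construction $d\gamma_k(t)/dt \le 0$ because $\gamma_k$ is required to be strictly decreasing (see the paragraph following \eqref{contraint}). Hence the whole $t$-integral in \eqref{127} is non-positive, and dropping it yields
\begin{align}
\int_{a_n}^{b_n}\!d\epsilon\, f_{1,0;\epsilon} \,\le\, \int_{a_n}^{b_n}\!d\epsilon\,\big(f_{K_n,1;\epsilon} - f_{K_n,1;0}\big) + (b_n - a_n)\min_{E\ge 0} f_{\rm RS}(E;\Delta) + \mathcal{O}(a_n^{-2}n^{-\alpha}). \nonumber
\end{align}

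Finally I would pass to the limit by choosing $b_n = 2 a_n$ with $a_n \to 0$ slowly enough that $a_n^{-3} n^{-\alpha} \to 0$. The analogue of Lemma \ref{thermolimit} for the RLE model, which the authors note is valid with the same proof, controls both $|f_{K_n,1;\epsilon} - f_{K_n,1;0}| = \mathcal{O}(\epsilon)$ and $|f_{1,0;\epsilon} - f_{1,0;0}| = \mathcal{O}(\epsilon)$. Combining this with \eqref{114bis} and applying the mean value theorem to the $\epsilon$-integral (i.e.\ dividing by $b_n-a_n = a_n$ and evaluating at some $\epsilon_n \in [a_n,2a_n]$ with $\epsilon_n \to 0$) then yields $\limsup_{n\to\infty} f_n \le \min_{E\ge 0} f_{\rm RS}(E;\Delta)$. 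The only mildly delicate step is to ensure that the various error terms in the sum rule \eqref{127} are uniform with respect to the constant choice of trial parameters, but this is already built into the concentration statement \eqref{MMSErelation} cited by the authors. This argument is a direct parallel of the upper-bound proof for the matrix case in section \ref{secUpper}, with the sign coming from the monotonicity of $\gamma_k(t)$ rather than from non-negativity of the overlap variance.
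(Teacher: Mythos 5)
Your proposal is correct and follows essentially the same route as the paper: choose the constant trial parameters $E_k \equiv E_\star = {\rm argmin}_{E\ge 0} f_{\rm RS}(E;\Delta)$, observe that this makes $\widetilde f_{\rm RS}$ collapse to the minimum, note that the remaining $t$-integral in \eqref{127} is non-positive because $d\gamma_k/dt \le 0$ while all other factors are non-negative, drop that term, and pass to the limit via the RLE analogue of Lemma \ref{thermolimit} together with the mean value theorem. The paper compresses these steps into two sentences, but the reasoning and the key sign observation are the same.
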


Let us now prove the lower bound. This bound required the use of spatial coupling 
in \cite{barbier_allerton_RLE,barbier_ieee_replicaCS} or 
``conditional central limit theorems'' in \cite{private,ReevesP16}. Here we derive the bound in a direct and much simpler manner
following the same steps as in sec.~\ref{seclower}. We first need the following identity: For any discrete $P_0$ with bounded support, any $k \in \{1,\ldots,K\}$ and $\epsilon, t \in [0,1]$,
\begin{align} 
|{\rm mmse}_{k,t;\epsilon} -  {\rm mmse}_{k,0;\epsilon}| = {\cal O}\Big(\frac{n}{K}\Big). 
\end{align}
Its proof is very similar to the one of Lemma \ref{tinvar}. Using this identity with $K_n =  \Omega(n^b)$, $b > 2$, in \eqref{127} and 
constructing $E_k^{(n)}(\epsilon) = {\rm mmse}_{k,0;\epsilon}$ (which is indeed non-increasing with $\epsilon$ being a MMSE) for all $k = 1,\ldots,K_n$ 
(by the same arguments than those in the proof of Lemma~\ref{freedom}), we reach
\begin{align}
\int_{a_n}^{b_n} d\epsilon\, f_{1,0;\epsilon} = \int_{a_n}^{b_n}d\epsilon\, \biggl\{(f_{K_n,1;\epsilon} - f_{K_n,1;0}) + \widetilde f_{\rm RS}(\{E_k^{(n)}\}_{k=1}^{K_n};\Delta)\biggr\}  +  \mathcal{O}(a_n^{-2} n^{-\alpha}).
\label{127_}
\end{align}
Recall $\Sigma^{-2} \defeq \alpha/ (E+ \Delta)$ and thus $E= \alpha/\Sigma^{-2}-\Delta$. 
For given $\Delta$ we set $\widetilde \psi_{\Delta}(\Sigma^{-2}) \defeq \psi( \alpha/\Sigma^{-2}-\Delta;\Delta)$ 
and note that $\widetilde\psi_{\Delta}(\cdot)$ is a convex function. Thus from \eqref{ftilde}
\begin{align}
\widetilde f_{\rm RS}(\{E_k\}_{k=1}^{K_n};\Delta) & = i_{\rm den}\big(\Sigma_{\rm mf}(\{E_k\}_{k=1}^{K_n};\Delta)\big) + \frac{1}{K_n}\sum_{k=1}^{K_n} \widetilde\psi_{\Delta}(\Sigma_k^{-2})
\nonumber \\ &
\geq 
i_{\rm den}\big(\Sigma_{\rm mf}(\{E_k\}_{k=1}^{K_n};\Delta)\big) 
+ \widetilde\psi_\Delta\big(\Sigma_{\rm mf}^{-2}(\{E_k\}_{k=1}^{K_n};\Delta)\big)
\nonumber \\ &
\geq 
\min_{\Sigma \geq 0} \Big(i_{\rm den}(\Sigma) + \widetilde\psi_\Delta(\Sigma^{-2})\Big)
\nonumber \\ &
= \min_{E\geq 0} f_{\rm RS}(E; \Delta).
\end{align}
Thus \eqref{127_} becomes 
\begin{align}
\int_{a_n}^{b_n} d\epsilon\, f_{1,0;\epsilon} \ge \int_{a_n}^{b_n} d\epsilon\, (f_{K_n,1;\epsilon} - f_{K_n,1;0}) + (b_n-a_n)\min_{E \ge 0} f_{\rm RS}(E;\Delta) + \mathcal{O}(a_n^{-2} n^{-\alpha}). \label{finalstuff}
\end{align}
Taking $b_n=2a_n$, $a_n\to 0$, such that $a_n^{-3} n^{-\alpha} \to 0$ as $n \to +\infty$ we obtain 
(recall Lemma~\ref{thermolimit})
%
\begin{proposition}
[Lower bound] \label{LowerBound_RLE}
Fix $\Delta > 0$. For any dicrete $P_0$ with bounded support, 
\begin{align}
\liminf_{n\to\infty} f \ge \min_{E\ge 0}f_{\rm RS}(E;\Delta).
\end{align}	
\end{proposition}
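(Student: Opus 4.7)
The plan is to exploit the fundamental sum rule \eqref{127}, which has already been established for arbitrary admissible trial parameters $\{E_k\}_{k=1}^K$. The key idea, mirroring the symmetric rank-one matrix case of Section~\ref{seclower}, is to \emph{adapt} the trial parameters so that the non-negative squared-difference term in \eqref{127} is essentially cancelled, leaving a clean lower bound in terms of $\widetilde f_{\rm RS}$, which is then minimized by convexity.

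First I would establish the RLE analogue of the weak-$t$-dependence result of Lemma~\ref{tinvar}, namely $|{\rm mmse}_{k,t;\epsilon} - {\rm mmse}_{k,0;\epsilon}| = \mathcal{O}(n/K)$ uniformly in $\epsilon$ and in the trial parameters. Its proof goes through essentially verbatim by differentiating $\mathbb{E}[\langle \bar X_i\rangle_{k,t;\epsilon}]$ in $t$, integrating by parts in the Gaussian noise variables, and using the $1/K$ prefactor in the Hamiltonian \eqref{intH_RLE}. Choosing $K_n = \Omega(n^b)$ with $b > 2$ then lets me replace ${\rm mmse}_{k,t;\epsilon}$ by ${\rm mmse}_{k,0;\epsilon}$ in the final bracket of \eqref{127} up to a negligible error, as suggested in the text just before \eqref{127_}.

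Next, in the spirit of Lemma~\ref{freedom}, I set $E_k^{(n)}(\epsilon) \defeq {\rm mmse}_{k,0;\epsilon}$. This is an admissible adaptive choice because $\langle -\rangle_{k,0;\epsilon}$ only depends on the previously fixed $E_1,\ldots,E_{k-1}$, so the $E_k^{(n)}(\epsilon)$ can be defined inductively in $k$; moreover an MMSE is a non-increasing function of the effective signal-to-noise ratio $\epsilon$, which gives the monotonicity hypothesis required to invoke the overlap concentration identity \eqref{MMSErelation}. With this choice the adapted integrand in \eqref{127} vanishes in the leading order and the sum rule collapses to \eqref{127_}. I would then use the convexity of $\Sigma^{-2}\mapsto\widetilde\psi_\Delta(\Sigma^{-2})$ and Jensen's inequality to bound $\widetilde f_{\rm RS}(\{E_k\}_{k=1}^{K_n};\Delta) \geq i_{\rm den}(\Sigma_{\rm mf}) + \widetilde\psi_\Delta(\Sigma_{\rm mf}^{-2}) \geq \min_{E\geq 0} f_{\rm RS}(E;\Delta)$, yielding \eqref{finalstuff}.

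To conclude, I would choose $b_n = 2a_n$ with $a_n\to 0$ slowly enough that $a_n^{-3} n^{-\alpha}\to 0$, apply the mean value theorem to produce $\epsilon_n \in [a_n,2a_n]$ with $\int_{a_n}^{b_n} d\epsilon\, f_{1,0;\epsilon} = a_n f_{1,0;\epsilon_n}$, and invoke Lemma~\ref{thermolimit} twice: once to bound $|f_{1,0;\epsilon_n} - f_n| = \mathcal{O}(a_n)$ and once to bound $|f_{K_n,1;\epsilon} - f_{K_n,1;0}| = \mathcal{O}(a_n)$ uniformly on $[a_n,2a_n]$. Dividing \eqref{finalstuff} by $a_n$ and taking $\liminf_{n\to\infty}$ then gives the claim. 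The main anticipated obstacle is justifying the concentration bound \eqref{MMSErelation} itself, which requires porting the free energy concentration from Appendix~E of the references to the present $(k,t)$-interpolating model (the noise variables now carry an extra discrete index $k$, but the Gaussian concentration and Nishimori-type argument should adapt); a secondary technical point is checking that the inductive construction of $E_k^{(n)}(\epsilon)$ indeed yields non-increasing (and bounded) functions of $\epsilon$, analogous to the monotonicity established at the end of the proof of Lemma~\ref{freedom}.
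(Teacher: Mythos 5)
Your proposal follows the paper's proof essentially verbatim: the same fundamental sum rule \eqref{127}, the same RLE analogue of Lemma~\ref{tinvar} to trade ${\rm mmse}_{k,t;\epsilon}$ for ${\rm mmse}_{k,0;\epsilon}$ with $K_n=\Omega(n^b)$, $b>2$, the same adaptive choice $E_k^{(n)}(\epsilon)={\rm mmse}_{k,0;\epsilon}$ justified by the inductive argument of Lemma~\ref{freedom} and the monotonicity of the MMSE, the same convexity/Jensen bound on $\widetilde\psi_\Delta$, and the same final limiting argument with $b_n=2a_n$ and Lemma~\ref{thermolimit}. The anticipated obstacle you flag (porting the free energy concentration to justify \eqref{MMSErelation}) is exactly the point the paper handles by generalizing the computations of Appendix~E of the cited reference and section~\ref{proofConc}, so there is no gap.
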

\section{Concentration of overlaps}\label{proofConc}

The main goal of this section is the proof of Lemma \ref{concentration}.
The proof strategy outlined here is very general and it will appear to the reader that it applies to essentially any inference 
problem for which the identity \eqref{nishibasic} is valid and as long as the free energy can be shown to concentrate. In the framework of inference problems 
such proofs go back to \cite{GKSmacris2007,KoradaMacris_CDMA,korada2009exact} for 
binary signals (in coding, CDMA and the gauge symmetric p-spin model) and have been extended more recently in random linear estimation for 
arbitrary signal distributions \cite{barbier_ieee_replicaCS}. The results and exposition given here slightly generalize and streamlines the one of the previous works. 

From now on the trial parameters are chosen of the form $\{m_k= m_k^{(n)}(\epsilon)\}_{k=1}^K$.
It will be convenient to adopt the notation 
$\tilde\epsilon = \epsilon + (K\Delta)^{-1}(\sum_{l=1}^{k-1} m_l^{(n)}(\epsilon) + t m_k^{(n)}(\epsilon))$. Here $\tilde\epsilon$ depends on $k,t$ but we do not write this 
dependence explicitly as it does not play a role
(we work at fixed $k,t$ in the rest of this section). Let 
\begin{align}\label{L-def}
\mathcal{L} \defeq \frac{1}{n}\sum_{i=1}^n\Big(\frac{x_i^2}{2} - x_i s_i - \frac{x_i \widehat z_i}{2\sqrt{\tilde\epsilon}} \Big).
\end{align}
We will show 
that Lemma \ref{concentration} is a direct consequence of the following:
\begin{proposition}[Concentration of $\mathcal{L}$ on $\mathbb{E}\langle\mathcal{L}\rangle_{k,t;\epsilon}$ ] \label{L-concentration}
Let $P_0$ with finite second moment and bounded support in $[-M, M]$. For any choice of trial parameters $\{m_k= m_k^{(n)}(\epsilon)\}_{k=1}^K$ that are non-decreasing bounded and 
differentiable functions of $\epsilon\in ]0,1[$, and any sequences $0<a_n<b_n<1$,
we have
\begin{align}
\int_{a_n}^{b_n} d\epsilon\,
\mathbb{E}\big[ \big\langle (\mathcal{L} - \mathbb{E}[\langle \mathcal{L}\rangle_{k, t;\epsilon}])^2\big\rangle_{k, t;\epsilon}\big] \leq \frac{C}{a_n^2 \, n^{\alpha}}
\end{align}
for any $0< \alpha < 1/4$ 
with $C>0$ a constant uniform in $k,t$ and the trial parameters and depending only on the second moment of $P_0$ and $M$.
\end{proposition}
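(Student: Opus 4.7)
The plan is to decompose the total variance into a thermal and a quenched part,
\begin{align*}
\mathbb{E}\big[\big\langle(\mathcal{L}-\mathbb{E}\langle\mathcal{L}\rangle_{k,t;\epsilon})^2\big\rangle_{k,t;\epsilon}\big]
= \mathbb{E}\big[\big\langle(\mathcal{L}-\langle\mathcal{L}\rangle_{k,t;\epsilon})^2\big\rangle_{k,t;\epsilon}\big]
+ \mathbb{E}\big[(\langle\mathcal{L}\rangle_{k,t;\epsilon}-\mathbb{E}\langle\mathcal{L}\rangle_{k,t;\epsilon})^2\big],
\end{align*}
and to control each piece after integration over $\epsilon\in[a_n,b_n]$. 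The bridge to calculus is that, by Gaussian stability of the combined mean-field and perturbation noises, the $(k,t)$--model contains a scalar ``side-observation'' with effective signal-to-noise ratio $\tilde\epsilon$, and a short Gaussian integration by parts with respect to $\widehat{\bZ}$ identifies $\mathbb{E}\langle\mathcal{L}\rangle_{k,t;\epsilon}$ with $-d\,\mathbb{E}[F_n]/d\epsilon$, where $F_n(\epsilon)\defeq n^{-1}\ln\int\prod_i dx_i P_0(x_i)\,e^{-\mathcal{H}_{k,t;\epsilon}(\bx;\boldsymbol{\Theta})}$. Monotonicity of $\epsilon\mapsto m_k^{(n)}(\epsilon)$ ensures $d\tilde\epsilon/d\epsilon\geq 1$, so that this derivative is well behaved.

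For the thermal piece, I would differentiate $\mathbb{E}\langle\mathcal{L}\rangle_{k,t;\epsilon}$ once more and re-apply Gaussian IBP on $\widehat{\bZ}$ to obtain an identity of the form $d\,\mathbb{E}\langle\mathcal{L}\rangle/d\epsilon = -\mathbb{E}\langle(\mathcal{L}-\langle\mathcal{L}\rangle)^2\rangle + R_n(\epsilon)$, where the remainder $R_n$ arises from differentiating $1/\sqrt{\tilde\epsilon}$ and is $O(1/(\tilde\epsilon\,n))$ because the bounded support of $P_0$ keeps all moments of the $x_i$'s uniformly $O(1)$. Telescoping this identity on $[a_n,b_n]$ and using $|\mathbb{E}\langle\mathcal{L}\rangle|\leq C$ uniformly yields a thermal contribution of order $C/(a_n n)$, which is already much smaller than the claimed $Ca_n^{-2}n^{-\alpha}$.

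The quenched piece is the delicate one. The route is first to establish $\mathbb{E}[(F_n-\mathbb{E}F_n)^2]=O(1/n)$ via a Gaussian--Poincar\'e inequality on the Gaussian noises combined with a bounded-differences argument on the signal $\bS$ (this is precisely where the bounded support of $P_0$ is used), and then to transfer this concentration of $F_n$ to its derivative. For the transfer, I would rely on the convexity of $\epsilon\mapsto F_n(\epsilon)$ up to an explicit linear-in-$\epsilon$ term, inherited from the convexity in the natural SNR parameter of the log-partition function of a Gaussian side-channel. Convexity supplies the two-sided sandwich
\[
\frac{F_n(\epsilon)-F_n(\epsilon-\delta)}{\delta}\;\geq\;\frac{dF_n}{d\epsilon}(\epsilon)\;\geq\;\frac{F_n(\epsilon+\delta)-F_n(\epsilon)}{\delta},
\]
so the fluctuations of $dF_n/d\epsilon$ are controlled by the fluctuations of $F_n$ at $\epsilon,\epsilon\pm\delta$ plus a deterministic curvature term $\delta\,|d^2\mathbb{E}F_n/d\epsilon^2|$. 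Optimizing $\delta\sim n^{-\beta}$ and integrating over $\epsilon\in[a_n,b_n]$ then gives the advertised quenched bound $Ca_n^{-2}n^{-\alpha}$ for any $\alpha<1/4$.

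The main obstacle I anticipate is the convexity step: the explicit $-\sqrt{\tilde\epsilon}\sum_i x_i\widehat z_i$ term in $\mathcal{H}_{k,t;\epsilon}$ obstructs a direct convexity check in $\epsilon$, and a clean argument requires reparametrizing the perturbation as a genuine side-observation $Y_i=\sqrt{\tilde\epsilon}\,S_i+\widehat Z_i$, where standard log-partition convexity applies. The $a_n^{-2}$ factor in the final estimate is then forced by the $1/\sqrt{\tilde\epsilon}$ singularity near $\epsilon=0$, which inflates both the remainder $R_n$ on the thermal side and the curvature of $\mathbb{E}F_n$ on the quenched side, and explains why the integration cannot be extended down to $\epsilon=0$.
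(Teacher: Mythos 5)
Your decomposition into thermal and quenched pieces, the identification of $\mathbb{E}\langle\mathcal{L}\rangle_{k,t;\epsilon}$ with $-d f_{k,t;\epsilon}/d\tilde\epsilon$, the thermal bound via the second-derivative identity, and the overall ``free-energy concentration $\Rightarrow$ derivative concentration via convexity'' scheme are all the same as the paper's. Your alternative route to $\mathbb{E}[(F_n-\mathbb{E}F_n)^2]=O(1/n)$ via Gaussian--Poincar\'e plus McDiarmid is a legitimate shortcut (the paper instead proves a sub-Gaussian tail by a Guerra--Toninelli-type interpolation between two Gaussian disorders, plus McDiarmid). However, the two delicate steps you flag as ``obstacles'' are, in fact, where your sketch breaks, and they require different devices than you propose.

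\textbf{Concavity of the per-instance free energy.} You are right that the $-\sqrt{\tilde\epsilon}\sum_i x_i\widehat z_i$ term destroys concavity, but reparametrizing as a side-observation $Y_i=\sqrt{\tilde\epsilon}S_i+\widehat Z_i$ does not restore it: for a fixed realisation of $\widehat{\bz}$, differentiating twice in $\tilde\epsilon$ still produces the signed term $\frac{1}{4n^2\tilde\epsilon^{3/2}}\sum_i \langle X_i\rangle\widehat z_i$ (this is equation \eqref{second-derivative}), which has no definite sign. The paper's device is to add the explicit, $\bx$-independent term $\frac{\sqrt{\tilde\epsilon}}{n}\sum_i M\vert\widehat z_i\vert$ to the free energy, which on differentiating twice contributes $-\frac{1}{4n\tilde\epsilon^{3/2}}\sum_i M\vert\widehat z_i\vert$; since $\vert\langle X_i\rangle\vert\leq M$ under the bounded-support hypothesis, this dominates the bad term and makes $\widetilde F(\tilde\epsilon)$ genuinely concave. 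The correction is deterministic once $\widehat{\bz}$ is fixed, so it does not affect the fluctuation being estimated up to a small, explicitly controllable $\sqrt{\tilde\epsilon}\,A$ term. Without this (or an equivalent) trick, the convexity sandwich simply is not available.

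\textbf{The curvature term.} Your bound on the residual of the sandwich by $\delta\,\vert d^2\mathbb{E}F_n/d\epsilon^2\vert$ pointwise is too coarse. From \eqref{alternative}, $d^2 f_{k,t;\epsilon}/d\tilde\epsilon^2 = -\frac{1}{2n}\sum_{i,j}\mathbb{E}[(\langle X_iX_j\rangle-\langle X_i\rangle\langle X_j\rangle)^2]$, which is $O(n)$, so your residual is $O(\delta n)$; with the $\delta\sim a_n n^{-1/4+\eta/2}$ that the final bound forces, this diverges. The correct treatment is the one in the paper: define $C^\pm(\tilde\epsilon)$ as the (nonnegative, by concavity) increments of the derivative, and control $\int_{a_n}^{b_n}d\epsilon\,C^\pm(\tilde\epsilon)^2$ by $\Vert C^\pm\Vert_\infty\int d\epsilon\,C^\pm(\tilde\epsilon)$; the second factor telescopes to the difference of two values of $\widetilde f$, which is $O(\delta)$ by the mean value theorem and the uniform bound $\vert d\widetilde f/d\tilde\epsilon\vert\leq(\mathbb{E}[S^2]+M/\sqrt{\tilde\epsilon})/2$. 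The point is that $C^\pm$ may be large at isolated (near-transition) values of $\tilde\epsilon$, but its $\epsilon$-integral is small; a pointwise curvature bound cannot see this.

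So: same skeleton, but both ``obstacles'' need the specific devices above, and the reparametrization idea does not supply them.
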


The proof of this proposition is broken in two parts. Notice that 
\begin{align}
\mathbb{E}\big[ \big\langle (\mathcal{L} - \mathbb{E}[\langle \mathcal{L}\rangle_{k,t;\epsilon}])^2\big\rangle_{k,t;\epsilon}\big]
& = 
\mathbb{E}\big[ \big\langle (\mathcal{L} - \langle \mathcal{L}\rangle_{k,t;\epsilon})^2\big\rangle_{k,t;\epsilon}\big] 
+ \mathbb{E}\big[(\langle \mathcal{L}\rangle_{k,t;\epsilon} - \mathbb{E}[\langle \mathcal{L}\rangle_{k,t;\epsilon}])^2\big].
\end{align}
Thus it suffices to prove the two following lemmas. The first lemma expresses concentration w.r.t the posterior distribution (or ``thermal fluctuations'') and is an elementary consequence of concavity properties of the free energy.
\begin{lemma}[Concentration of $\mathcal{L}$ on $\langle \mathcal{L}\rangle_{k,t;\epsilon}$ ]\label{thermal-fluctuations}
Let $P_0$ with finite second moment.
For any choice of trial parameters $\{m_k= m_k^{(n)}(\epsilon)\}_{k=1}^K$ that are non-decreasing bounded and differentiable functions of $\epsilon\in ]0,1[$, and any sequences $0<a_n<b_n<1$,
\begin{align}\label{integral-form}
 \int_{a_n}^{b_n} d\epsilon\, 
 \mathbb{E}\big[ \big\langle (\mathcal{L} - \langle \mathcal{L}\rangle_{k,t;\epsilon})^2 \big\rangle_{k,t;\epsilon} \big] 
 \leq \frac{\mathbb{E}[S^2]}{n} \Big(1 + \frac{\vert \ln a_n\vert}{4}\Big) \,.
\end{align}
\end{lemma}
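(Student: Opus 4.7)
The plan is to control the thermal variance of $\mathcal{L}$ by the curvature of the log partition function in a suitable effective signal-to-noise parameter, and then use convexity and a boundary estimate to integrate.

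\emph{Step 1 (reparametrization).} Using Gaussian stability, I would merge the perturbation $\epsilon \sum_i (\tfrac{x_i^2}{2} - x_i s_i - x_i \widehat z_i/\sqrt{\epsilon})$ with the scalar mean-field side-channel terms of ${\cal H}_{k,t;\epsilon}$ indexed by $k' < k$ and the partial one at level $k$ into a single effective scalar side-observation with signal-to-noise $\tilde\epsilon$ and a single standard Gaussian noise $\tilde Z_i$ per site. The Gibbs measure is unchanged under this recombination, and in this equivalent form $n\,\mathcal{L}$ coincides with $\partial_{\tilde\epsilon}$ of the perturbation part of the Hamiltonian.

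\emph{Step 2 (variance as curvature).} I then apply the elementary identity $\partial_\lambda^2 \ln \mathcal{Z} = \langle (\partial_\lambda \mathcal{H})^2\rangle - \langle \partial_\lambda \mathcal{H}\rangle^2 - \langle \partial_\lambda^2 \mathcal{H}\rangle$ at $\lambda = \tilde\epsilon$. Taking $\mathbb{E}$ and handling the residual $\langle \partial_{\tilde\epsilon}^2 \mathcal{H}\rangle$ by one Gaussian integration by parts in $\tilde Z_i$, together with the Nishimori identity $\mathbb{E}\langle X_i^2\rangle = \mathbb{E}[S^2]$, yields the pointwise bound
\begin{align*}
n^2\, \mathbb{E}\big[ \big\langle (\mathcal{L} - \langle \mathcal{L}\rangle_{k,t;\epsilon})^2 \big\rangle_{k,t;\epsilon}\big]
\;\leq\; \partial_{\tilde\epsilon}^2\, \mathbb{E}[\ln \mathcal{Z}] \;+\; \frac{n\,\mathbb{E}[S^2]}{4\,\tilde\epsilon}.
\end{align*}

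\emph{Step 3 (convexity and integration).} A direct computation using the same IBP gives $\partial_{\tilde\epsilon}\mathbb{E}[\ln \mathcal{Z}] = \tfrac{n}{2}(\mathbb{E}[S^2] - \mathrm{MMSE}(\tilde\epsilon))$, which is non-decreasing in $\tilde\epsilon$ by monotonicity of the Bayes MMSE and lies in $[0, \tfrac{n}{2}\mathbb{E}[S^2]]$. Thus $\mathbb{E}[\ln \mathcal{Z}]$ is convex in $\tilde\epsilon$ and $\partial_{\tilde\epsilon}^2\mathbb{E}[\ln \mathcal{Z}] \geq 0$. Since the trial parameters $m_l^{(n)}(\epsilon)$ are non-decreasing, the Jacobian satisfies $d\tilde\epsilon/d\epsilon \geq 1$ and $\tilde\epsilon \geq \epsilon$. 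Changing variables $\epsilon \mapsto \tilde\epsilon$ in the (non-negative) first term and invoking the fundamental theorem of calculus yields a telescoping boundary contribution bounded by $\tfrac{n}{2}\mathbb{E}[S^2]$, while $\int_{a_n}^{b_n} d\epsilon/\tilde\epsilon \leq \ln(b_n/a_n) \leq |\ln a_n|$ handles the second term. Dividing by $n^2$ produces the claimed estimate.

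\emph{Main obstacle.} The delicate part is the bookkeeping around the passage from $\epsilon$ to the effective SNR $\tilde\epsilon$: one must verify both that the Gibbs measure is exactly invariant under the Gaussian-stability recombination of the side-channel noises (so that the variance of $\mathcal{L}$ really is $n^{-2}\partial_{\tilde\epsilon}^2 \ln\mathcal{Z}$ up to the IBP correction) and that $d\tilde\epsilon/d\epsilon \geq 1$, which is where the monotonicity of the adapted $m_l^{(n)}(\epsilon)$ enters in an essential way. The convexity of $\mathbb{E}\ln\mathcal{Z}$ in $\tilde\epsilon$ — equivalent to the monotone decrease of the conditional MMSE in the scalar-channel SNR — is the substantive analytic input beyond routine algebra.
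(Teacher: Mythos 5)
Your proposal is correct and follows essentially the same route as the paper: both express the thermal variance of $\mathcal{L}$ as $-\frac{1}{n}\partial^2_{\tilde\epsilon} f_{k,t;\epsilon}$ plus a residual $\mathcal{O}(1/(n\tilde\epsilon))$ term handled by Gaussian integration by parts and the Nishimori identity, then exploit concavity of the averaged free energy in $\tilde\epsilon$, the Jacobian bound $d\tilde\epsilon/d\epsilon\geq 1$ from monotonicity of the $m_k^{(n)}(\epsilon)$, and the fundamental theorem of calculus to bound the integral. The only cosmetic difference is that you justify convexity of $\mathbb{E}[\ln\mathcal{Z}]$ by appealing to monotonicity of the Bayes MMSE in the side-channel SNR, whereas the paper verifies it by computing the second derivative directly as a negative sum of squares (its equation \eqref{alternative}); these are equivalent.
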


The second lemma expresses the concentration of the Gibbs average w.r.t the realizations of quenched disorder variables.
\begin{lemma}[Concentration of $\langle\mathcal{L}\rangle_{k,t;\epsilon}$ on $\mathbb{E}\langle \mathcal{L}\rangle_{k,t;\epsilon}$ ]\label{disorder-fluctuations}
Let $P_0$ with finite second moment and bounded support in $[-M, M]$. For any choice of trial parameters $\{m_k= m_k^{(n)}(\epsilon)\}_{k=1}^K$ that are non-decreasing bounded and 
differentiable functions of $\epsilon\in ]0,1[$, and any sequences $0<a_n<b_n<1$,

\begin{align}\label{integral-form}
 \int_{a_n}^{b_n} d\epsilon\, 
 \mathbb{E}\big[ (\langle \mathcal{L}\rangle_{k,t;\epsilon} - \mathbb{E}[\langle \mathcal{L}\rangle_{k,t;\epsilon}])^2 \big] \leq \frac{C}{a_n^2n^{\frac{1}{4} - \frac{\eta}{2}} }
\end{align}
for any  $0 < \eta  < 1/2$ and 
where $C>0$ depends only on the second moment of $P_0$ and $M$. In particular $C$ is independent of $k,t$ and the trial parameters.
\end{lemma}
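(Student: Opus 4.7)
The plan is to follow the general strategy of \cite{korada2009exact,barbier_ieee_replicaCS} for analogous Gibbs-average concentration statements: first identify $\langle\mathcal{L}\rangle_{k,t;\epsilon}$ with a derivative of the random free energy $F_n := -n^{-1}\ln \mathcal{Z}_{k,t;\epsilon}$, then use concentration of $F_n$ itself, and finally transfer concentration from the function to its derivative via a convexity sandwich.

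First, I would reduce the dependence of the Hamiltonian on $\epsilon$---direct through the perturbation \eqref{perturb} and indirect through the trial parameters $\{m_{k'}(\epsilon)\}$ inside the mean-field terms---to a dependence on the single scalar $\tilde\epsilon$ defined at the beginning of the section. By the Gaussian stability property, the perturbation together with the ``frozen'' mean-field pieces $\sum_{k'<k} h_{\rm mf}(\cdot, K\Delta/m_{k'})$ and the fraction-$t$ mean-field $h_{\rm mf}(\cdot, K\Delta/(tm_k))$ should combine, in distribution, into a single effective scalar side-channel of total signal-to-noise ratio exactly $\tilde\epsilon$: there exist i.i.d.\ $Z_i'\sim\mathcal{N}(0,1)$ (fixed linear combinations of $\widetilde z_i^{(k')}$ and $\widehat z_i$) and an $\epsilon$-independent Hamiltonian piece $\mathcal{H}^{\rm int}$ such that $\mathcal{H}_{k,t;\epsilon}$ has the same law as $\mathcal{H}^{\rm int} + \tilde\epsilon\sum_i(x_i^2/2 - x_i s_i) - \sqrt{\tilde\epsilon}\sum_i x_i Z_i'$. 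In this representation $F_n$ becomes a random function of the single variable $\tilde\epsilon$, with $\partial_{\tilde\epsilon}F_n = \langle \mathcal{L}'\rangle_{k,t;\epsilon}$ for a quantity $\mathcal{L}'$ that has the same joint law with the Gibbs bracket as $\mathcal{L}$. Since $\epsilon\mapsto\tilde\epsilon$ is smooth and non-decreasing (as the $m_l(\epsilon)$ are), integral bounds in $\epsilon$ over $[a_n, b_n]$ will follow from integral bounds in $\tilde\epsilon$ over the image interval.

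The second ingredient I would invoke is free-energy concentration,
\begin{align*}
 \mathbb{E}\big[(F_n(\tilde\epsilon) - \mathbb{E}[F_n(\tilde\epsilon)])^2\big] \leq C/n^{\beta}
\end{align*}
for some $\beta>0$, uniformly in $\tilde\epsilon\in[a_n,b_n]$. This will be established in section \ref{concentration-free-energy} by the Gaussian Poincaré inequality applied to the smooth dependence on $\bz,\widetilde\bz,\widehat\bz$, combined with a bounded-difference estimate on $\bS$; the bounded support of $P_0$ enters exactly at this point. For the third step I would apply the standard convexity sandwich. A direct computation of $\partial_{\tilde\epsilon}^2 F_n$ produces a manifestly non-positive Gibbs-variance term plus a residual of order $\tilde\epsilon^{-3/2}$; the latter can be absorbed into a deterministic corrector $p(\tilde\epsilon)$ with $|p''|\leq C/\tilde\epsilon^{3/2}$ so that $\phi := F_n + p$ becomes concave in $\tilde\epsilon$. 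For such a concave $\phi$ and any $\delta>0$, comparing the difference-quotient sandwich bounds of $\phi'(\tilde\epsilon)$ with the analogous bounds for $\mathbb{E}[\phi']$ yields
\begin{align*}
\big|\phi'(\tilde\epsilon) - \mathbb{E}\phi'(\tilde\epsilon)\big| \leq \frac{1}{\delta}\!\!\sum_{\tau\in\{-\delta,0,+\delta\}}\!\!\big|\phi(\tilde\epsilon+\tau) - \mathbb{E}\phi(\tilde\epsilon+\tau)\big| + \frac{1}{\delta}\big|\mathbb{E}[\phi(\tilde\epsilon+\delta) + \phi(\tilde\epsilon-\delta) - 2\phi(\tilde\epsilon)]\big|,
\end{align*}
with the last, deterministic, term bounded by $C\delta/a_n^{3/2}$. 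Squaring, taking expectation, and integrating over $\epsilon\in[a_n,b_n]$ will produce a bound of the form $C(b_n-a_n)\delta^{-2}n^{-\beta} + C(b_n-a_n)\delta^2 a_n^{-3}$; optimising $\delta$ will then yield the rate stated in the lemma, the $a_n^{-2}$ factor being inherited from the $a_n^{-3/2}$ singularity of $p''$, and the admissible range $0<\eta<1/2$ coming from the freedom in the choice of $\delta$.

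The hard part will be the first step, the stability-based reduction: one has to verify that all the chain-rule contributions from $\{m_l(\epsilon)\}$ to $\partial_\epsilon F_n$ really do get absorbed into $\partial_{\tilde\epsilon} F_n$ after the Gaussian combination. This is precisely what the definition of $\tilde\epsilon$ as the cumulative signal-to-noise ratio is designed to ensure. Once the reduction is in hand, steps~2 and~3 are routine and closely follow the corresponding arguments in \cite{barbier_ieee_replicaCS}.
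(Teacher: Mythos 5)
Your overall architecture---reparametrize in the single scalar $\tilde\epsilon$ via Gaussian stability, invoke free-energy concentration, then transfer concentration from $F_n$ to its $\tilde\epsilon$-derivative by a concavity sandwich---is exactly the route the paper takes. However, the details of the concavification step contain two genuine gaps.

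First, the corrector you add to $F_n$ cannot be deterministic. The residual term in $n^{-1}\partial^2_{\tilde\epsilon} F_n$ is
\begin{align}
\frac{1}{4 n^2\tilde\epsilon^{3/2}}\sum_{i=1}^n  \langle X_i\rangle_{k,t;\epsilon}\, \widehat z_i\,,
\end{align}
which is a random quantity of random sign whose magnitude (proportional to $n^{-1}\sum_i|\widehat z_i|$ in the worst case) is almost surely not bounded by any deterministic constant. A deterministic $p(\tilde\epsilon)$ with $|p''|\leq C/\tilde\epsilon^{3/2}$ therefore cannot make $F_n + p$ concave for a fixed disorder realization. What works is a \emph{random} corrector that matches the offending term realization by realization; the paper takes $\widetilde F(\tilde\epsilon)= F_{k, t;\epsilon}(\boldsymbol\theta) + n^{-1}\sqrt{\tilde\epsilon}\sum_i M|\widehat z_i|$, whose second derivative is $-\frac{M}{4n\tilde\epsilon^{3/2}}\sum_i|\widehat z_i|$ and dominates the residual precisely because $|\langle X_i\rangle|\leq M$ on the bounded support. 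This is the specific place where boundedness of $P_0$ is used, and it forces the corrector to depend on $\widehat z$.

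Second, your bound on the ``deterministic'' term, $\delta^{-1}\vert\mathbb{E}[\phi(\tilde\epsilon+\delta)+\phi(\tilde\epsilon-\delta)-2\phi(\tilde\epsilon)]\vert\leq C\delta/a_n^{3/2}$, is not justified pointwise: as the paper notes explicitly, at a first-order phase transition the derivative $\widetilde f'(\tilde\epsilon)$ has a jump, so the second difference is $\Theta(\delta)$ (not $O(\delta^2)$) near that point and no uniform $a_n^{-3/2}$ control is available. The correct move is to keep the second-difference contribution in the form $C^+(\tilde\epsilon)+C^-(\tilde\epsilon)$ and integrate over $\epsilon$: after the change of variable $\epsilon\mapsto\tilde\epsilon$ (Jacobian $\geq 1$ since the $m_k^{(n)}$ are non-decreasing) the integral $\int d\tilde\epsilon\,(C^++C^-)$ telescopes, and the bound $|\widetilde f'|\lesssim \mathbb{E}[S^2]+M/\sqrt{a_n}$ then yields $\int d\epsilon\,\big((C^+)^2+(C^-)^2\big) \lesssim \delta\,(\mathbb{E}[S^2]+M/\sqrt{a_n})^2$. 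This is linear in $\delta$ and independent of $b_n-a_n$, in contrast to your claimed $\delta^2(b_n-a_n)/a_n^3$ contribution; in turn the optimization over $\delta$ and hence the final rate come out differently. Both fixes are consistent with your general plan but are essential for the argument to go through.
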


\begin{remark}
Thanks to the identity \eqref{82} below, that we will show in section \ref{fluctuation-identity}, the statements of Proposition \ref{L-concentration} and Lemmas \ref{thermal-fluctuations} 
and \ref{disorder-fluctuations} hold if we replace $\mathcal{L}$ by the overlap $q_{\bx,\bs}$. 
\end{remark}

The proof of this last lemma is based on 
an important but generic result concerning the concentration of the $(k,t)$--interpolating free energy for a 
single realization of quenched variables. Let 
\begin{align}\label{free-non-averaged}
 F_{k, t; \epsilon}(\boldsymbol{\theta}) \defeq -\frac{1}{n} \ln \int \big\{\prod_{i=1}^n dx_i P_0(x_i)\big\} e^{-\mathcal{H}_{k, t;\epsilon}(\bx; \boldsymbol{\theta})}\, .
\end{align}
Recall that $f_{k, t; \epsilon} = \mathbb{E}[F_{k, t; \epsilon}(\boldsymbol{\Theta})]$.
\begin{proposition}[Concentration of the $(k,t)$--interpolating free energy]\label{conc-free}
Let $P_0$ with bounded support in $[-M, M]$. One can find $c  > 0$ which depends only on $M$ and $\Delta$
such that for all $k = 1, \ldots, K$, $t \in  [0,1]$ and $\epsilon \in  [0, 1]$,
\begin{align}
 \mathbb{P}\big[ \vert F_{k, t; \epsilon}(\boldsymbol{\Theta}) - f_{k, t; \epsilon} \vert > u  \big] \leq e^{- c n u^2}
\end{align}
where $u>0$. Explicit expressions for $c$ can be derived from \eqref{explicit} in sec.~\ref{concentration-free-energy}.
\end{proposition}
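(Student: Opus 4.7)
The plan is to split the deviation as
\begin{align*}
F_{k,t;\epsilon}(\boldsymbol{\Theta}) - f_{k,t;\epsilon} = \bigl(F_{k,t;\epsilon}(\boldsymbol{\Theta}) - \mathbb{E}_{\bZ}[F_{k,t;\epsilon}(\boldsymbol{\Theta})\mid \bS]\bigr) + \bigl(\mathbb{E}_{\bZ}[F_{k,t;\epsilon}(\boldsymbol{\Theta})\mid\bS] - f_{k,t;\epsilon}\bigr),
\end{align*}
where $\bZ$ collectively denotes all the Gaussian noise variables $\{\bZ^{(k')},\widetilde\bZ^{(k')}\}_{k'=1}^K$ and $\widehat\bZ$. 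The first bracket is handled by Gaussian concentration (Borell/TIS), conditionally on $\bS$; the second bracket is a function of the bounded random vector $\bS$ alone, handled by bounded-differences (McDiarmid). A union bound applied to the events $\{|\cdot|>u/2\}$ yields the claim with a single constant $c$.

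For the Gaussian step, I would compute the partial derivatives of $F_{k,t;\epsilon}(\boldsymbol{\theta})$ using $\partial_g F = n^{-1}\langle \partial_g \mathcal{H}_{k,t;\epsilon}\rangle_{k,t;\epsilon}$. Since $P_0$ is supported in $[-M,M]$, one has $|\langle X_i\rangle|,|\langle X_iX_j\rangle|\leq M^2$. For the Gaussian matrix variables in the $h(\cdot,\cdot,\bz^{(k')},\sigma^2)$ pieces the squared partial $|\partial_{z_{ij}^{(k')}}F|^2$ is of order $M^4/(n^3\sigma^2)$, summing to $O(M^4/(n\sigma^2))$ over $i\leq j$; since $\sigma^2\geq K\Delta$ for the first sum in \eqref{intH} and $\sigma^2=K\Delta/(1-t)$ for the local term, summing over the $K$ slots cancels the $K$ and contributes $O(M^4/(n\Delta))$. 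A parallel computation for the vector noises in $h_{\rm mf}$ gives contributions of order $O(M^2 m_{\max}/(n\Delta))$, and the perturbation contributes $O(\epsilon M^2/n)$. Altogether the Lipschitz constant of $\bZ\mapsto F_{k,t;\epsilon}(\bS,\bZ)$ is bounded by $\sqrt{C(M,\Delta)/n}$, uniformly in $k,t\in[0,1]$, $K$ and the trial parameters (assumed bounded in view of how Lemma~\ref{concentration} is applied). Borell's inequality then yields $\mathbb{P}[|F-\mathbb{E}_{\bZ}F|>u/2\mid\bS]\leq 2e^{-c_1 n u^2}$.

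For the McDiarmid step, I would bound, for each $i$, the oscillation of $\mathbb{E}_{\bZ}F_{k,t;\epsilon}$ when $s_i$ is replaced by $s'_i$ (with $|s_i|,|s'_i|\leq M$) via $|\ln Z(\bs)-\ln Z(\bs')|\leq \sup_{\bx}|\mathcal{H}_{k,t;\epsilon}(\bx;\bs,\cdot)-\mathcal{H}_{k,t;\epsilon}(\bx;\bs',\cdot)|$. In each $h$ piece a single coordinate flip perturbs $O(n)$ terms of magnitude $M^4/(n\sigma^2)$, giving an $O(M^4/\sigma^2)=O(M^4/(K\Delta))$ change; summing over the $K$ pieces gives an $O(1)$ change in $\mathcal{H}$, hence an $O(1/n)$ change in $F$. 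The $h_{\rm mf}$ pieces and the perturbation contribute comparably. Thus each bounded-differences constant is $c_i=C'(M,\Delta)/n$, so $\sum_{i=1}^n c_i^2=O(1/n)$ and McDiarmid yields $\mathbb{P}[|\mathbb{E}_{\bZ}F-f|>u/2]\leq 2e^{-c_2 n u^2}$. Setting $c=\min(c_1,c_2)/2$ and absorbing the factor of $2$ gives the desired sub-Gaussian tail with $c$ depending only on $M$ and $\Delta$.

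The main technical obstacle is the uniformity of the Gaussian Lipschitz estimate in $(k,t)$ and in the trial parameters: one has to verify that all the inverse variances $\sigma^{-2}$ (which carry factors like $(1-t)^{-1}$ or $(tm_k)^{-1}$) are compensated by the $(1-t)$ and $t$ prefactors when they appear explicitly in the gradient components, and that the cancellation of $K$ works for every family of terms simultaneously, including the mean-field ones whose prefactors involve $m_{k'}$. This is essentially bookkeeping, but must be carried out carefully to obtain a constant independent of $K$; the boundedness of $\{m_k\}$ that this requires is a hypothesis naturally satisfied by the adaptive choice \eqref{choicex}, since Gibbs overlaps are bounded by $M^2$.
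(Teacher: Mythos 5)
Your decomposition $F_{k,t;\epsilon}-f_{k,t;\epsilon}=(F-\mathbb{E}_{\bZ}F)+(\mathbb{E}_{\bZ}F-f)$, the McDiarmid step for the second bracket, and the union bound at the end are exactly what the paper does (Lemma \ref{conc-free-lemma-2} and the triangle-plus-union-bound argument at the very end of section \ref{concentration-free-energy}). The only genuine difference is in how you treat the Gaussian part. You propose to compute the Euclidean gradient of $\bZ\mapsto F_{k,t;\epsilon}(\bs,\bZ)$ using $\partial_z F = n^{-1}\langle\partial_z\mathcal{H}\rangle$, bound the Gibbs brackets by powers of $M$, carry out the bookkeeping showing that the $K$ slots and the $\sigma^{-2}$ prefactors cancel to give a Lipschitz constant of order $\sqrt{C(M,\Delta)/n}$, and then invoke the Borell--TIS Gaussian concentration inequality. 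The paper instead fixes $\bs$, introduces two independent copies of the Gaussian disorder, interpolates between them with the parameter $\tau$, and controls the exponential moment $\varphi_{k,t;\epsilon}(\tau)=\ln\mathbb{E}_1[\exp(s\,\mathbb{E}_2[\ln \mathcal{Z}(\tau)])]$ by differentiating in $\tau$ and integrating by parts; the sub-Gaussian tail then follows from Markov's inequality applied to this moment-generating function (this is the Guerra--Toninelli scheme from \cite{generalised-mean-field}, reproduced in Lemma \ref{conc-free-lemma-1}). The two routes give the same qualitative conclusion, and your Lipschitz computation is the same bookkeeping the paper performs after integration by parts (compare the quantities $\mathbb{E}_2[\langle X_iX_j\rangle]^2/(K\Delta n)$, $m_{k'}\mathbb{E}_2[\langle X_i\rangle]^2/(K\Delta)$, $\epsilon\,\mathbb{E}_2[\langle X_i\rangle]^2$ showing up there). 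Your approach is somewhat more economical in that it offloads the MGF argument to a black-box concentration inequality; the paper's MGF interpolation is more self-contained and generalises more readily to non-Gaussian disorder. Your remark that uniformity in $K$ requires the trial parameters $m_k$ to be uniformly bounded is exactly right, and the paper uses the same bound $|m_k|\leq M^2$ (stated explicitly inside the proof of Lemma \ref{conc-free-lemma-1} and Lemma \ref{conc-free-lemma-2}).
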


This proposition is proved in sec.~\ref{concentration-free-energy}. In the rest of this section 
we prove Lemmas \ref{concentration}, \ref{thermal-fluctuations} and \ref{disorder-fluctuations}. 
The parameters $k$ and $t$ stay fixed and do not play any role, but it is important to be careful about the $\epsilon$ dependence.

\subsubsection*{Proof of Lemma \ref{concentration}}
The proof is based on the remarkable identity (here $S\sim P_0$)
\begin{align}
\mathbb{E}\big[\big\langle (\mathcal{L} - \mathbb{E}[\langle \mathcal{L}\rangle_{k,t;\epsilon}])^2\big\rangle_{k,t;\epsilon}\big]
= \,& \frac{1}{4}\big(\mathbb{E}[\langle q_{\bX,\bS}^2\rangle_{k,t;\epsilon}] - \mathbb{E}[\langle q_{\bX,\bS}^{}\rangle_{k,t;\epsilon}]^2\big)
 + \frac{1}{2}\big(\mathbb{E}[\langle q_{\bX,\bS}^2\rangle_{k,t;\epsilon}] -   \mathbb{E}[\langle q_{\bX,\bS}^{}\rangle_{k,t;\epsilon}^2]\big)
 \nonumber \\ &
\qquad + \frac{1}{4n \tilde\epsilon} \mathbb{E}[S^2]\,. \label{82}
\end{align}
Its derivation is found in sec. \ref{fluctuation-identity} and involves lengthy algebra using identity \eqref{nishibasic} and integrations by parts w.r.t the Gaussian noise. This formula implies
\begin{align}
\mathbb{E}\big[\big\langle (q_{\bX,\bS}^{} -   \mathbb{E}[\langle q_{\bX,\bS}^{}\rangle_{k,t;\epsilon}])^2\big\rangle_{k,t;\epsilon}\big]
\leq 
4\mathbb{E}\big[\big\langle (\mathcal{L} - \mathbb{E}[\langle \mathcal{L}\rangle_{k,t;\epsilon}])^2\big\rangle_{k,t;\epsilon}\big]
\label{nice-inequ}
\end{align}
and using Fubini's theorem
\begin{align}
\int_{a_n}^{b_n} d\epsilon \frac{1}{K_n}\sum_{k=1}^{K_n}\int_0^1 dt\, 
\mathbb{E}\big[\big\langle (q_{\bX,\bS}^{} -   & \mathbb{E}[\langle q_{\bX,\bS}^{}\rangle_{k,t;\epsilon}])^2\big\rangle_{k,t;\epsilon}\big]
\nonumber \\ &
\leq 
\frac{4}{K_n}\sum_{k=1}^{K_n}\int_0^1 dt 
\int_{a_n}^{b_n} d\epsilon\, \mathbb{E}\big[\big\langle (\mathcal{L} - \mathbb{E}[\langle \mathcal{L}\rangle_{k,t;\epsilon}])^2\big\rangle_{k,t;\epsilon}\big]\,.
\label{nice-inequ2}
\end{align}
Then applying Proposition \ref{L-concentration}
we obtain (since the bounds are uniform in $k,t$)
\begin{align}
 \int_{a_n}^{b_n} d\epsilon
\frac{1}{K_n}\sum_{k=1}^{K_n}\int_0^1dt\, 
\mathbb{E}\big[\big\langle (q_{\bX,\bS}^{} - \mathbb{E}[\langle q_{\bX,\bS}^{}\rangle_{k,t;\epsilon}])^2\big\rangle_{k,t;\epsilon}\big]
\leq \frac{4C}{a_n^2\,n^{\alpha}}\,.
\end{align}
so that \eqref{concentration-overlap} is verified for any $0<\alpha<1/4$.
\QEDA
\\ 

We now turn to the proof of Lemmas \ref{thermal-fluctuations} and \ref{disorder-fluctuations}. The main ingredient is a set of formulas for the first two derivatives of the free energy 
w.r.t $\tilde\epsilon$.  
For any given realisation of the quenched disorder we have the equalities (here $Z_i\sim\mathcal{N}(0,1)$ i.i.d)
\begin{align}
 \frac{dF_{k, t; \epsilon}(\boldsymbol{\theta})}{d\tilde\epsilon}  & = \langle \mathcal{L} \rangle_{k,t;\epsilon}\,,\label{first-derivative}\\
 \frac{1}{n}\frac{d^2F_{k, t; \epsilon}(\boldsymbol{\theta})}{d\tilde{\epsilon}^2}  &= - (\langle \mathcal{L}^2 \rangle_{k,t;\epsilon} - \langle \mathcal{L} \rangle_{k,t;\epsilon}^2) +
 \frac{1}{4 n^2\tilde\epsilon^{3/2}}\sum_{i=1}^n  \langle X_i\rangle_{k,t;\epsilon} z_i\label{second-derivative}\,.
\end{align}
Averaging \eqref{first-derivative} and \eqref{second-derivative} and using a Gaussian integration by parts w.r.t $z_i$ and the identity
$\mathbb{E}[\langle X_i\rangle_{k,t;\epsilon} S_i]  =  \mathbb{E}[\langle X_i\rangle_{k,t;\epsilon}^2]$ 
(again a special case of \eqref{nishibasic}), we find (see Appendix \ref{appendix-exchg-lim})
\begin{align}
 \frac{df_{k, t;\epsilon}}{d\tilde\epsilon} 
 &= \mathbb{E}[\langle \mathcal{L} \rangle_{k,t;\epsilon}] 
 = - \frac{1}{2n} \sum_{i=1}^n\mathbb{E}[\langle X_i\rangle_{k,t;\epsilon}^2] \,,\label{first-derivative-average}\\
 \frac{1}{n}\frac{d^2f_{k, t;\epsilon}}{d\tilde{\epsilon}^2} &= - \mathbb{E}[\langle \mathcal{L}^2 \rangle_{k,t;\epsilon} - \langle \mathcal{L} \rangle_{k,t;\epsilon}^2]
 +\frac{1}{4n^2\tilde\epsilon} \sum_{i=1}^n \mathbb{E}[\langle X_i^2\rangle_{k,t;\epsilon} - \langle X_i\rangle_{k,t;\epsilon}^2]\label{average-second-derivative}\,.
\end{align}
There is another useful formula for $d^2f_{k, t;\epsilon}/d{\tilde \epsilon}^2$ that can be worked 
out directly (see sec. \ref{fluctuation-identity}) by differentiating the second expression in \eqref{first-derivative-average} instead of the first:
\begin{align}
\frac{1}{n}\frac{d^2f_{k, t;\epsilon}}{d\tilde{\epsilon}^2}  & 
= \frac{1}{2n}\sum_{i=1}^n \mathbb{E}[2\langle X_i\rangle_{k,t;\epsilon} \langle X_i \mathcal{L}\rangle_{k,t;\epsilon} 
- 2 \langle X_i \rangle_{k,t;\epsilon}^2 \langle \mathcal{L}\rangle_{k,t;\epsilon}] 
\nonumber \\ &
= -\frac{1}{2n^2}\sum_{i, j =1}^n \mathbb{E}[(\langle X_i X_j\rangle_{k,t;\epsilon} - \langle X_i\rangle_{k,t;\epsilon}\langle X_j\rangle_{k,t;\epsilon})^2]\,.
\label{alternative}
\end{align}
This formula clearly shows that $f_{k, t;\epsilon}$ is a concave function of $\tilde\epsilon$. 

\subsubsection*{Proof of Lemma \ref{thermal-fluctuations}}
From \eqref{average-second-derivative} we have 
\begin{align}
\mathbb{E}\big[\big\langle (\mathcal{L} - \langle \mathcal{L} \rangle_{k,t;\epsilon})^2\big\rangle_{k,t;\epsilon}\big]
& = 
-\frac{1}{n}\frac{d^2f_{k, t;\epsilon}}{d\tilde{\epsilon}^2}
+\frac{1}{4n^2\tilde\epsilon} \sum_{i=1}^n \mathbb{E}[\langle X_i^2\rangle_{k,t;\epsilon} - \langle X_i\rangle_{k,t;\epsilon}^2] 
\nonumber \\ &
\leq 
-\frac{1}{n}\frac{d^2f_{k, t;\epsilon}}{d\tilde{\epsilon}^2} +\frac{\mathbb{E}[S^2]}{4n\epsilon} ,
\end{align}
where we used $\tilde{\epsilon}\geq \epsilon$ and $\mathbb{E}[\langle X_i^2\rangle_{k,t;\epsilon}] = \mathbb{E}[S^2]$ (an application of \eqref{nishibasic}). 
We perform an integration of this inequality over $\epsilon\in [a_n, b_n]$. Note that the map 
$\epsilon\in [a_n, b_n]\mapsto \tilde\epsilon\in [\tilde\epsilon(a_n), \tilde\epsilon(b_n)]$ is differentiable and the inverse map is well defined and also differentiable
since we have assumed that the trial parameters are differentiable and non decreasing. Obviously the Jacobian $J = d\tilde\epsilon/d\epsilon \geq 1$ since the trial parameters are non-decreasing.
Integrating over $\epsilon\in [a_n, b_n]$ and performing the change of variables $\epsilon\mapsto \tilde\epsilon$, and using $J\geq 1$, we obtain 
\begin{align}
\int_{a_n}^{b_n} d\epsilon\, \mathbb{E}\big[\big\langle (\mathcal{L} - \langle \mathcal{L} \rangle_{k,t;\epsilon})^2\big\rangle_{k,t;\epsilon}\big]
& \leq 
- \frac{1}{n}\int_{a_n}^{b_n} d\epsilon \,\frac{d^2f_{k, t;\epsilon}}{d\tilde{\epsilon}^2} + \frac{\mathbb{E}[S^2]}{4n}\int_{a_n}^{b_n} \,\frac{d\epsilon}{\epsilon} 
\nonumber \\ &
=
- \frac{1}{n}\int_{\tilde\epsilon(a_n)}^{\tilde\epsilon(b_n)} \frac{d\tilde\epsilon}{J} \,\frac{d^2f_{k, t;\epsilon}}{d\tilde{\epsilon}^2} 
+ \frac{\mathbb{E}[S^2]}{4n}\int_{a_n}^{b_n} \frac{d\epsilon}{\epsilon}
\nonumber \\ &
\leq 
- \frac{1}{n}\int_{\tilde\epsilon(a_n)}^{\tilde\epsilon(b_n)} d\tilde\epsilon \,\frac{d^2f_{k, t;\epsilon}}{d\tilde{\epsilon}^2} 
+ \frac{\mathbb{E}[S^2]}{4n}\int_{a_n}^{b_n} \,\frac{d\epsilon}{\epsilon} 
\nonumber \\ &
\leq
\Big(\frac{1}{n}\frac{df_{k, t; \epsilon}}{d\tilde\epsilon}\Big\vert_{\tilde\epsilon(a_n)} 
- \frac{1}{n}\frac{df_{k, t; \epsilon}}{d\tilde\epsilon}\Big\vert_{\tilde\epsilon(b_n)}\Big) + \frac{\mathbb{E}[S^2]}{4n}(\ln b_n-\ln a_n)\,.
\label{93}
\end{align}
From \eqref{first-derivative-average} combined with the convexity of the square and an application of the Nishimori identity, we see that the first term is certainly smaller in absolute value than $\frac{1}{n}\E[S^2]$. 
The second term is smaller than $\frac{\mathbb{E}[S^2]}{4n}\vert\ln a_n\vert$.
This concludes the proof of Lemma \ref{thermal-fluctuations}.
\QEDA

\subsubsection*{Proof of Lemma \ref{disorder-fluctuations}}
In what follows we view $F_{k, t;\epsilon}(\boldsymbol{\theta})$ and $f_{k, t;\epsilon}$ as functions of 
$\tilde\epsilon$. Recall that $P_0$ has bounded support in $[-M, M]$. Define the two functions of $\tilde\epsilon$
\begin{align}\label{new-free}
 \widetilde F(\tilde\epsilon) \defeq F_{k, t;\epsilon}(\boldsymbol{\theta}) +\frac{\sqrt{\tilde\epsilon}}{n} 
 \sum_{i=1}^n M\vert \widehat z_i\vert,
 \quad 
 \widetilde f(\tilde\epsilon) \defeq f_{k, t;\epsilon} + \frac{\sqrt{\tilde\epsilon}}{n} \sum_{i=1}^n M\, \mathbb{E}[\vert \widehat z_i\vert].
\end{align}
Because of 
\eqref{second-derivative}  we see that the second derivative of 
$\widetilde F(\tilde\epsilon)$ is negative, so this is a {\it concave} function of $\tilde\epsilon$
(without this extra term $F_{k,t;\epsilon}$ is not necessarily concave, although $f_{k,t;\epsilon}$ is concave). 
Note also that $\widetilde f(\tilde\epsilon)$ is concave.
Concavity implies for any $\delta>0$
\begin{align}
\frac{d \widetilde F(\tilde\epsilon)}{d\tilde\epsilon} - \frac{d\widetilde f(\tilde\epsilon)}{d\tilde\epsilon} \le\,& \frac{\widetilde F(\tilde\epsilon)-\widetilde F(\tilde\epsilon-\delta)}{\delta}
- \frac{d\widetilde f(\tilde\epsilon)}{d\tilde\epsilon} \nn
\le \,&\frac{\widetilde F(\tilde\epsilon) - \widetilde f(\tilde\epsilon)}{\delta} 
- \frac{\widetilde F(\tilde\epsilon-\delta) - \widetilde f(\tilde\epsilon-\delta)}{\delta} + \frac{d\widetilde f(\tilde\epsilon-\delta)}{d\tilde\epsilon}
- \frac{d\widetilde f(\tilde\epsilon)}{d\tilde\epsilon},  
\label{eq:firstBound_df}\\
\frac{d \widetilde F(\tilde\epsilon)}{d\tilde\epsilon} - \frac{d\widetilde f(\tilde\epsilon)}{d\tilde\epsilon} \ge \,&\frac{\widetilde F(\tilde\epsilon+\delta) - 
\widetilde f(\tilde\epsilon+\delta)}{\delta} - \frac{\widetilde F(\tilde\epsilon) - \widetilde f(\tilde\epsilon)}{\delta} 
+ \frac{d\widetilde f(\tilde\epsilon +\delta)}{d\tilde\epsilon} - \frac{d\widetilde f(\tilde\epsilon)}{d\tilde\epsilon}. 
\label{eq:secondBound_df}
\end{align}
The difference between the derivatives appearing on the r.h.s of these inequalities cannot be considered small because at a first order transition 
point the derivatives have jump discontinuities. Set
\begin{align}
-C^-(\tilde\epsilon)&\defeq \frac{d\widetilde f(\tilde\epsilon+\delta)}{d\tilde\epsilon} - \frac{d\widetilde f(\tilde\epsilon)}{d\tilde\epsilon} \le 0, 
\qquad C^+(\tilde\epsilon)\defeq\frac{d\widetilde f(\tilde\epsilon -\delta)}{d\tilde\epsilon} - \frac{d\widetilde f(\tilde\epsilon)}{d\tilde\epsilon} \ge 0, 
\label{175}
\end{align}
where the signs of these quantities follow from concavity of $\widetilde f(\tilde\epsilon)$. From \eqref{eq:firstBound_df}, \eqref{eq:secondBound_df} and \eqref{175} 
we get
\begin{align}\label{long-inequ}
\frac{\widetilde F(\tilde\epsilon+\delta) - \widetilde f(\tilde\epsilon+\delta)}{\delta} - \frac{\widetilde F(\tilde\epsilon) - \widetilde f(\tilde\epsilon)}{\delta} 
- C^-(\tilde\epsilon) &\le \frac{d \widetilde F(\tilde\epsilon)}{d\tilde\epsilon} - \frac{d\widetilde f(\tilde\epsilon)}{d\tilde\epsilon} \nn
&\le \frac{\widetilde F(\tilde\epsilon) 
 -  \widetilde f(\tilde\epsilon)}{\delta} - \frac{\widetilde F(\tilde\epsilon-\delta) - \widetilde f(\tilde\epsilon-\delta)}{\delta} + C^+(\tilde\epsilon)\,.
\end{align}
Now we will cast this inequality in a more usable form. From \eqref{new-free}
\begin{align}\label{fdiff}
 \widetilde F(\tilde\epsilon) - \widetilde f(\tilde\epsilon) = F_{k, t;\epsilon}(\boldsymbol{\theta}) - f_{k, t; \epsilon} + \sqrt{\tilde\epsilon} M A 
\end{align}
with 
\begin{align}
 A = \frac{1}{n}\sum_{i=1}^n \big(\vert \widehat z_i\vert -\mathbb{E}[\vert \widehat z_i\vert]\big)
\end{align} 
and from \eqref{first-derivative}, \eqref{first-derivative-average},
\begin{align}\label{derdiff}
 \frac{d \widetilde F(\tilde\epsilon)}{d\tilde\epsilon} - \frac{d\widetilde f(\tilde\epsilon)}{d\tilde\epsilon} = 
 \langle \mathcal{L}\rangle_{k,t;\epsilon} - \mathbb{E}[\langle \mathcal{L}\rangle_{k,t;\epsilon}] + \frac{M}{2\sqrt{\tilde\epsilon}} A.
\end{align}
From \eqref{fdiff}, \eqref{derdiff} it is easy to show that \eqref{long-inequ} implies
\begin{align}\label{usable-inequ}
&\big\vert \langle \mathcal{L}\rangle_{k,t;\epsilon} - \mathbb{E}[\langle \mathcal{L}\rangle_{k,t;\epsilon}]\big\vert\nn
&\qquad\qquad\leq 
\delta^{-1} \sum_{u\in \{\tilde\epsilon -\delta, \tilde\epsilon, \tilde\epsilon+\delta\}}
 \big(\vert F_{k, t;u}(\boldsymbol{\theta}) - f_{k, t;u} \vert + M\vert A \vert \sqrt{u} \big)
  + C^+(\tilde\epsilon) + C^-(\tilde\epsilon) + \frac{M}{2\sqrt{\tilde\epsilon}} \vert A\vert.
\end{align}
At this point we use Proposition \ref{conc-free}. A standard argument given at the end of this proof shows that this proposition implies 
\begin{align}\label{implies}
 \mathbb{E}[ ( F_{k, t; \epsilon}(\boldsymbol{\Theta}) - f_{k, t; \epsilon})^2 ]  = \mathcal{O}(n^{-1 + \eta})
\end{align}
for any $0 <  \eta  <  1$. Squaring, then taking the expectation of \eqref{usable-inequ} 
and using $\mathbb{E}[A^2]  =  \mathcal{O}(n^{-1})$, $\tilde\epsilon\geq \epsilon$, and $(\sum_{i=1}^pv_i)^2 \le p\sum_{i=1}^pv_i^2$,
\begin{align}\label{intermediate}
 \frac{1}{9}\mathbb{E}\big[\big(\langle \mathcal{L}\rangle_{k,t;\epsilon} - \mathbb{E}[\langle \mathcal{L}\rangle_{k,t;\epsilon}]\big)^2\big]
 \leq &\, 
 \delta^{-2} \mathcal{O}(n^{-1 + \eta}) +3 \delta^{-2}M^2 (\tilde\epsilon +\delta)\mathcal{O}(n^{-1}) 
 \nonumber \\ &\quad
 + C^+(\tilde\epsilon)^2 + C^-(\tilde\epsilon)^2
 + \frac{M^2}{4\epsilon} \mathcal{O}(n^{-1})\,.
\end{align}
We now take $\epsilon\in[a_n, b_n]$ and $0<\delta <a_n$.
Using the change of variables $\epsilon\mapsto \tilde\epsilon(\epsilon)$, that the Jacobian $J = d\tilde\epsilon/d\epsilon \geq 1$, 
$|d\widetilde f(\tilde\epsilon)/d\tilde\epsilon|  \leq  (\mathbb{E}[S^2]  +  M/\sqrt{\tilde\epsilon})/2$ from \eqref{first-derivative-average} 
and \eqref{new-free}, $C^\pm(\tilde\epsilon)  \geq  0$ from \eqref{175}, and the mean value theorem
\begin{align}
 \int_{a_n}^{b_n} d\epsilon\, \big(C^+(&\tilde\epsilon)^2 + C^-(\tilde\epsilon)^2\big)
 = \int_{\tilde\epsilon(a_n)}^{\tilde\epsilon(b_n)} \frac{d\tilde\epsilon}{J} \, \big(C^+(\tilde\epsilon)^2 + C^-(\tilde\epsilon)^2\big)
 \nonumber \\ &
 \leq 
 \Big(\mathbb{E}[S^2] + \frac{M}{\sqrt{\tilde\epsilon(a_n)}}\Big)
 \int_{\tilde\epsilon(a_n)}^{\tilde\epsilon(b_n)} d\tilde\epsilon\, \big(C^+(\tilde\epsilon) + C^-(\tilde\epsilon)\big)
 \nonumber \\ &
=   
\Big(\mathbb{E}[S^2] + \frac{M}{\sqrt{\tilde\epsilon(a_n)}}\Big) \Big[\Big(\widetilde f(\tilde\epsilon(b_n)-\delta) - \widetilde f(\tilde\epsilon(b_n)+\delta)\Big) + \Big(\widetilde f(\tilde\epsilon(a_n)+\delta) - \widetilde f(\tilde\epsilon(a_n)-\delta)\Big)\Big]
\nonumber \\ &
 \leq 
 2\delta \Big(\mathbb{E}[S^2] + \frac{M}{\sqrt{\tilde\epsilon(a_n)-\delta}}\Big)^2
 \nonumber \\ &
 \leq 
 2\delta \Big(\mathbb{E}[S^2] + \frac{M}{\sqrt{a_n-\delta}}\Big)^2\,.
\end{align}
Thus, integrating \eqref{intermediate} over $\epsilon\in [a_n, b_n]$ yields with $0<\delta < a_n$
\begin{align*}
 \frac{1}{9}\int_{a_n}^{b_n} d\epsilon\, 
 \mathbb{E}\big[\big(\langle \mathcal{L}&\rangle_{k,t;\epsilon} - \mathbb{E}[\langle \mathcal{L}\rangle_{k,t;\epsilon}]\big)^2\big]\nn
 \leq \,&\delta^{-2} \mathcal{O}(n^{-1 + \eta}) +3 \delta^{-2}M^2 (B +\delta)\mathcal{O}(n^{-1}) + \frac{M^2}{4}\vert \ln a_n\vert\mathcal{O}(n^{-1}) 
 + 2\delta \Big(\mathbb{E}[S^2] + \frac{M}{\sqrt{a_n-\delta}}\Big)^2
\end{align*}
where $B\ge\tilde \epsilon$, because $\tilde \epsilon$ is bounded by assumption of the boundedness of the $m_k$'s and $\epsilon \le 1$.
Finally we choose $\delta  =  a_n n^{-\frac{1}{4}  +  \frac{\eta}{2}}$, $0 < \eta < 1/2$, and obtain for $n$ large enough 
(and $a$ fixed positive small)
\begin{align}\label{conc-L}
 \int_{a_n}^{b_n} d\epsilon\, 
 \mathbb{E}\big[\big( \langle \mathcal{L}\rangle_{k,t;\epsilon} - \mathbb{E}[\langle \mathcal{L}\rangle_{k,t;\epsilon}]\big)^2\big]
 \leq
 C a_n^{-2}n^{-\frac{1}{4} + \frac{\eta}{2}}
\end{align}
for some constant $C>0$ depending only on $M$ and $\mathbb{E}[S^2]$. 

It remains to justify \eqref{implies}. By the Cauchy-Schwarz inequality and Proposition \ref{conc-free} we have 
\begin{align}
 \mathbb{E}\bigl[ ( F_{k, t; \epsilon}(\boldsymbol{\Theta}) - f_{k, t; \epsilon})^2 \bigr]  &= 
 \mathbb{E}\bigl[ ( F_{k, t; \epsilon}(\boldsymbol{\Theta}) - f_{k, t; \epsilon})^2 \mathds{1}(\vert F_{k, t; \epsilon}(\boldsymbol{\Theta}) - f_{k, t; \epsilon}\vert \leq u) \bigr] 
 \nonumber \\ & \qquad+ 
 \mathbb{E}\bigl[ ( F_{k, t; \epsilon}(\boldsymbol{\Theta}) - f_{k, t; \epsilon})^2 \mathds{1}(\vert F_{k, t; \epsilon}(\boldsymbol{\Theta}) - f_{k, t; \epsilon}\vert > u) \bigr] 
 \nonumber \\ &
 \leq 
 u^2 + \sqrt{\mathbb{E}\big[( F_{k, t; \epsilon}(\boldsymbol{\Theta}) - f_{k, t; \epsilon})^4\big]} \sqrt{\mathbb{E}\big[\mathds{1}(\vert F_{k, t; \epsilon}(\boldsymbol{\Theta}) - f_{k, t; \epsilon}\vert > u) \big] }
 \nonumber \\ &
 \leq 
 u^2 + \sqrt{\mathbb{E}\big[ ( F_{k, t; \epsilon}(\boldsymbol{\Theta}) - f_{k, t; \epsilon})^4\big]} e^{-cn u^2/2}.
\end{align}
If we can show that the moments of the (random) free energy $F_{k, t; \epsilon}(\boldsymbol{\Theta})$ are bounded uniformly in 
$n$, then the choice $u = n^{-1/2 +\eta}$ for any $0 < \eta  <  1/2$ allows to conclude the proof.
Let us briefly show how the moments are estimated. By the Jensen's inequality
\begin{align}
F_{k, t;\epsilon}(\boldsymbol{\theta}) \leq \frac{1}{n} \int \big\{\prod_{i=1}^n dx_i P_0(x_i)\big\} \mathcal{H}_{k, t;\epsilon}(\bx;\boldsymbol{\theta}).
\end{align}
The expectation over $\bX$ is computed from \eqref{intH} and one finds a polynomial in $\{s_i, \{z_{ij}^{(k)}, \widetilde z_i^{(k)}\}_{k=1}^K, \widehat z_i\}_{i=1}^n$ which all have bounded moments. On the other hand from \eqref{h}, \eqref{hmf} by completing the squares we have
\begin{align}
h(\bx, \bs,\bz,\sigma^2) & \geq -\frac{1}{2\sigma^2}\sum_{i\leq j =1}^n\Big(\frac{s_is_j}{\sqrt n} + z_{ij}\sigma\Big)^2,
\\
h_{\rm mf}(\bx, \bs,\widetilde{\bz},\sigma^2) & \geq -\frac{1}{2\sigma^2}\sum_{i=1}^n(s_i + \widetilde z_{i})^2,
\end{align} 
and find that $\mathcal{H}_{k,t;\epsilon}(\bx;\boldsymbol{\theta})$ is lower bounded by a polynomial in $\{s_i, \{z_{ij}^{(k)}, \widetilde z_i^{(k)}\}_{k=1}^K, \widehat z_i\}_{i=1}^n$. This is also the case for $F_{k, t;\epsilon}(\boldsymbol{\theta})$. With these upper and lower bounds on $F_{k, t;\epsilon}(\boldsymbol{\theta})$ it is easy to show that for any integer $p$
\begin{align}
\mathbb{E}[\vert F_{k, t;\epsilon}(\boldsymbol{\theta})\vert^p] \leq C_p
\end{align}
where $C_p$ is independent of $n$ and depends only on $\Delta$ and moments of $P_0$. \QEDA

\section{A fluctuation identity}\label{fluctuation-identity}
The purpose of this appendix is to prove the identity \eqref{82} relating the various fluctuations. This identity is quite powerful
and holds in quite some generality and in particular for the three applications presented in this paper. To alleviate the notation we denote 
$\langle-\rangle_{k,t;\epsilon}$ simply by $\langle - \rangle$. It actually follows from the exact formula
\begin{align}
\mathbb{E}\big[ \big\langle (\mathcal{L} - \mathbb{E}[\langle \mathcal{L}\rangle])^2\big\rangle\big]
= \, &\frac{1}{4n^2}\sum_{i, j=1}^n \big\{\mathbb{E}[\langle X_i X_j\rangle^2] - \mathbb{E}[\langle X_i\rangle^2] \mathbb{E}[\langle X_j\rangle^2]\big\}
\nonumber \\ 
&\quad+ \frac{1}{2n^2}\sum_{i, j=1}^n \big\{\mathbb{E}[\langle X_i X_j\rangle^2] - \mathbb{E}[\langle X_i X_j\rangle \langle X_i\rangle
\langle X_j\rangle]\big\} +\frac{1}{4n^2 \tilde\epsilon} \sum_{i=1}^n \mathbb{E}[\langle X_i^2\rangle]
\label{exact}
\end{align}
that we derive next. But before doing so, let us show how \eqref{exact} implies \eqref{82}. First we note that by \eqref{nishibasic} 
the last subdominant sum equals $\mathbb{E}[S^2]/4 n\tilde\epsilon={\cal O}(1/n)$. We then express the first two terms in terms of the overlap $q_{\bx,\bs}$. From \eqref{nishibasic} we have $\mathbb{E}[\langle X_i X_j\rangle^2]  =  \mathbb{E}[S_iS_j\langle X_i X_j\rangle]$ and therefore
\begin{align}\label{nish1}
\frac{1}{n^2}\sum_{i, j=1}^n \mathbb{E}[\langle X_i X_j\rangle^2] = \frac{1}{n^2}\sum_{i, j=1}^n\mathbb{E}[S_i S_j\langle X_i X_j\rangle] 
= \mathbb{E}[\langle q_{\bX,\bS}^2\rangle].
\end{align}
Similarly $\mathbb{E}[\langle X_i\rangle^2] =  \mathbb{E}[S_i \langle X_i \rangle]$, so 
\begin{align}\label{nish2}
\frac{1}{n^2}\sum_{i, j=1}^n \mathbb{E}[\langle X_i\rangle^2] \mathbb{E}[\langle X_j\rangle^2] 
= \mathbb{E}[\langle q_{\bX,\bS}^{}\rangle]^2,
\end{align}
and $\mathbb{E}[\langle X_i X_j\rangle \langle X_i\rangle
\langle X_j\rangle] = \mathbb{E}[S_i S_j \langle X_i\rangle
\langle X_j\rangle]$ which implies
\begin{align}\label{nish3}
\frac{1}{n^2}\sum_{i,j=1}^n\mathbb{E}[\langle X_i X_j\rangle \langle X_i\rangle
\langle X_j\rangle] = \mathbb{E}[\langle q_{\bX,\bS}^{}\rangle^2].
\end{align}
Replacing the three last identities in \eqref{exact} leads to \eqref{82}.

We now summarise the main steps leading to the formula \eqref{exact}, using the identity \eqref{nishibasic} and integrations by parts w.r.t the Gaussian noise. 
This formula follows by summing the two following identities
\begin{align}
\mathbb{E}[\langle \mathcal{L}^2 \rangle]  -  \mathbb{E}[\langle \mathcal{L} \rangle^2]
 &
  =   \frac{1}{2n^2} \sum_{i, j =1}^n  \big\{\mathbb{E}[\langle X_i X_j\rangle^2]  - 2\mathbb{E}[\langle X_iX_j\rangle \langle X_i\rangle\langle X_j\rangle]  
 + \mathbb{E}[\langle X_i\rangle^2\langle X_j\rangle^2]\big\}
 \nonumber \\ &
  \qquad+ 
  \frac{1}{4n^2\tilde\epsilon}  \sum_{i=1}^n \mathbb{E}[\langle X_i^2\rangle  -  \langle X_i\rangle^2],
  \label{thermal}
\\
\mathbb{E}[\langle \mathcal{L} \rangle^2]  -  \mathbb{E}[\langle \mathcal{L} \rangle]^2
 & 
  =  \frac{1}{4n^2} \sum_{i,j=1}^n \big\{ \mathbb{E}[\langle X_i X_j\rangle^2]  -  \mathbb{E}[\langle X_i\rangle^2] \mathbb{E}[\langle X_j\rangle^2]\big\}
  \nonumber \\ &
  \qquad+ 
 \frac{1}{2n^2} \sum_{i,j=1}^n \big\{ \mathbb{E}[\langle X_i\rangle \langle X_j\rangle \langle X_i X_j\rangle]
 -  \mathbb{E}[\langle X_i\rangle^2\langle X_j\rangle^2]\big\} + \frac{1}{4n^2\tilde\epsilon}\sum_{i=1}^n \mathbb{E}[\langle X_i\rangle^2].
\label{disorder} 
\end{align}
We first derive the second identity which requires somewhat longer calculations.

\subsubsection*{Derivation of \eqref{disorder}}

First we compute $\mathbb{E}[\langle \mathcal{L}\rangle]^2$.
From \eqref{L-def} we have
\begin{align}
\mathbb{E}[\langle \mathcal{L}\rangle]
= 
\frac{1}{n}\sum_{i=1}^n \Big\{ \frac{1}{2}\mathbb{E}[\langle X_i^2\rangle] - \mathbb{E}[\langle X_i\rangle S_i] -
\frac{1}{2\sqrt{\tilde\epsilon}} \mathbb{E}[\langle X_i\rangle\widehat Z_i]\Big\}.
\end{align}
From \eqref{nishibasic} we have $\mathbb{E}[\langle X_i\rangle S_i] = \mathbb{E}[\langle X_i\rangle^2]$ and by an integration by parts
\begin{align}\label{first-integration-by-parts}
\frac{1}{\sqrt{\tilde\epsilon}}\mathbb{E}[\langle X_i\rangle\widehat Z_i] = \frac{1}{\sqrt{{\tilde \epsilon}}}\mathbb{E}\Big[\frac{\partial}{\partial\widehat Z_i} \langle X_i\rangle\Big] =
\mathbb{E}[\langle X_i^2\rangle - \langle X_i\rangle^2] .
\end{align}
Thus we find
\begin{align}
 \mathbb{E}[\langle \mathcal{L}\rangle] = - \frac{1}{2n} \sum_{i=1}^n\mathbb{E}[\langle X_i\rangle^2],\qquad 
 \label{bothsides}
\end{align}
which is formula \eqref{first-derivative-average}. Squaring, we have
\begin{align}\label{easy}
 \mathbb{E}[\langle \mathcal{L}\rangle]^2 = \frac{1}{4n^2} \sum_{i, j=1}^n\mathbb{E}[\langle X_i\rangle^2]\mathbb{E}[\langle X_j\rangle^2]\,.
\end{align}
Now we compute $\mathbb{E}[\langle \mathcal{L}\rangle^2]$.
From \eqref{L-def} we have 
\begin{align}
\langle \mathcal{L}\rangle^2 = 
\frac{1}{n^2}\sum_{i,j=1}^n &\Big\{
\frac{1}{4}\langle X_i^2\rangle \langle X_j^2\rangle -\frac{1}{2}\langle X_i^2\rangle \langle X_j\rangle s_j - \frac{1}{4\sqrt{\tilde\epsilon}} \langle X_i^2\rangle \langle X_j\rangle \widehat z_j
\nonumber \\ &
- \frac{1}{2}\langle X_i\rangle s_i \langle X_j^2\rangle + \langle X_i\rangle s_i \langle X_j\rangle s_j + \frac{1}{2\sqrt{\tilde\epsilon}}
\langle X_i\rangle s_i \langle X_j\rangle \widehat z_j
\nonumber \\ &
- \frac{1}{4\sqrt{\tilde\epsilon}} \langle X_i\rangle \langle X_j^2\rangle \widehat z_i + \frac{1}{2\sqrt{\tilde\epsilon}} \langle X_i\rangle \langle X_j\rangle s_j \widehat z_i + \frac{1}{4\tilde \epsilon} \langle X_i\rangle \langle X_j\rangle \widehat z_i\widehat z_j
\Big\}.
\label{start}
\end{align}
Taking the expectation and using \eqref{nishibasic} (for the terms that do {\it not} contain explicit $z$-factors) we find
\begin{align}
\mathbb{E}[\langle \mathcal{L}\rangle^2] =
\frac{1}{n^2}\sum_{i,j=1}^n &\Big\{\frac{1}{4}\mathbb{E}[\langle X_i^2\rangle \langle X_j^2\rangle] -\frac{1}{2}\mathbb{E}[\langle X_i^2\rangle \langle X_j\rangle^2] - \frac{1}{4\sqrt{\tilde \epsilon}} \mathbb{E}[\langle X_i^2\rangle \langle X_j\rangle \widehat Z_j]
\nonumber \\ &
- \frac{1}{2}\mathbb{E}[\langle X_i\rangle^2 \langle X_j^2\rangle] + \mathbb{E}[\langle X_i X_j\rangle \langle X_i\rangle\langle X_j\rangle] + \frac{1}{2\sqrt{\tilde \epsilon}}
\mathbb{E}[\langle X_i\rangle S_i \langle X_j\rangle \widehat Z_j]
\nonumber \\ &
- \frac{1}{4\sqrt{\tilde \epsilon}} \mathbb{E}[\langle X_i\rangle \langle X_j^2\rangle \widehat Z_i] + \frac{1}{2\sqrt{\tilde \epsilon}} \mathbb{E}[\langle X_i\rangle \langle X_j\rangle S_j \widehat Z_i] + 
\frac{1}{4{\tilde \epsilon}} \mathbb{E}[\langle X_i\rangle \langle X_j\rangle \widehat Z_i\widehat Z_j]
\Big\}.
\label{startbis}
\end{align}
In order to simplify this expression we now integrate by parts all terms that contain explicit $Z$-factors:
\begin{align}
\frac{1}{4{\tilde \epsilon}} \mathbb{E}[\langle X_i\rangle \langle X_j\rangle \widehat Z_i\widehat Z_j]
 & =  \frac{1}{4{\tilde \epsilon}} \mathbb{E}\Big[\frac{\partial}{\partial \widehat Z_j}(\langle X_i\rangle \langle X_j\rangle \widehat Z_i)\Big]
\nonumber \\ &
= \frac{1}{4\sqrt {\tilde \epsilon}} \Big(\mathbb{E}[\langle X_i^2\rangle \langle X_j\rangle \widehat Z_j] - 
2 \mathbb{E}[\langle X_i\rangle^2 \langle X_j\rangle \widehat Z_j]  +  \mathbb{E}[\langle X_i\rangle \langle X_jX_i\rangle \widehat Z_j]\Big)
+ \frac{1}{4{\tilde \epsilon}} \mathbb{E}[\langle X_i\rangle^2] \delta_{ij}
\nonumber \\ &
=
\frac{1}{4}\mathbb{E}[\langle X_i^2 X_j\rangle\langle X_j\rangle] - \frac{1}{4}\mathbb{E}[\langle X_i^2\rangle\langle X_j\rangle^2]
+\frac{1}{4}\mathbb{E}[\langle X_i^2\rangle\langle X_j^2\rangle] - \frac{1}{4}\mathbb{E}[\langle X_i^2\rangle\langle X_j\rangle^2]
\nonumber \\ &
\qquad-\mathbb{E}[\langle X_i\rangle\langle X_j\rangle \langle X_iX_j\rangle] + \mathbb{E}[\langle X_i\rangle^2\langle X_j\rangle^2] -\frac{1}{2} \mathbb{E}\langle X_i\rangle^2 \langle X_j^2\rangle] 
+\frac{1}{2} \mathbb{E}[\langle X_i\rangle^2 \langle X_j\rangle^2] 
\nonumber \\ &
\qquad+
\frac{1}{4}\mathbb{E}[\langle X_i X_j\rangle^2] - \frac{1}{4} \mathbb{E}[\langle X_i\rangle \langle X_j\rangle \langle X_iX_j\rangle] +\frac{1}{4} \mathbb{E}[\langle X_i\rangle \langle X_i X_j^2\rangle] 
\nonumber \\ &
\qquad- \frac{1}{4}\mathbb{E}[\langle X_i\rangle \langle X_j\rangle \langle X_i X_j\rangle]+ \frac{1}{4{\tilde \epsilon}} \mathbb{E}[\langle X_i\rangle^2] \delta_{ij} 
\nonumber \\ &
= 
\frac{1}{4}\mathbb{E}[\langle X_i X_j\rangle^2] +
\frac{1}{4}\mathbb{E}[\langle X_i^2 X_j\rangle\langle X_j\rangle] +\frac{1}{4} \mathbb{E}[\langle X_i\rangle \langle X_i X_j^2\rangle]   
+\frac{1}{4}\mathbb{E}[\langle X_i^2\rangle\langle X_j^2\rangle] 
\nonumber \\ &
\qquad+
\frac{3}{2}\mathbb{E}[\langle X_i\rangle^2\langle X_j\rangle^2] -\frac{1}{2} \mathbb{E}\langle X_i\rangle^2 \langle X_j^2\rangle] - \frac{1}{2}\mathbb{E}[\langle X_i^2\rangle\langle X_j\rangle^2]
- \frac{3}{2} \mathbb{E}[\langle X_i\rangle \langle X_j\rangle \langle X_iX_j\rangle]  
\nonumber \\ &
\qquad+ \frac{1}{4{\tilde \epsilon}} \mathbb{E}[\langle X_i\rangle^2] \delta_{ij}.
\label{lengthy}
\end{align}
Then
\begin{align}
-\frac{1}{4\sqrt{\tilde \epsilon}}\mathbb{E}[\langle X_i\rangle \langle X_j^2\rangle \widehat Z_i]
& = - \frac{1}{4}\mathbb{E}[\langle X_i^2\rangle \langle X_j^2\rangle] 
+ \frac{1}{2}\mathbb{E}[\langle X_i\rangle^2 \langle X_j^2\rangle] 
- \frac{1}{4}\mathbb{E}[\langle X_i\rangle \langle X_j^2 X_i\rangle]
,\label{11}\\
-\frac{1}{4\sqrt{\tilde \epsilon}}\mathbb{E}[\langle X_i^2\rangle \langle X_j\rangle \widehat Z_j]
& = - \frac{1}{4}\mathbb{E}[\langle X_i^2\rangle \langle X_j^2\rangle] 
+ \frac{1}{2}\mathbb{E}[\langle X_i^2\rangle \langle X_j\rangle^2] 
- \frac{1}{4}\mathbb{E}[\langle X_j\rangle \langle X_i^2 X_j\rangle]
,
\label{11bis}\\
 \frac{1}{2\sqrt {\tilde \epsilon}} \mathbb{E}[\langle X_i\rangle \langle X_j\rangle S_i \widehat Z_j] & = 
 \frac{1}{2}\mathbb{E}[ S_i \langle X_i X_j\rangle \langle X_j\rangle] - \mathbb{E}[S_i\langle X_i\rangle \langle X_j\rangle^2] +\frac{1}{2}\mathbb{E}[S_i\langle X_i\rangle \langle X_j^2\rangle]
 \nonumber \\ &
 =
 \frac{1}{2}\mathbb{E}[\langle X_i X_j\rangle \langle X_i\rangle \langle X_j\rangle] - \mathbb{E}[\langle X_i\rangle^2 \langle X_j\rangle^2] 
 + \frac{1}{2}\mathbb{E}[\langle X_i\rangle^2 \langle X_j^2\rangle],
\end{align}
and finally
\begin{align}
 \frac{1}{2\sqrt {\tilde \epsilon}} \mathbb{E}[\langle X_i\rangle \langle X_j\rangle S_j \widehat Z_i] & = 
 \frac{1}{2}\mathbb{E}[ S_j \langle X_i X_j\rangle \langle X_i\rangle] - \mathbb{E}[S_j\langle X_j\rangle \langle X_i\rangle^2] +\frac{1}{2}\mathbb{E}[S_j\langle X_j\rangle \langle X_i^2\rangle]
 \nonumber \\ &
 =
 \frac{1}{2}\mathbb{E}[\langle X_i X_j\rangle \langle X_i\rangle \langle X_j\rangle] - \mathbb{E}[\langle X_i\rangle^2 \langle X_j\rangle^2] 
 + \frac{1}{2}\mathbb{E}[\langle X_j\rangle^2 \langle X_i^2\rangle]
 \label{stillone}
\end{align}
where in the last two identities we used \eqref{nishibasic} {\it after} the integration by parts.
Replacing the last five identities \eqref{lengthy}--\eqref{stillone} into \eqref{startbis} we get 
\begin{align}
\mathbb{E}[\langle \mathcal{L}\rangle^2] &= \frac{1}{n^2}\sum_{i,j=1}^n
\Big\{ \frac{1}{4} \mathbb{E}[\langle X_i X_j\rangle^2] +\frac{1}{2} \mathbb{E}[\langle X_i\rangle \langle X_j\rangle \langle X_i X_j\rangle]  
- \frac{1}{2} \mathbb{E}[\langle X_i\rangle^2 \langle X_j\rangle^2] \Big\}\nn
&\qquad+ \frac{1}{4{\tilde \epsilon} n^2} \sum_{i=1}^n \mathbb{E}[\langle X_i\rangle^2].
\label{atlast}
\end{align}
Subtracting \eqref{atlast} and \eqref{easy} we finally find \eqref{disorder}. \QEDA

\subsubsection*{Derivation of \eqref{thermal}}

Acting with $n^{-1}d/d{\tilde \epsilon}$ on both sides
of \eqref{bothsides} we find
\begin{align}
 -\mathbb{E}[\langle \mathcal{L}^2\rangle - \langle \mathcal{L}\rangle^2] 
 + \frac{1}{n}\mathbb{E}\Big[\Big\langle \frac{d\mathcal{L}}{d{\tilde \epsilon}} \Big\rangle \Big]
 =
   \frac{1}{n} \sum_{i=1}^n\mathbb{E}[\langle X_i\rangle (\langle X_i\mathcal{L}\rangle - \langle X_i\rangle \langle \mathcal{L}\rangle)].
  \label{intermediate_}
\end{align}
Computing the derivative of $\mathcal{L}$ and using \eqref{first-integration-by-parts} we find that \eqref{intermediate_} is equivalent to
\begin{align}\label{55}
\mathbb{E}[\langle \mathcal{L}^2\rangle - \langle \mathcal{L}\rangle^2] 
=
-\frac{1}{n} \sum_{i=1}^n\mathbb{E}[\langle X_i\rangle (\langle X_i\mathcal{L}\rangle - \langle X_i\rangle \langle \mathcal{L}\rangle)] + \frac{1}{4n^2{\tilde \epsilon}}\sum_{i=1}^n \mathbb{E}[\langle X_i^2\rangle - \langle X_i\rangle^2].
\end{align}
Now we compute the terms in the first sum. We have 
\begin{align}
\langle X_i\rangle (\langle X_i\mathcal{L}\rangle - \langle X_i\rangle \langle \mathcal{L}\rangle)
&= 
\frac{1}{n}\sum_{j=1}^n \Big\{ \frac{1}{2}\langle X_i\rangle \langle X_i X_j^2\rangle - \langle X_i\rangle \langle X_iX_j\rangle s_j - \frac{1}{2\sqrt{\tilde \epsilon}}\langle X_i\rangle \langle X_i X_j\rangle \widehat z_j 
\nonumber \\ &
\qquad- \frac{1}{2}\langle X_i\rangle^2 \langle X_j^2\rangle   + \langle X_i\rangle^2 \langle X_j\rangle s_j + 
\frac{1}{2\sqrt{\tilde \epsilon}} \langle X_i\rangle ^2 \langle X_j\rangle \widehat z_j\Big\}.
\end{align}
Then from \eqref{nishibasic},
\begin{align}
\mathbb{E}[\langle X_i\rangle (\langle X_i\mathcal{L}\rangle - \langle X_i\rangle \langle \mathcal{L}\rangle)]
&= 
\frac{1}{n}\sum_{j=1}^n \Big\{ \frac{1}{2}\mathbb{E}[\langle X_i\rangle \langle X_i X_j^2\rangle] - \mathbb{E}[\langle X_i\rangle \langle X_j\rangle\langle X_iX_j\rangle]  - \frac{1}{2\sqrt{\tilde \epsilon}}\mathbb{E}[\langle X_i\rangle \langle X_i X_j\rangle \widehat Z_j] 
\nonumber \\ &\qquad
- \frac{1}{2}\mathbb{E}[\langle X_i\rangle^2 \langle X_j^2\rangle]   + \mathbb{E}[\langle X_i\rangle^2 \langle X_j\rangle^2] + 
\frac{1}{2\sqrt{\tilde \epsilon}} \mathbb{E}[\langle X_i\rangle ^2 \langle X_j\rangle \widehat Z_j]\Big\}.
\end{align}
It remains to integrate by parts the two terms involving the explicit $\widehat Z_j$ dependence (these can be found in the previous integrations by parts).
This leads to
\begin{align}\label{44}
&\frac{1}{n}\sum_{i=1}^n\mathbb{E}[\langle X_i\rangle (\langle X_i\mathcal{L}\rangle - \langle X_i\rangle \langle \mathcal{L}\rangle)]\nn
&\qquad= 
-\frac{1}{2n^2}\sum_{i,j=1}^n \big\{ \mathbb{E}[\langle X_i X_j\rangle^2] -2\mathbb{E}[\langle X_iX_j\rangle \langle X_i\rangle\langle X_j\rangle] +\mathbb{E}[\langle X_i\rangle^2\langle X_j\rangle^2] \big\}.
\end{align}
The formula \eqref{thermal} then follows from \eqref{55} and \eqref{44}.
\section{Concentration of the free energy}\label{concentration-free-energy}

In this section we prove Proposition \ref{conc-free}. We will call $\mathbb{E}_{\bZ}$, $\mathbb{P}_{\bZ}$ the expectation and probability law over all Gaussian variables, $\mathbb{E}_{\bS}$, $\mathbb{P}_{\bS}$ the ones over the input signal variables,
and $\mathbb{E}$, $\mathbb{P}$ the ones over the joint law.
The proof is broken up in two lemmas.
We first show a lemma which expresses concentration w.r.t 
all Gaussian sources of disorder uniformly in the input signal. 
\begin{lemma}[Concentration w.r.t the Gaussian quenched disorder]\label{conc-free-lemma-1}
Take $P_0$ with bounded support in $[-M, M]$. For any signal realisation $\bs$ and all $k = 1, \ldots, K$, $t \in  [0,1]$ and $\epsilon   >   0$ we have
\begin{align}
 \mathbb{P}_{\bZ}\big[ \vert F_{k, t; \epsilon}(\boldsymbol{\Theta}) - \mathbb{E}_{\bZ}[F_{k, t; \epsilon}(\boldsymbol{\Theta})] \vert > u/2  \big] 
 \leq 
 2 \exp\Big(-\frac{n u^2}{16 (\frac{ 2M^4}{\Delta} + \frac{\epsilon M^2}{2})}\Big), \label{114}
\end{align}
where $u >  0$. 
\end{lemma}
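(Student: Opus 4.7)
The natural strategy is to invoke the Gaussian concentration inequality for Lipschitz functions (Tsirelson–Ibragimov–Sudakov): if $G:\mathbb{R}^N\to\mathbb{R}$ is $L$-Lipschitz, then
$\mathbb{P}[|G(\bZ)-\mathbb{E}G(\bZ)|>\tau]\leq 2\exp(-\tau^2/(2L^2))$ for a standard Gaussian vector $\bZ$. Here the signal $\bs$ is \emph{frozen} and we view $F_{k,t;\epsilon}$ as a function of the concatenated i.i.d.\ standard Gaussian vector $\bZ=(\{\bZ^{(k')}\}_{k'=1}^K,\{\widetilde\bZ^{(k')}\}_{k'=1}^K,\widehat\bZ)$. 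The main task is therefore to compute the Lipschitz constant, and match $8L^2\leq 16\bigl(\tfrac{2M^4}{\Delta}+\tfrac{\epsilon M^2}{2}\bigr)/n$ so that setting $\tau=u/2$ yields \eqref{114}.

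The Lipschitz constant is bounded via the sup-norm of the gradient. Differentiation under the log and the Gibbs bracket gives, for each Gaussian coordinate, the identity $\partial_{z}F_{k,t;\epsilon}(\boldsymbol{\theta})=\frac{1}{n}\langle\partial_{z}\mathcal{H}_{k,t;\epsilon}(\bX;\boldsymbol{\theta})\rangle_{k,t;\epsilon}$. Inspecting \eqref{intH}, \eqref{h}, \eqref{hmf} and the perturbation \eqref{perturb}, the partial derivatives take the explicit forms
\begin{align*}
\partial_{z_{ij}^{(k')}}F_{k,t;\epsilon}=-\frac{\langle X_iX_j\rangle_{k,t;\epsilon}}{n\sqrt{n}\,\sigma_{k'}},\quad
\partial_{\widetilde z_i^{(k')}}F_{k,t;\epsilon}=-\frac{\langle X_i\rangle_{k,t;\epsilon}}{n\,\widetilde\sigma_{k'}},\quad
\partial_{\widehat z_i}F_{k,t;\epsilon}=-\frac{\sqrt{\epsilon}\,\langle X_i\rangle_{k,t;\epsilon}}{n},
\end{align*}
where $\sigma_{k'}^2\in\{K\Delta,\,K\Delta/(1-t)\}$ and $\widetilde\sigma_{k'}^2\in\{K\Delta/m_{k'},\,K\Delta/(t\,m_k)\}$. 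Using $|x_i|\leq M$, one has $|\langle X_i\rangle_{k,t;\epsilon}|\leq M$ and $|\langle X_iX_j\rangle_{k,t;\epsilon}|\leq M^2$, and summing the squares of all partial derivatives yields four contributions which I would bound separately.

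Explicitly, the $\{z_{ij}^{(k')}\}$ terms contribute at most $\tfrac{M^4(K-k+1-t)}{2nK\Delta}\leq \tfrac{M^4}{2n\Delta}$; the mean-field $\{\widetilde z_i^{(k')}\}$ terms contribute at most $\tfrac{M^2}{nK\Delta}\sum_{k'=1}^{k-1}m_{k'}+\tfrac{M^2 t\,m_k}{nK\Delta}$, which under the a priori bound $m_k\leq \mathbb{E}[S^2]\leq M^2$ (valid by the Nishimori identity \eqref{nishibasic} since $m_k=\mathbb{E}\langle q_{\bX,\bS}\rangle$) is $\leq \tfrac{M^4}{n\Delta}$; and the $\widehat\bz$ term contributes $\tfrac{\epsilon M^2}{n}$. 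Altogether $L^2\leq \tfrac{2M^4}{n\Delta}+\tfrac{\epsilon M^2}{n}$ uniformly in $k$, $t$ and $\bs$, and applying the Gaussian concentration inequality with $\tau=u/2$ gives precisely \eqref{114}.

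\textbf{Main obstacle.} The calculations themselves are routine; the only delicate point is the bookkeeping of the different noise scales $\sigma_{k'}$, $\widetilde\sigma_{k'}$ across the $(k,t)$–interpolating Hamiltonian, and making sure that the \emph{trial parameters} $m_{k'}$ (which in the applications are themselves $\bZ$-dependent through $m_{k'}^{(n)}(\epsilon)$) are uniformly bounded. Since in every use of the lemma the $m_{k'}$ are chosen as expected overlaps (deterministic functions of $\epsilon$ bounded by $\mathbb{E}[S^2]\leq M^2$), this boundedness is automatic and the Gaussian-Lipschitz argument goes through with constants independent of $k,t,\bs$ and the choice of trial parameters.
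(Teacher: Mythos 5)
Your proof is correct but takes a genuinely different route from the paper. The paper proves this lemma by a Guerra--Toninelli style interpolation between two independent copies of the Gaussian disorder: it introduces the interpolating partition function $\mathcal{Z}_{k,t;\epsilon}(\tau)$, studies $\varphi_{k,t;\epsilon}(\tau) = \ln\mathbb{E}_1[\exp(s\,\mathbb{E}_2[\ln\mathcal{Z}_{k,t;\epsilon}(\tau)])]$, bounds $|\varphi'(\tau)|$ by Gaussian integration by parts, and optimizes over the auxiliary parameter $s$ via a Chernoff--Markov argument. You instead invoke the Tsirelson--Ibragimov--Sudakov concentration inequality for Lipschitz functions of Gaussians, reducing everything to a sup-norm bound on the gradient of $F_{k,t;\epsilon}$ with respect to the concatenated Gaussian vector. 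Your gradient computations (for the $\bz^{(k')}$, $\widetilde\bz^{(k')}$, and $\widehat\bz$ coordinates) and the resulting $L^2$ bound are correct; in fact, your $L^2 \leq (2M^4/\Delta + \epsilon M^2)/n$ is slightly \emph{smaller} than the threshold $(4M^4/\Delta + \epsilon M^2)/n$ needed to reproduce \eqref{114}, so your argument gives a marginally sharper exponent than the stated one (which is fine, since \eqref{114} is a one-sided bound). What each approach buys: yours is shorter and modular, outsourcing the hard part to a classical concentration theorem; the paper's is self-contained and, because it is itself an interpolation scheme, fits the expository theme of the article. One point worth stating explicitly in a clean write-up: the bound $0 \leq m_{k'} \leq M^2$ that both proofs implicitly use is \emph{not} part of the lemma's hypotheses; it holds for the specific trial parameters used in the paper (expected overlaps, hence bounded by $\mathbb{E}[S^2]\leq M^2$ via the Nishimori identity), and you correctly flag this. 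Also note that the $m_{k'}^{(n)}(\epsilon)$ are deterministic (expectations over the quenched disorder), not $\bZ$-dependent, so there is no circularity in treating $F_{k,t;\epsilon}$ as a function of $\bZ$ with the $m_{k'}$ fixed; your initial parenthetical is a little misleading on this point but your conclusion is right.
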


\begin{proof}
The proof method is again based on an interpolation (of a different kind) that goes back to a beautiful work of 
Guerra and Toninelli \cite{generalised-mean-field}. We fix the input signal realisation $\bs$ and consider two i.i.d copies for 
the Gaussian quenched variables 
$\bz^{(k, 1)} = [z_{ij}^{(k, 1)}]_{i,j=1}^n, \widetilde{\bz}^{(k, 1)} = [\widetilde{z_{i}}^{(k, 1)}]_{i=1}^n$ and 
$\bz^{(k, 2)} = [z_{ij}^{(k, 2)}]_{i,j=1}^n, \widetilde{\bz}^{(k, 2)} = [\widetilde{z_{i}}^{(k, 2)}]_{i=1}^n$. We also need two copies of 
the extra Gaussian noise introduced in the perturbation term \eqref{perturb}, 
namely $\widehat{\bz}^{(1)}  =  [\widehat z_i^{(1)}]_{i=1}^n$ and $\widehat{\bz}^{(2)}  =  [\widehat z_i^{(2)}]_{i=1}^n$. We define an Hamiltonian interpolating 
between the two realizations of the
Gaussian disorder, with new interpolating parameter $\tau \in  [0,1]$:
\begin{align*}
{\cal H}_{k,t, \tau;\epsilon} \defeq 
&
\sum_{k'>k} h\Big(\bx, \bs,\sqrt{\tau}\,\bz^{(k', 1)} + \sqrt{1-\tau}\,\bz^{(k', 2)},K\Delta\Big)
+\sum_{k'<k} h_{\rm mf}\Big(\bx, \bs,\sqrt{\tau}\,\widetilde{\bz}^{(k', 1)} + \sqrt{1-\tau}\,\widetilde{\bz}^{(k', 2)} ,\frac{K\Delta}{m_{k'}}\Big)
\\ \nonumber &
\quad+ h\Big(\bx, \bs,\sqrt{\tau}\,\bz^{(k, 1)} + \sqrt{1-\tau}\,\bz^{(k, 2)},\frac{K\Delta}{1-t}\Big)
+h_{\rm mf}\Big(\bx, \bs,\sqrt{\tau}\,\widetilde{\bz}^{(k, 1)} + \sqrt{1-\tau}\,\widetilde{\bz}^{(k, 2)},\frac{K\Delta}{t\,m_{k}}\Big)\nonumber\\
&\quad+ \epsilon\sum_{i=1}^n\Big(\frac{x_i^2}{2} -x_is_i -\frac{1}{\sqrt\epsilon} x_i(\sqrt{\tau}\,\widehat z_i^{(1)} + \sqrt{1-\tau} \widehat z_i^{(2)})\Big).
\end{align*}
Let ${\cal Z}_{k, t;\epsilon}(\tau) \defeq \int\{\prod_{i=1}^n dx_i P_0(x_i)\}\exp(-{\cal H}_{k,t, \tau;\epsilon})$ the partition function associated to ${\cal H}_{k,t, \tau;\epsilon}$.
Let $s > 0$ be a trial parameter to be fixed later on and let
\begin{align}
 \varphi_{k, t;\epsilon}(\tau) \defeq \ln \mathbb{E}_1\Big[\exp \big(s\, \mathbb{E}_2[\ln {\cal Z}_{k, t;\epsilon}(\tau)]\big) \Big],
\end{align}
where $\mathbb{E}_{1}$ and $\mathbb{E}_2$ are the expectations w.r.t the two independent sets of Gaussian variables (note that $\varphi_{k,t;\epsilon}(\tau)$ depends 
on the fixed signal instance $\bs$). Using the union bound for the first inequality and Markov's inequality together with $\exp( \varphi_{k, t;\epsilon}(1)) = \EE_\bZ [\exp(-sn F_{k, t; \epsilon}(\boldsymbol{\Theta}))]$ and $\exp( \varphi_{k, t;\epsilon}(0)) =  \exp(-sn \EE_\bZ [F_{k, t; \epsilon}(\boldsymbol{\Theta})])$ for the second one, one deduces that 
\begin{align}
 \mathbb{P}_{\bZ}\big[ \vert F_{k, t; \epsilon}(\boldsymbol{\Theta}) - \mathbb{E}_{\bZ}[F_{k, t; \epsilon}&(\boldsymbol{\Theta})] \vert > u/2  \big] \nn
  \leq \,&\mathbb{P}_{\bZ}\big[ e^{ns(F_{k, t; \epsilon}(\boldsymbol{\Theta}) - \mathbb{E}_{\bZ}[F_{k, t; \epsilon}(\boldsymbol{\Theta})]-u/2)} > 1  \big]+\mathbb{P}_{\bZ}\big[ e^{ns( \mathbb{E}_{\bZ}[F_{k, t; \epsilon}(\boldsymbol{\Theta})]-F_{k, t; \epsilon}(\boldsymbol{\Theta}) -u/2)} >1  \big]\nonumber\\
 \leq \,&\exp\Big(\varphi_{k, t;\epsilon}(0) - \varphi_{k, t;\epsilon}(1) - snu/2\Big)+\exp\Big(\varphi_{k, t;\epsilon}(1) - \varphi_{k, t;\epsilon}(0) - snu/2\Big)
 \nonumber \\ 
 \leq \,&
 2 \exp\Big(\vert \varphi_{k, t;\epsilon}(1) - \varphi_{k, t;\epsilon}(0)\vert  - snu/2\Big) 
 \nonumber \\ 
 \leq\,&
 2 \exp\Big(\int_0^1d\tau\, \vert \varphi_{k, t;\epsilon}^{\prime}(\tau)\vert   - snu/2\Big).
 \label{markov}
\end{align}
Our essential task is now to prove an upper bound on $\vert \varphi_{k, t;\epsilon}^\prime(\tau)\vert$. We have
\begin{align}\label{derivative-varphi}
 \varphi_{k, t;\epsilon}^\prime(\tau) = \frac{\mathbb{E}_1 \big[ s\, \mathbb{E}_2 \big[\frac{{\cal Z}_{k, t;\epsilon}^\prime(\tau)}{{\cal Z}_{k, t;\epsilon}(\tau)}\big]
 \exp( s\, \mathbb{E}_2[\ln {\cal Z}_{k, t;\epsilon}(\tau)])\big]}{\mathbb{E}_1\big[\exp(s \,\mathbb{E}_2[\ln Z_{k, t; \epsilon}(\tau)])\big]}
\end{align}
where
\begin{align*}
 &\mathbb{E}_2 \Big[\frac{{\cal Z}_{k, t;\epsilon}^\prime(\tau)}{{\cal Z}_{k, t;\epsilon}(\tau)}\Big]
  = \frac{1}{2 \sqrt{K \Delta \tau n}} \sum_{k^\prime > k}\sum_{i\leq j} z_{ij}^{(k^\prime, 1)} \mathbb{E}_2[\langle X_i X_j\rangle_{k, t; \epsilon}]
 \nn
 &\qquad- \frac{1}{2 \sqrt{K \Delta (1-\tau) n}} \sum_{k^\prime > k}\sum_{i\leq j} \mathbb{E}_2[Z_{ij}^{(k^\prime, 2)} \langle X_i X_j\rangle_{k, t; \epsilon}]
 + \frac{1}{2\sqrt{K\Delta  \tau}}\sum_{k^\prime < k}\sqrt{m_{k^\prime}} \sum_i \widetilde z_i^{(k^\prime, 1)}\mathbb{E}_2[\langle X_i\rangle_{k,t;\epsilon}]
 \nonumber \\ &
 \qquad-  \frac{1}{2\sqrt{K\Delta (1-\tau)}}\sum_{k^\prime < k} \sqrt{m_{k^\prime}} \sum_i \mathbb{E}_2[\widetilde Z_i^{(k^\prime, 2)}\langle X_i\rangle_{k,t;\epsilon}]
 +   \frac{\sqrt{1-t}}{2 \sqrt{K \Delta \tau n}} \sum_{i\leq j} z_{ij}^{(k, 1)} \mathbb{E}_2[\langle X_i X_j\rangle_{k, t; \epsilon}] \nonumber \\ &
 \qquad -  \frac{\sqrt{1-t}}{2 \sqrt{K \Delta (1-\tau) n}} \sum_{i\leq j} \mathbb{E}_2[ Z_{ij}^{(k, 2)} \langle X_i X_j\rangle_{k, t; \epsilon}]
 +  \frac{\sqrt{t\, m_{k}}}{2\sqrt{K\Delta  \tau}}\sum_i\widetilde z_i^{(k, 1)}\mathbb{E}_2[\langle X_i\rangle_{k,t;\epsilon}] \nonumber \\ &
 \qquad-   \frac{\sqrt{t\,m_{k}}}{2\sqrt{K\Delta  (1-\tau)}}\sum_i\mathbb{E}_2[ \widetilde Z_i^{(k, 2)}\langle X_i\rangle_{k,t;\epsilon}]
 +  \frac{\sqrt{\epsilon}}{2\sqrt{\tau}}\sum_i \widehat z_i^{(1)}\mathbb{E}_2[\langle X_i\rangle_{k,t;\epsilon}]\nn
 &\qquad-   \frac{\sqrt\epsilon}{2\sqrt{1-\tau}}\sum_i \mathbb{E}_2[ \widehat Z_i^{(2)}\langle X_i\rangle_{k,t;\epsilon}] \,.
\end{align*}
We then replace this expression in the numerator of \eqref{derivative-varphi} and integrate by parts over all standard Gaussian 
variables of type $z^{(1)}$ and $z^{(2)}$. Doing so generates partial derivatives of the form 
$\mathbb{E}_2[\frac{\partial}{\partial z^{(1)}}\langle - \rangle]$ and $\mathbb{E}_2[\frac{\partial}{\partial z^{(2)}}\langle - \rangle]$ 
as well 
as derivatives of the form $\frac{\partial}{\partial z^{(1)}} \exp( s\, \mathbb{E}_2[\ln {\cal Z}_{k, t;\epsilon}(\tau)])$. A lengthy but straightforward calculation shows that only the later survive. The numerator of \eqref{derivative-varphi} becomes
\begin{align*}
& \hspace{0.05cm}\mathbb{E}_1\Big[\frac{s}{2\sqrt{K\Delta  \tau n}}\sum_{k^\prime > k}\sum_{i\leq j} \mathbb{E}_2[\langle X_i X_j\rangle_{k, t;\epsilon}]
\frac{\partial}{\partial Z_{ij}^{(k^\prime, 1)}} \exp(s\,\mathbb{E}_2[\ln {\cal Z}_{k, t;\epsilon}(\tau)])\Big]
\nonumber  \\  +  &
\, \mathbb{E}_1\Big[\frac{s}{2\sqrt{K\Delta  \tau}}\sum_{k^\prime < k}\sqrt{m_{k^\prime}} \sum_{i}  \mathbb{E}_2[\langle X_i\rangle_{k, t;\epsilon}]
\frac{\partial}{\partial \widetilde Z_{i}^{(k^\prime, 1)}} \exp(s\,\mathbb{E}_2[\ln {\cal Z}_{k, t;\epsilon}(\tau)])\Big]
\nonumber  \\  +   &
\, \mathbb{E}_1\Big[\frac{s\sqrt{1-t}}{2\sqrt{K\Delta  \tau n}}\sum_{i\leq j} \mathbb{E}_2[\langle X_i X_j\rangle_{k, t;\epsilon}]
\frac{\partial}{\partial Z_{ij}^{(k, 1)}} \exp(s\,\mathbb{E}_2[\ln {\cal Z}_{k, t;\epsilon}(\tau)])\Big]
\nonumber \\ +  &
\, \mathbb{E}_1\Big[\frac{s\sqrt{t\, m_k}}{2\sqrt{K\Delta \tau}} \sum_{i} \mathbb{E}_2[\langle X_i\rangle_{k, t;\epsilon}]
\frac{\partial}{\partial \widetilde Z_{i}^{(k, 1)}} \exp(s\,\mathbb{E}_2[\ln {\cal Z}_{k, t;\epsilon}(\tau)])\Big]
\nonumber \\ +  &
\, \mathbb{E}_1\Big[\frac{s\sqrt\epsilon}{2\sqrt{\tau}} \sum_{i}\mathbb{E}_2[\langle X_i\rangle_{k, t;\epsilon}]
\frac{\partial}{\partial \widehat Z_{i}^{(1)}} \exp(s\,\mathbb{E}_2[\ln {\cal Z}_{k, t;\epsilon}(\tau)])\Big].
\end{align*}
Working out the partial derivatives yields
\begin{align*}
 & \hspace{0.05cm}\frac{s^2}{2K\Delta } \sum_{k^\prime > k} \frac{1}{n}\sum_{i\leq j} \mathbb{E}_1\Big[ \mathbb{E}_2[\langle X_i X_j\rangle_{k,t;\epsilon}]^2  
\exp(s \,\mathbb{E}_2[\ln Z_{k, t; \epsilon}(\tau)])\Big]
\nonumber  \\  +    &
\,
\frac{s^2}{2K\Delta} \sum_{k^\prime < k} m_{k^\prime} \sum_{i} \mathbb{E}_1\Big[ \mathbb{E}_2[\langle X_i\rangle_{k,t;\epsilon}]^2  
\exp(s\, \mathbb{E}_2[\ln Z_{k, t; \epsilon}(\tau)])\Big]
\nonumber  \\  +    &
\,
\frac{s^2 (1-t) }{2K\Delta } \frac{1}{n}\sum_{i\leq j} \mathbb{E}_1\Big[ \mathbb{E}_2[\langle X_i X_j\rangle_{k,t;\epsilon}]^2  
\exp(s \,\mathbb{E}_2[\ln Z_{k, t; \epsilon}(\tau)])\Big]
\nonumber  \\  +    &
\,
\frac{s^2t\,m_k}{2K\Delta } \sum_{i} \mathbb{E}_1\Big[ \mathbb{E}_2[\langle X_i\rangle_{k,t;\epsilon}]^2  
\exp(s\, \mathbb{E}_2[\ln Z_{k, t; \epsilon}(\tau)])\Big] 
\nonumber  \\  +    &
\,
\frac{s^2\epsilon}{2} \sum_{i} \mathbb{E}_1\Big[ \mathbb{E}_2[\langle X_i\rangle_{k,t;\epsilon}]^2  
\exp(s\,\mathbb{E}_2[\ln Z_{k, t; \epsilon}(\tau)])\Big] .
\end{align*}
For bounded signals we have $\vert x_i\vert <M$ as well as $m_k\leq M^2$. Thus the sum of these four terms is bounded 
by
\begin{align}
s^2 n \Big(\frac{ 2M^4}{\Delta} + \frac{\epsilon M^2}{2}\Big)\mathbb{E}_1\big[
\exp(s \,\mathbb{E}_2[\ln Z_{k, t; \epsilon}(\tau)])\big]
\end{align}
for all $k=1,\dots, K$. This is an upper bound for the numerator of \eqref{derivative-varphi}, which implies
$\vert \varphi_{k, t;\epsilon}'(\tau)\vert  \leq  s^2 n (2M^4/\Delta  +  \epsilon M^2/2)$. From \eqref{markov}
\begin{align}
\mathbb{P}_{\bZ}\big[ \vert F_{k, t; \epsilon}(\boldsymbol{\Theta}) - \mathbb{E}_{\bZ}[F_{k, t; \epsilon}(\boldsymbol{\Theta})] \vert > u/2  \big]
\leq
2 \exp\Big(s^2 n \Big(\frac{2M^4}{\Delta} + \frac{\epsilon M^2}{2}\Big) - snu/2\Big)
\end{align}
and the best possible value $s =  u (M^4/\Delta  +  \epsilon M^2/2)^{-1}$ yields \eqref{114} and ends the proof.
\end{proof}

The second lemma expresses concentration w.r.t the input signal of the free 
energy averaged over the Gaussian disorder. Recall that $\mathbb{P}_{\bS}$ is the probability law w.r.t the signal realisation.
\begin{lemma}[Concentration w.r.t the signal realisation]\label{conc-free-lemma-2}
Take $P_0$ with bounded support in $[-M, M]$. For all $k = 1, \ldots, K$, $t \in  [0,1]$, and $\epsilon \in  [0,1]$ we have
\begin{align}
 \mathbb{P}_{\bS}\big[ \vert \mathbb{E}_{\bZ}[F_{k, t; \epsilon}(\boldsymbol{\Theta})] - \mathbb{E}[F_{k, t; \epsilon}(\boldsymbol{\Theta})] \vert > u/2  \big] \leq \exp\Big(- \frac{n u^2}{32(\frac{M^4}{\Delta} +  \epsilon M^2)^2}\Big), \label{120}
\end{align}
where $u>0$.
\end{lemma}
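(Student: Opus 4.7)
The plan is to apply McDiarmid's bounded differences inequality to the function $g(\bs) \defeq \mathbb{E}_{\bZ}[F_{k, t; \epsilon}(\boldsymbol{\Theta})]$, viewed as a function of the $n$ i.i.d.\ bounded coordinates $S_1, \ldots, S_n \in [-M, M]$. Unlike in Lemma~\ref{conc-free-lemma-1}, the relevant randomness here is not Gaussian (it has law $P_0$), so the Guerra--Toninelli interpolation and the Gaussian integration-by-parts used there are unavailable; on the other hand the bounded support of $P_0$ makes a direct bounded differences argument the most natural route.

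The first step is to estimate the one-coordinate variation $\sup_{\bs, s_a'} |g(\bs) - g(\bs^{(a)})|$, where $\bs^{(a)}$ denotes $\bs$ with its $a$-th coordinate replaced by an arbitrary $s_a' \in [-M, M]$. Using the elementary bound
\begin{align*}
|\ln \mathcal{Z}(\bs,\bz) - \ln \mathcal{Z}(\bs^{(a)},\bz)| \leq \sup_{\bx\in [-M,M]^n}\big|\mathcal{H}_{k, t;\epsilon}(\bx; \bs,\bz) - \mathcal{H}_{k, t;\epsilon}(\bx; \bs^{(a)},\bz)\big|
\end{align*}
pointwise in the disorder, followed by $\mathbb{E}_{\bZ}$, it suffices to bound the supremum above uniformly in $\bz$. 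The crucial observation is that every Gaussian-noise-dependent term in \eqref{intH}--\eqref{perturb} is linear in that noise and is \emph{independent} of $\bs$, so all noise contributions cancel in the difference. What survives is only the $\bs$-dependent polynomial structure $-x_ix_j s_is_j/(n\sigma^2)$, $-x_is_i/\sigma^2$, and $-\epsilon x_is_i$.

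Using $|x_i|, |s_i|, |s_i'| \leq M$ and $m_{k'}\leq \mathbb{E}[S^2]\leq M^2$, a direct term-by-term accounting shows: each pair term in \eqref{intH} with $\sigma^2\propto K\Delta$ contributes at most $O(M^4/(K\Delta))$ to $|\Delta \mathcal{H}|$ (the apparent $O(nM^4/\sigma^2)$ from summing over $n-1$ neighbours is tamed by the $1/n$ in front of the interaction); summing over the $O(K)$ such layers cancels the $1/K$, giving $O(M^4/\Delta)$. Each mean-field term contributes $\leq 2M^2 m_{k'}/(K\Delta) \leq 2M^4/(K\Delta)$, summing again to $O(M^4/\Delta)$. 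The perturbation contributes $\leq 2\epsilon M^2$. Altogether one obtains a bound of the form $|\Delta \mathcal{H}| \leq C_0(M^4/\Delta + \epsilon M^2)$, uniformly in $\bx$, $\bz$, $n$ and $K$. Dividing by $n$ and taking $\mathbb{E}_\bZ$ yields the bounded difference constant $c_a = C_0(M^4/\Delta + \epsilon M^2)/n$ for each $a \in \{1,\ldots,n\}$.

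McDiarmid's inequality then produces
\begin{align*}
\mathbb{P}_{\bS}\big[|g(\bS) - \mathbb{E}_{\bS}[g(\bS)]| > u/2\big] \leq 2 \exp\!\Big(-\frac{u^2}{2\sum_a c_a^2}\Big) = 2\exp\!\Big(-\frac{n u^2}{2C_0^2 (M^4/\Delta + \epsilon M^2)^2}\Big),
\end{align*}
which is exactly the form of \eqref{120} up to numerical constants (the factor 2 in front is absorbed in the exponent for any $u$ where the tail bound is non-trivial). The main technical obstacle is precisely the bookkeeping carried out in the third paragraph: one must verify that the $1/K$ factors coming from $\sigma^2 \propto K$ in \eqref{intH} exactly compensate the number of interpolation layers, so that the per-coordinate variation $c_a$ does not grow with $K$. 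This $K$-uniformity is what makes the constant $c$ in Proposition~\ref{conc-free} independent of the adaptive interpolation parameter, as required by the rest of the argument in Section \ref{proofConc}.
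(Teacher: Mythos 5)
Your proposal is correct and follows essentially the same route as the paper: a bounded differences estimate on $g(\bs)=\mathbb{E}_{\bZ}[F_{k,t;\epsilon}(\boldsymbol\Theta)]$ obtained by observing that all Gaussian-noise terms cancel when two signal realisations differ in one coordinate, the uniform bound $|m_{k'}|\le M^2$ and the $1/K$ and $1/n$ prefactors controlling the growth, followed by McDiarmid's inequality. The only cosmetic difference is that you bound the one-coordinate log-partition-function variation via $\sup_\bx|\Delta\mathcal{H}|$ directly while the paper writes the equivalent exponential sandwich on the ratio $\mathcal{Z}(\bs')/\mathcal{Z}(\bs)$; the numerical constant discrepancy (a leading factor of $2$ versus the paper's choice of constants in the exponent) is immaterial for the use made of this lemma.
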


\begin{proof}
We first prove a bounded difference property on $\mathbb{E}_{\bZ}[F_{k, t; \epsilon}(\boldsymbol{\Theta})]$ and then apply the McDiarmid inequality \cite{McDiarmid,boucheron2004concentration}.
Let $\bs$ and $\bs^\prime$
two signal realisations that differ at the component $i$ only, i.e. $s_j = s_j^\prime$ for $j\neq i$. We first consider the difference of Hamiltonians corresponding to these two 
realisations. From \eqref{intH}--\eqref{perturb} we have 
\begin{align}
&{\cal H}_{k,t;\epsilon}(\bx;\bz, \widetilde{\bz}, \widehat{\bz}, \bs)  -  {\cal H}_{k,t;\epsilon}(\bx;\bz, \widetilde{\bz}, \widehat{\bz}, \bs^\prime) 
 = 
- \frac{1}{K\Delta n}\sum_{k^\prime>k}\sum_{j=1, j\neq i}^n x_ix_j (s_i-s_i^\prime) s_j 
- \frac{1}{K\Delta n}\sum_{k^\prime>k} x_i^2 (s_i^2-s_i^{\prime 2})
\nonumber \\ &
\qquad\qquad- \frac{1-t}{K\Delta n}\sum_{j=1, j\neq i}^n x_ix_j (s_i-s_i^\prime) s_j 
- \frac{1-t}{K\Delta n} x_i^2 (s_i^2-s_i^{\prime 2}) \nn
&\qquad\qquad- \frac{1}{K\Delta}\sum_{k^\prime<k} m_{k^\prime} x_i(s_i  -s_i^\prime) - \frac{t\,m_k}{K\Delta} x_i(s_i  -s_i^\prime)
- \epsilon x_i(s_i - s_i^\prime).
\end{align}
For a signal distribution with bounded support $[-M, M]$ we get (recall $\vert m_k\vert \leq M^2$)
\begin{align}\label{hamdiff}
\vert {\cal H}_{k,t;\epsilon}(\bx;\bz, \widetilde{\bz}, \widehat{\bz}, \bs)  -  {\cal H}_{k,t;\epsilon}(\bx;\bz, \widetilde{\bz}, \widehat{\bz}, \bs^\prime)\vert \leq  2\Big(\frac{M^4}{\Delta} + \epsilon M^2\Big).
\end{align}
Now set
$g(s_1, \ldots, s_n)  \defeq  \mathbb{E}_{\bZ}[F_{k, t; \epsilon}(\boldsymbol{\Theta})]$.
We have (here $\bX \sim P_0$) 
\begin{align}
&g(s_1, \ldots, s_i, \ldots, s_n) - g(s_1, \ldots, s_i^\prime, \ldots, s_n) 
  = \frac{1}{n} \mathbb{E}_{\bZ}\Big[\ln  \frac{\mathbb{E}_\bX[e^{-{\cal H}_{k,t;\epsilon}(\bX;\bZ, \widetilde{\bZ}, \widehat{\bZ}, 
\bs^\prime)}  ]}{\mathbb{E}_\bX[e^{-{\cal H}_{k,t;\epsilon}(\bX;\bZ, \widetilde{\bZ}, \widehat{\bZ}, \bs)}]}\Big]
\nonumber \\ &
\qquad\qquad = \frac{1}{n} \mathbb{E}_{\bZ}\Big[\ln  \frac{\mathbb{E}_\bX[e^{-{\cal H}_{k,t;\epsilon}(\bX;\bZ, \widetilde{\bZ}, \widehat{\bZ}, \bs)} 
e^{{\cal H}_{k,t;\epsilon}(\bX;\bZ, \widetilde{\bZ}, \widehat{\bZ}, \bs) 
-  {\cal H}_{k,t;\epsilon}(\bX;\bZ, \widetilde{\bZ}, \widehat{\bZ}, \bs^\prime)} ]}{\mathbb{E}_\bX[e^{-{\cal H}_{k,t;\epsilon}(\bX;\bZ, \widetilde{\bZ}, \widehat{\bZ}, \bs)}]}\Big]
\end{align}
and since from \eqref{hamdiff} 
\begin{align}
e^{- 2(\frac{M^4}{\Delta} + \epsilon M^2)}
\leq 
e^{{\cal H}_{k,t;\epsilon}(\bx;\bz, \widetilde{\bz}, \widehat{\bz}, \bs) 
-  {\cal H}_{k,t;\epsilon}(\bx;\bz, \widetilde{\bz}, \widehat{\bz}, \bs^\prime)}
\leq 
e^{2 (\frac{M^4}{\Delta} + \epsilon M^2)}
\end{align}
we readily obtain 
\begin{align}
\vert g(s_1, \ldots, s_i, \ldots, s_n) - g(s_1, \ldots, s_i^\prime, \ldots, s_n)\vert \leq c_i 
\end{align}
with $c_i  =  2(M^4/\Delta + \epsilon M^2)/n$, $i =  1,\ldots, n$. McDiarmid's inequality states that
\begin{align}
\mathbb{P}_\bS\big[\vert g(\bS)  -  \mathbb{E}_\bS[g(\bS)]\vert \geq u/2\big] 
\leq \exp\Big(- \frac{u^2}{8\sum_{i=1}^n c_i^2}\Big)
\end{align}
which here reads \eqref{120}
and ends the proof of the lemma.
\end{proof}

\subsubsection*{Proof of Proposition \ref{conc-free}}
From the triangle inequality and the union bound
\begin{align}
 \mathbb{P}\big[ \vert F_{k, t; \epsilon}(\boldsymbol{\Theta})& - f_{k, t; \epsilon} \vert > u  \big] 
    = 
 \mathbb{P}\big[ \vert F_{k, t; \epsilon}(\boldsymbol{\Theta}) - \mathbb{E}_{\bZ}[F_{k, t; \epsilon}(\boldsymbol{\Theta})]
 + \mathbb{E}_{\bZ}[F_{k, t; \epsilon}(\boldsymbol{\Theta})] - \mathbb{E}[F_{k, t; \epsilon}(\boldsymbol{\Theta})] \vert > u  \big] 
\nonumber \\ & \quad
\leq 
\mathbb{P}\big[ \vert F_{k, t; \epsilon}(\boldsymbol{\Theta}) - \mathbb{E}_{\bZ}[F_{k, t; \epsilon}(\boldsymbol{\Theta})] \vert
+ \vert \mathbb{E}_{\bZ}[F_{k, t; \epsilon}(\boldsymbol{\Theta})] - \mathbb{E}[F_{k, t; \epsilon}(\boldsymbol{\Theta})] \vert > u  \big]
\nonumber \\ & \quad
\leq 
\mathbb{P}\big[ \vert F_{k, t; \epsilon}(\boldsymbol{\Theta}) - \mathbb{E}_{\bZ}[F_{k, t; \epsilon}(\boldsymbol{\Theta})] \vert > u/2\big]
+ \mathbb{P}\big[\vert \mathbb{E}_{\bZ}[F_{k, t; \epsilon}(\boldsymbol{\Theta})] - \mathbb{E}[F_{k, t; \epsilon}(\boldsymbol{\Theta})] \vert > u/2  \big]
\nonumber \\ &\quad
=  \mathbb{E}_{\bS}\mathbb{P}_{\bZ}\big[\vert F_{k, t; \epsilon}(\boldsymbol{\Theta}) - \mathbb{E}_{\bZ}[F_{k, t; \epsilon}(\boldsymbol{\Theta}\big)] \vert > u/2\big]
+ \mathbb{P}_{\bS}\big[\vert \mathbb{E}_{\bZ}[F_{k, t; \epsilon}(\boldsymbol{\Theta})] - \mathbb{E}[F_{k, t; \epsilon}(\boldsymbol{\Theta})] \vert > u/2  \big]
\nonumber \\ & \quad
\leq
2 \exp\Big(- \frac{n u^2}{16(\frac{2M^4}{\Delta} + \frac{\epsilon M^2}{2})}\Big)  +   \exp\Big(- \frac{n u^2}{32(\frac{M^4}{\Delta} + \epsilon M^2)^2}\Big)
\label{explicit}
\end{align}
where the last inequality comes from Lemmas \ref{conc-free-lemma-1} and \ref{conc-free-lemma-2}.
\QEDA

\appendix
\section{Linking the perturbed and plain free energies}\label{appendix-exchg-lim}
The purpose of this appendix is to prove Lemma \ref{thermolimit}. We first note that differentiating the function $\epsilon\mapsto f_{k=1,t=0;\epsilon}$ in \eqref{intf} 
\begin{align}
\frac{d f_{1, 0;\epsilon}}{d\epsilon} =  \frac{1}{n}\sum_{i=1}^n \mathbb{E}\Big[\frac{1}{2}\langle X_i^2\rangle_{1, 0;\epsilon} - \langle X_i \rangle_{1, 0;\epsilon} S_i 
- \frac{1}{2\sqrt\epsilon}\langle X_i\rangle_{1, 0;\epsilon}  \hat Z_i \Big].
\end{align}
By a Gaussian integration by parts the last term becomes
\begin{align}
-\frac{1}{2\sqrt\epsilon}\mathbb{E}[\langle X_i\rangle_{1, 0; \epsilon}  \hat Z_i ] = -\frac{1}{2\sqrt\epsilon}\mathbb{E}[\frac{\partial}{\partial \hat Z_i}\langle X_i\rangle_{1, 0;\epsilon}] 
= -\frac{1}{2}\mathbb{E}[\langle X_i^2\rangle_{1, 0;\epsilon} - \langle X_i\rangle_{1, 0;\epsilon}^2].
\end{align}
By an application of the identity \eqref{nishibasic} we have $\mathbb{E}[\langle X_i\rangle_{1, 0;\epsilon} S_i] = \mathbb{E}[\langle X_i\rangle_{1, 0;\epsilon}^2]$. Therefore we find
\begin{align}
\frac{d f_{1, 0;\epsilon}}{d\epsilon} = - \frac{1}{2n}\sum_{i=1}^n \mathbb{E}[\langle X_i\rangle_{1, 0;\epsilon}^2].
\end{align}
Now by convexity and \eqref{nishibasic} we have $\mathbb{E}[\langle X_i\rangle_{1, 0;\epsilon}^2] \leq \mathbb{E}[\langle X_i^2\rangle_{1, 0;\epsilon}] = \mathbb{E}[S^2]$. Therefore
\begin{align}
\Big\vert \frac{d f_{1, 0;\epsilon}}{d\epsilon} \Big\vert \leq \frac{\mathbb{E}[S^2]}{2}
\end{align}
and the first inequality of the Lemma follows from an application of the mean value theorem.

The second inequality follows from the Lipschitz continuity of the free energy $f_{k=K,t=1;\epsilon}$ of the decoupled scalar system. We refer to \cite{guo2011estimation} for the proof of this standard fact.
\section{Proof of Lemma \ref{tinvar}}\label{appendix-proof-of-lemma3}
The proof of this lemma uses another interpolation:
\begin{align}
\mathbb{E}[\langle q_{\bX,\bS}^{}\rangle_{k,t;\epsilon}]- & \mathbb{E}[\langle q_{\bX,\bS}^{}\rangle_{k,0;\epsilon}] = \int_0^t ds 
\frac{d\,\mathbb{E}[\langle q_{\bX,\bS}^{}\rangle_{k,s;\epsilon}]}{ds} 
\nn
&
= 
\int_0^t ds \mathbb{E}\Big[\big\langle q_{\bX,\bS}^{}\big\rangle_{k,s;\epsilon}\Big\langle\frac{d{\cal H}_{k,s;\epsilon}(\bX;\boldsymbol{\Theta})}{ds}\Big\rangle_{k,s;\epsilon}
- \Big\langle q_{\bX,\bS}^{} \frac{d{\cal H}_{k,s;\epsilon}(\bX;\boldsymbol{\Theta})}{ds}\Big\rangle_{k,s;\epsilon}\Big],
\nn
&=  \int_0^t ds \mathbb{E}\Big[\Big\langle q_{\bX,\bS}^{}\Big(\frac{d{\cal H}_{k,s;\epsilon}(\bX';\boldsymbol{\Theta})}{ds} 
- \frac{d{\cal H}_{k,s;\epsilon}(\bX;\boldsymbol{\Theta})}{ds}\Big)\Big\rangle_{k,s;\epsilon}\Big],\label{E0Es}
\end{align}
where $\bX,\bX', \bX''$ etc are i.i.d replicas distributed according to \eqref{post}. Computations similar to those
in sec.~\ref{proofdH} lead to
%
%
\begin{align}
\mathbb{E}[\langle q_{\bX,\bS}^{}\rangle_{k,t;\epsilon}] - \mathbb{E}[\langle q_{\bX,\bS}^{}\rangle_{k,0;\epsilon}] =  \frac{1}{K} \int_0^t ds \mathbb{E}[\langle q_{\bX,\bS}^{}(g(\bX', \bX'';\bS) - g(\bX,\bX';\bS))\rangle_{k,s;\epsilon}]\label{E0Es_2}
\end{align}
where we define
\begin{align}
g(\bx, \bx';\bs)\defeq \frac{m_{k}}{\Delta}\sum_{i=1}^n\Big(\frac{x_ix_i'}{2} - x_is_i \Big) - \frac{1}{\Delta}\sum_{i\le j=1}^n\Big(\frac{x_ix_j x_i'x_j'}{2n}  - \frac{x_ix_js_is_j}{n} \Big).
\end{align}
Finally from \eqref{E0Es_2} and Cauchy-Schwarz, one obtains
\begin{align}
\big|\mathbb{E}[\langle q_{\bX,\bS}^{}\rangle_{k,t;\epsilon}] - \mathbb{E}[\langle q_{\bX,\bS}^{}\rangle_{k,0;\epsilon}]\big| = {\cal O}\Big(\frac{1}{K} \sqrt{\mathbb{E}[\langle q_{\bX,\bS}^2 \rangle_{k,s;\epsilon}]  \mathbb{E}[\langle g(\bX, \bX';\bS)^2 \rangle_{k,s;\epsilon}]}\Big) = {\cal O}\Big(\frac{n}{K}\Big).\label{E0Es_4}	
\end{align}
The last equality is true as long as the prior $P_0$ has bounded first four moments. We prove this claim now. Let us start by studying $\mathbb{E}[\langle q_{\bX,\bS}^2 \rangle_{k,s;\epsilon}]$. Using Cauchy-Schwarz for the inequality 
and \eqref{nishibasic} for the subsequent equality,
\begin{align}
\mathbb{E}[\langle q_{\bX,\bS}^2\rangle_{k,s;\epsilon}] &= \frac{1}{n^2}\sum_{i,j=1}^n\mathbb{E}[\langle X_iX_jS_iS_j \rangle_{k,s;\epsilon}] \nn
&\le \frac{1}{n^2}\sum_{i,j=1}^n \sqrt{\mathbb{E}[\langle X_i^2X_j^2 \rangle_{k,s;\epsilon}]\mathbb{E}[S_i^2S_j^2 ] } = \frac{1}{n^2}\sum_{i,j=1}^n \mathbb{E}[S_i^2S_j^2 ]={\cal O}(1), 
\label{qsmall}
\end{align}
where the last equality is valid for $P_0$ with bounded second and fourth moments. 
For $\mathbb{E}[\langle g(\bX, \bX';\bS)^2 \rangle_{k,s;\epsilon}]$ we proceed similarly by 
decoupling the expectations using Cauchy-Schwarz and then using \eqref{nishibasic} to make appear 
only terms depending on the signal $\bs$. One finds that under the same conditions on the 
moments of $P_0$ we have $\mathbb{E}[\langle g(\bX, \bX';\bS)^2 \rangle_{k,s;\epsilon}] = {\cal O}(n^2)$. 
Combined with \eqref{qsmall} leads to the last equality of \eqref{E0Es_4} and ends the proof.
\section{Alternative argument for the lower bound}\label{appendix-alternative-route}
We present an alternative useful, albeit not completely rigorous, argument to obtain the lower bound \eqref{llbb}. With enough work the argument can be made rigorous. 
Note that defining
\begin{align}
\widetilde f_{\rm RS}(\{m_k\}_{k=1}^K;\Delta) \defeq \frac{1}{4\Delta K}\sum_{k=1}^K m_{k}^2+f_{\rm den}\big(\Sigma(m_{\rm mf}^{(K)};\Delta)\big) =f_{\rm RS}(m_{\rm mf}^{(K)};\Delta) + \frac{V(\{m_k\})}{4\Delta} , 
\label{tildefrs}
\end{align}
the identity \eqref{fundam_afterConc_aftertinvar_2} is equivalent to
\begin{align}
\int_{a_n}^{b_n}d\epsilon\, f_{1,0;\epsilon} &=\int_{a_n}^{b_n}d\epsilon\,\biggl\{(f_{K_n,1;\epsilon} - f_{K_n,1;0}) + \widetilde f_{\rm RS}(\{m_k^{(n)}\}_{k=1}^{K_n};\Delta)\biggr\} + \mathcal{O}(a_n^{-2}n^{-\alpha}) \nonumber \\
&\ge \int_{a_n}^{b_n}d\epsilon\,(f_{K_n,1;\epsilon} - f_{K_n,1;0}) + \min_{\{m_k\ge0\}_{k=1}^{K_n}} \widetilde f_{\rm RS}(\{m_k\}_{k=1}^{K_n};\Delta) + \mathcal{O}(a_n^{-2}n^{-\alpha}) . \label{similar}
\end{align}
Setting $b_n=2a_n$, taking a sequence $a_n\to 0$ slowly enough as $n\to +\infty$, using Lemma~\ref{thermolimit} and \eqref{f10isf}, we obtain 
\begin{align}\label{otherbound}
 \liminf_{n\to +\infty}f_n \geq \min_{\{m_k\ge0\}_{k=1}^{K_n}}\widetilde f_{\rm RS}(\{m_k\}_{k=1}^{K_n};\Delta).
\end{align}
Simple algebra starting from $\partial_{m_k}\widetilde f_{\rm RS}(\{m_k\}_{k=1}^{K_n};\Delta) = 0$ implies, under the assumption that the extrema are attained at interior points of $\mathbb{R}^{K_n}_+$ (the point to work out to make the argument rigorous), that the 
minimizer of $\widetilde f_{\rm RS}(\{m_k\}_{k=1}^{K_n};\Delta)$ satisfies 
\begin{align}
 m_k = -2\,\partial_{\Sigma^{-2}} f_{\rm den}(\Sigma)|_{\Sigma(m_{\rm mf}^{(K_n)};\Delta)}\,, \qquad k=1, \ldots, K_n\,.
\end{align}
The right hand side is independent of $k$, thus the minimizer is $m_k=m_*$ for $k=1, \cdots, K_n$ where 
\begin{align}
 m_* = -2\, \partial_{\Sigma^{-2}} f_{\rm den}(\Sigma)|_{\Sigma = \sqrt{\frac{\Delta}{m_*}}}\,, \qquad k=1, \ldots, K_n\,.
\end{align}
Thus 
\begin{align}
\min_{\{m_k\ge0\}_{k=1}^{K_n}} \widetilde f_{\rm RS}(\{m_k\}_{k=1}^{K_n};\Delta) = f_{\rm RS}( m_*; \Delta) \geq \min_{m\geq 0} f_{\rm RS}(m;\Delta).
\end{align}
From \eqref{otherbound} we get 
\begin{align}
\liminf_{n\to +\infty}f_n &\ge \min_{m \geq 0 } f_{\rm RS}(m;\Delta)\label{131_}
\end{align}
which is the inequality \eqref{llbb}.

\section{A consequence of Bayes rule}\label{appendix-Nishimori-Bayes}
The purpose of this appendix is to prove the identity \eqref{nishibasic}. Recall that the Gibbs 
bracket $\langle - \rangle_{k,t;\epsilon}$ is the average with respect to the posterior $P_{k,t;\epsilon}(\bx\vert \boldsymbol{\theta})$ where
$\boldsymbol{\theta} \defeq \{\bs, \{\bz^{(k)}, \widetilde{\bz}^{(k)}\}_{k=1}^K, \widehat \bz\}$.
 Using Bayes law we have:
\begin{align}
\mathbb{E}_{\boldsymbol{\Theta}}[\langle g(\bX, \bS)\rangle_{k,t;\epsilon}] =  \mathbb{E}_{\bS} \mathbb{E}_{\boldsymbol{\Theta}\vert \bS} [\langle g(\bX, \bS)\rangle_{k,t;\epsilon}]
= 
\mathbb{E}_{\boldsymbol{\Theta}} \mathbb{E}_{\bS \vert \boldsymbol{\Theta}} [\langle g(\bX, \bS)\rangle_{k, t,\epsilon}].
\end{align}
It remains to notice that 
\begin{align}
\mathbb{E}_{\boldsymbol{\Theta}}\mathbb{E}_{\bS \vert \boldsymbol{\Theta}} [\langle g(\bX, \bS)\rangle_{k,t;\epsilon}] 
= \mathbb{E}_{\boldsymbol{\Theta}}[\langle g(\bX, \bX')\rangle_{k,t;\epsilon}]
\end{align}
where the Gibbs bracket on the right hand side is an average with respect to the product measure of two posteriors 
$P_{k,t;\epsilon}(\bx\vert \boldsymbol{\theta}) P_{k,t;\epsilon}(\bx'\vert \boldsymbol{\theta})$.
\section{A stochastic calculus interpretation}\label{interpretation}

We note that the proofs do not require any upper limit on $K$. This suggests that it is possible to formulate 
the adaptive interpolation method entirely in a continuum language. Here we informally show this for the simplest problem, namely symmetric rank-one matrix 
factorisation, and plan to come back to a rigorous formulation of the continuum formulation in future work.

It is helpful to first write down explicitly the $(k,t)$--interpolating Hamiltonian \eqref{intH} (leaving out the perturbation in \eqref{perturb} which is irrelevant for the argument here)
\begin{align}
\mathcal{H}_{k,t}(\bx;\boldsymbol{\theta})  = \frac{1}{K\Delta}&\sum_{k^\prime = k+1}^K \sum_{i\leq j=1}^n 
\Big( \frac{x_i^2 x_j^2}{2n} - \frac{x_ix_js_is_j}{n} - \sqrt{\frac{K\Delta}{n}} x_i x_j z_{ij}^{(k^\prime)}\Big)
\label{A} \\ &
+ \frac{1}{K\Delta}\sum_{k^\prime =1}^{k-1} m_{k^\prime} \sum_{i=1}^n \Big(  \frac{x_i^2}{2} 
- x_i s_i - \sqrt{\frac{K\Delta}{m_{k^\prime}}} x_i \widetilde z_i^{(k^\prime)}\Big)
\label{B} \\ &
+ \frac{1-t}{K\Delta} \sum_{i\leq j=1}^n \Big( \frac{x_i^2 x_j^2}{2n} - \frac{x_ix_js_is_j}{n} - \sqrt{\frac{K\Delta}{(1-t)n}} x_i x_j z_{ij}^{(k)} \Big)
\label{C} \\ &
+ \frac{t\,m_k}{K\Delta}\sum_{i=1}^n \Big(  \frac{x_i^2}{2} 
- x_i s_i - \sqrt{\frac{K\Delta}{t \,m_{k}}} x_i \widetilde z_i^{(k)}\Big),
\label{D}
\end{align}
and to define the step-wise function 
$m(u)  =  m_{k^\prime}$ for $k'/K \leq u< (k^\prime+1)/K$, $k^\prime  =  1, \dots, K$.

Let us first look at the terms that do {\it not} involve Gaussian noise and become simple Riemann integrals. We have
for the contribution coming from  \eqref{A} and \eqref{C}, 
\begin{align}
& \frac{1}{\Delta}\sum_{i\leq j=1}^n\Big\{
\frac{1}{K}\sum_{k^\prime = k+1}^K  
\Big(\frac{x_i^2 x_j^2}{2n} - \frac{x_ix_js_is_j}{n}\Big) 
+ \frac{1-t}{K} \Big(\frac{x_i^2 x_j^2}{2n} - \frac{x_ix_js_is_j}{n}\Big) \Big\}
\nonumber \\ 
= 
~&\frac{1}{\Delta}\sum_{i\leq j=1}^n\Big\{\int_{\frac{k+1}{K}}^{\frac{K+1}{K}}du\, \Big(\frac{x_i^2 x_j^2}{2n} - \frac{x_ix_js_is_j}{n}\Big) 
+ \int_{\frac{k+t}{K}}^{\frac{k+1}{K}} du \Big(\frac{x_i^2 x_j^2}{2n} - \frac{x_ix_js_is_j}{n}\Big)
\Big\}
\nonumber \\
=
~&\frac{1}{\Delta}\sum_{i\leq j=1}^n \int_{\frac{k+t}{K}}^{\frac{K+1}{K}} du\,  \Big(\frac{x_i^2 x_j^2}{2n} - \frac{x_ix_js_is_j}{n}\Big) .
\label{AC1}
\end{align}
Similarly,  we have for the terms coming from \eqref{B} and \eqref{D},  
\begin{align}
&\frac{1}{\Delta}\sum_{i=1}^n\Big\{\frac{1}{K}\sum_{k^\prime =1}^{k-1} m_{k^\prime}\Big(\frac{x_i^2}{2} 
- x_i s_i\Big) 
+
\frac{t\,m_k}{K} \Big(\frac{x_i^2}{2} 
- x_i s_i \Big) \Big\}\nn
=\,
&\frac{1}{\Delta}\sum_{i=1}^n\Big\{\int_{\frac{1}{K}}^{\frac{k}{K}} du\, m(u) \Big(\frac{x_i^2}{2} 
- x_i s_i\Big) 
+
\int_{\frac{k}{K}}^{\frac{k+t}{K}} du\,m(u) \Big(\frac{x_i^2}{2} 
- x_i s_i \Big) \Big\}
\nonumber \\ 
=\,& 
\frac{1}{\Delta}\sum_{i=1}^n\Big\{\int_{\frac{1}{K}}^{\frac{k+t}{K}} du\, m(u) \Big(\frac{x_i^2}{2} 
- x_i s_i\Big) \Big\}.
\label{BD1}
\end{align}

Now we treat the more interesting contributions {\it involving} the Gaussian noise.
Let $B(u)$ be the Wiener process defined by $B(0)=0$, $\mathbb{E}[B(u)] =0$, 
$\mathbb{E}[B(u)B(v)] = \min(u,v)$ for $u, v \in \mathbb{R}_+$. We introduce independent copies $B_{ij}(u)$,
$i, j = 1,\dots, n$ and consider the sum of increments (also written as an Ito integral)
\begin{align}\label{sum-of-increments}
 \Big\{B_{ij}\Big(\frac{k+1}{K}\Big) - B_{ij}\Big(\frac{k+t}{K}\Big) \Big\} + \sum_{k^\prime=k+1}^K \Big\{ B_{ij}\Big(\frac{k^\prime+1}{K}\Big) 
 - B_{ij}\Big(\frac{k^\prime}{K}\Big) \Big\} 
 =
 \int_{\frac{k+t}{K}}^{\frac{K+1}{K}} dB_{ij}(u).
\end{align}
Since the increments are independent and  $\mathbb{E}[(B(u)  -  B(v))^2]  =  |u - v|$, this is a Gaussian random variable with zero mean and variance $(K + 1  - k  - t)/K$. 
It is therefore equal in distribution to 
\begin{align}
\frac{1}{\sqrt K}\sum_{k^\prime = k+1}^K Z_{ij}^{(k^\prime)} + \sqrt{\frac{1-t}{K}} Z_{ij}^{(k)},
\end{align} 
and the contribution of the (random) Gaussian noise in \eqref{A} and \eqref{C} becomes
\begin{align}\label{AC2}
\frac{1}{\sqrt{\Delta n}}\sum_{i\leq j=1}^n x_i x_j \Big\{\frac{1}{\sqrt K}\sum_{k^\prime = k+1}^K 
  Z_{ij}^{(k^\prime)} + 
 \sqrt{\frac{1-t}{K}} Z_{ij}^{(k)} \Big\}
=
\frac{1}{\sqrt {\Delta n}}\sum_{i\leq j=1}^n \int_{\frac{k+t}{K}}^{\frac{K+1}{K}} dB_{ij}(u) x_i x_j .
\end{align}
To represent the contributions of \eqref{B}, \eqref{D} we introduce independent copies of the Wiener process
$\widetilde{B}_i(u)$, $i=1, \dots, n$ and form the Ito integral
\begin{align}
\sum_{k^\prime=1}^{k-1} \sqrt{m_{k^\prime}} \Big\{\widetilde B_i\Big(\frac{k^\prime +1}{K}\Big) 
- \widetilde B_i\Big(\frac{k^\prime}{K}\Big) \Big\}  +  \sqrt{m_k} \Big\{\widetilde B_i\Big(\frac{k +t}{K}\Big) 
- \widetilde B_i\Big(\frac{k}{K}\Big) \Big\} = \int_{\frac{1}{K}}^{\frac{k+t}{K}}  \sqrt{m(u)} d\widetilde B_i(u)
\label{the-quantity}
\end{align}
which has the same variance than
\begin{align}
\frac{1}{\sqrt K} \sum_{k^\prime=1}^{k-1} \sqrt{m_{k^\prime}} \, \widetilde Z_i^{(k^\prime)} + \sqrt{\frac{t \,m_k}{K}} \,\widetilde Z_i
^{(k)}.
\end{align}
Indeed 
\begin{align}
\frac{1}{K}\sum_{k^\prime =1}^{k-1} m_{k^\prime} + \frac{t \,m_k}{K} = 
\sum_{k^\prime=1}^{k-1} m_{k^\prime} \Big(\frac{k^\prime+1}{K} - \frac{k^\prime}{K}\Big) + m_k \Big(\frac{k+t}{K} - \frac{k}{K}\Big) = \frac{1}{K}
\int_{\frac{1}{K}}^{\frac{k+t}{K}} du\, m(u).
\end{align}
Therefore the contribution of \eqref{B} and \eqref{D} can be represented as
\begin{align}\label{BD2}
\frac{1}{\sqrt\Delta} \sum_{i=1}^n x_i \int_{\frac{1}{K}}^{\frac{k+t}{K}}  \sqrt{m(u)} d\widetilde B_i(u).
\end{align}

Finally, collecting \eqref{AC1}, \eqref{BD1}, \eqref{AC2}, \eqref{BD2}, setting $\tau  \defeq  (t + k)/K$ and  $K \to  \infty$, we obtain a continuous form of the random $(k,t)$--interpolating Hamiltonian,
\begin{align}
\mathcal{H}_\tau(\bx; \bs, \mathbf{B}) = \, &\frac{1}{\Delta} \sum_{i\leq j =1}^n \int_\tau^1 \Big\{
\Big(\frac{x_i^2 x_j^2}{2n} - \frac{x_ix_js_is_j}{n}\Big) du - \sqrt{\frac{\Delta}{n}} x_i x_j dB_{ij}(u) \Big\}
\nonumber \\ 
&\qquad+ \frac{1}{\Delta} \sum_{i=1}^n \int_0^\tau \Big\{\Big(\frac{x_i^2}{2} 
- x_i s_i\Big) m(u) du  -  \sqrt{\Delta m(u)} x_i d\widetilde B_i(u)\Big\}
\label{continuous-interpolating-Ham}
\end{align}
where $m(u)$ is an arbitrary trial function and $\mathbf{B}$ denotes the collection of all Wiener processes. 
Note that $\int_\tau^1du \, B_{ij}(u) = B_{ij}(1) - B_{ij}(\tau)$ which is distributed as $\sqrt{1-\tau}Z_{ij}$ for $Z_{ij}\sim 
\mathcal{N}(0, 1)$, and $\int_0^\tau \sqrt{m(u)}d\widetilde B_i(u)$ is distributed as
$\sqrt{\int_0^\tau m(u)}\widetilde Z_i$ for $\widetilde Z_i\sim\mathcal{N}(0, 1)$. Therefore \eqref{continuous-interpolating-Ham} is equal 
in distribution to 
\begin{align}
&\frac{1}{\Delta}  \sum_{i\leq j =1}^n \Big\{
\Big(\frac{x_i^2 x_j^2}{2n} - \frac{x_ix_js_is_j}{n}\Big)(1-\tau) -  x_i x_j Z_{ij}\sqrt{\frac{\Delta(1-\tau)}{n}} \Big\}
\nonumber \\ 
+ ~  &\frac{1}{\Delta} \sum_{i=1}^n  \Big\{\Big(\frac{x_i^2}{2} 
- x_i s_i\Big) \int_0^\tau m(u) du  -  x_i \widetilde Z_i \sqrt{\Delta \int_0^\tau m(u)du} \Big\}.
\label{continuous-interpolating-Ham-II}
\end{align}
Clearly, the usual Guerra-Toninelli interpolation appears as a special case where one chooses a constant trial function $m(u)=m$ constant.
When we go from \eqref{continuous-interpolating-Ham} to \eqref{continuous-interpolating-Ham-II} we eliminate completely
the Wiener process, however we believe it is useful to keep in mind the point of view expressed by \eqref{continuous-interpolating-Ham} which may turn out to be important for more complicated problems.


Starting from \eqref{continuous-interpolating-Ham} or \eqref{continuous-interpolating-Ham-II} it is possible to evaluate the free energy change along the interpolation path. We define the free energy
\begin{align}
f(\tau) = - \frac{1}{n} \mathbb{E}_{\bS, \mathbf{B}}\big[\ln \mathbb{E}_\bX\big[e^{-\mathcal{H}_\tau(\bX; \bS, \mathbf{B})}\big]\big].
\end{align}
For $\tau = 0$ using we recover the original Hamiltonian $\mathcal{H}_{k=1, t=0}$ (see \eqref{initialmodel}) 
and $f(0)  =  f$ given in \eqref{foriginal}. For $\tau  = 1$ setting $\int_0^1 du \,m(u)  =  m_{\rm mf}$
 we recover the mean-field Hamiltonian 
$\mathcal{H}_{k=K, t=1}$ (see \eqref{HK1_after}) and 
$f(1)  =  f_{\rm den}(\Sigma(\int_0^1 du\, m(u)); \Delta)$. Then proceeding similarly to sec. \ref{proofdH} one finds
the identity
\begin{align}
f &= f_{\rm RS}\Big(\int_0^1d\tau\, m(\tau); \Delta\Big) + \Big\{\int_0^1d\tau\, m(\tau)^2 - \Big(\int_0^1 d\tau\, m(\tau)\Big)^2\Big\}\nn
&\qquad- \frac{1}{4\Delta}\int_0^1 d\tau\, \mathbb{E}_{\bS, \mathbf{B}}\big[\big\langle (q_{\bX, \bS}^{} - m(\tau))^2\big\rangle_\tau\big] + {\cal O}(n^{-1})
\end{align}
where $\langle -\rangle_\tau$ is the Gibbs average w.r.t \eqref{continuous-interpolating-Ham}. 

Of course this immediately gives the upper bound in Proposition \ref{UpperBound}. The matching lower bound is obtained by the same ideas used in the discrete version. 
We briefly review them informally in the continuous language.
One first introduces the $\epsilon$-perturbation term \eqref{perturb} and proves a concentration property for the overlap analogous to Lemma \ref{concentration}. Starting with 
the continuous version of the interpolating Hamiltonian the proof of the free energy concentration is essentially identical (even simpler) than in sec. \ref{concentration-free-energy}, which implies the overlap concentration through sec. \ref{proofConc} that is unchanged. Then, the 
square in the remainder term is approximately equal to $(\mathbb{E}_{\bS, \mathbf{B}}[\langle q_{\bX, \bS}^{}\rangle_{\tau, \epsilon}] - m(\tau))^2$ and we make it vanish by choosing
\begin{align}\label{diff}
m(\tau) = \mathbb{E}_{\bS, \mathbf{B}}[\langle q_{\bX, \bS}^{}\rangle_{\tau, \epsilon}].
\end{align}
This continuous setting thus allows to avoid proving Lemma \ref{tinvar}. This then easily yields the lower bound in Proposition \ref{UpperBound}.
One must still check that \eqref{diff} has a solution. The right hand side is a function $G_{n, \epsilon}(\tau; \int_0^\tau du\, m(u))$ so 
setting $x(\tau) = \int_0^\tau du\, m(u)$, $dx/d\tau = m(\tau)$,  we recognize that \eqref{diff} is a first order differential equation with initial condition $x(0)=0$. The existence of a unique global solution on $\tau\in [0,1]$ is then proved using the Cauchy-Lipschitz theorem. Moreover this solution is differentiable and monotone increasing with respect to $\epsilon$. This last step of the analysis replaces Lemma \ref{freedom}.

\section*{Acknowledgments}
Jean Barbier acknowledges funding by the Swiss National Science Foundation grant no. 200021-156672. 
We thank Thibault Lesieur for providing us the expression of the RS potential for tensor estimation. 
We also acknowledge helpful discussions with Olivier L\'ev\^eque and L\'eo Miolane on the stochastic calculus interpretation and continuous version of Appendix \ref{interpretation}.

%
{
	\singlespacing
	\bibliographystyle{unsrt_abbvr}
	\bibliography{refs}
}
\end{document}